\setlist[description]{%
  font={\rmfamily\mdseries \dashuline}, 
}
\newcommand{\OMIT}[1]{}
\newcommand{\CC}{\mathbb{C}}
\newcommand{\ZZ}{\mathbb{Z}}
\newcommand{\lto}{\longrightarrow}
\newcommand{\lmapsto}{\longmapsto}
\newcommand{\eps}{\varepsilon}
\newcommand{\R}{\mathcal{R}}
\newcommand{\linhull}{\operatorname{span}}
\newcommand{\id}[1]{\operatorname{id}_{#1}}
\newcommand{\im}{\operatorname{im}}
\newcommand{\ot}{\otimes}
\newcommand{\lact}{\mathbin{\triangleright}}
\newcommand{\lrangle}[1]{{\langle #1 \rangle}}
\newcommand{\hol}{\operatorname{hol}} 
\newcommand{\cat}[1]{\mathcal{#1}}
\newcommand{\B}{\cat{B}}
\newcommand{\F}{\cat{F}}
\newcommand{\lmod}[1]{{#1}\!\operatorname{--mod}}
\newtheoremstyle{indented}{3pt}{3pt}{\addtolength{\leftskip}{5.5em}}{}{\bfseries}{.}{.5em}{}
\theoremstyle{plain}
\newcounter{dummy} 
\newtheorem{theorem}[dummy]{Theorem}
\newtheorem{proposition}[dummy]{Proposition}
\newtheorem{lemma}[dummy]{Lemma}
\theoremstyle{definition}
\newtheorem{definition}[dummy]{Definition}
\theoremstyle{remark}
\newtheorem{remark}[dummy]{Remark}
\newtheorem{example}[dummy]{Example}
\theoremstyle{indented}
\newcommand{\beq}{\begin{equation}}
\newcommand{\eqq}{\end{equation}}
\newcommand{\Fun}[1]{\mathscr{F}_{\CC}(#1)}
\newcommand{\Hil}{\mathcal{H}}  
\newcommand{\Kit}{\mathrm{Kit}}
\newcommand{\hKit}{\mathrm{HKit}} 
\newcommand{\fk}{\mathrm{ff}}
\newcommand{\full}{\mathrm{full}}
\newcommand{\KM}{Kitaev model} 
\newcommand{\KQDM}{Kitaev quantum double model}
\newcommand{\HKM}{higher Kitaev model}
\newcommand{\sHKM}{full higher Kitaev model}
\newcommand{\HHK}{Hopf-algebraic higher Kitaev} 
\newcommand{\HAKM}{Hopf-algebraic Kitaev model}
\newcommand{\GoXo}{(1,G,X,1)}   
\newcommand{\GEXo}{(E,G,X,1)}   
\newcommand{\GEXY}{(E,G,X,Y)}   
\newcommand{\ooXY}{(1,1,X,Y)}   
\newcommand{\GEoo}{(E,G,1,1)}   
\newcommand{\GooY}{(1,G,1,Y)}
\newcommand{\GX}{(G,X)}   
\newcommand{\XY}{(Y,X)}   
\newcommand{\red}[1]{\textcolor{red}{#1}}
\newcommand{\blue}[1]{\textcolor{blue}{#1}}
\title{
Exactly solvable 
models for 2+1D topological phases derived from crossed modules of semisimple 
Hopf algebras}
\author{{Vincent Koppen\footnote{vincent.koppen@posteo.de}}\hspace{0.6cm} 
{Jo\~ao Faria Martins\footnote{j.fariamartins@leeds.ac.uk}}\hspace{0.6cm}
{Paul Purdon Martin}\footnote{{p.p.martin@leeds.ac.uk}}\\
{\small School of Mathematics, University of Leeds, UK}
}
\begin{document}
\maketitle

\begin{abstract}
We define an exactly solvable model for 2+1D topological phases of matter on a triangulated surface derived from a crossed module of semisimple finite-dimensional Hopf algebras, the \emph{\HHK\ model}.
This model generalizes both the Kitaev quantum double model for a semisimple Hopf algebra and  the \sHKM\  derived from a 2-group, and can hence be interpreted as a Hopf-algebraic {discrete} higher gauge theory.

We construct a family of crossed modules of semisimple Hopf algebras, $(\Fun{X} \otimes \CC E \xrightarrow{\partial} \Fun{Y} \rtimes \CC G, \lact)$, that depends on four finite groups, $E,G,X$ and $Y$. We calculate the ground-state spaces of the resulting model on a triangulated surface when  $G=E=\{1\}$ and when $Y=\{1\}$,
prove that 
those ground-state spaces
are canonically independent of the triangulations, and so depend only on the underlying surface; 
and moreover we find a 2+1D TQFT whose state spaces on surfaces give the ground-state spaces. These TQFTs are particular cases of Quinn's finite total homotopy TQFT  and hence the  state spaces assigned to surfaces are free vector spaces on sets of homotopy classes of maps from a surface to homotopy finite spaces, in this case obtained as classifying spaces of finite groupoids and finite crossed modules of groupoids.

We leave it as an open problem whether the ground-state space of the \HHK\ model on a triangulated surface is independent of the triangulation for general crossed modules of semisimple Hopf algebras,  
whether a TQFT always exists whose state space on a surface gives the ground-state space of the model, {and whether the ground-state space of the model obtained from $E,G,X,Y$ can always be given a homotopical explanation.}
\end{abstract}

{\noindent \textbf{Keywords:} {Crossed modules; Hopf Algebras;  Topological Quantum Field Theories; Hamiltonian models for topological phases; Kitaev quantum double model.}

\medskip

{\noindent \textbf{Acknowledgements:} This paper was financed by the Leverhulme trust {research project} grant RPG-2018-029: 
\textit{``Emergent Physics From Lattice Models of Higher Gauge Theory''}. JFM would like to express his gratitude to Tim Porter for discussions on Quinn-like TQFTs for the case of classifying spaces of groupoids and crossed modules / complexes of groupoids.}
VK would like to thank Ehud Meir for countless invaluable explanations about Hopf algebras, Christoph Schweigert for introducing him to the Kitaev model, and Catherine Meusburger and Thomas Vo\ss\ for further helpful discussions.
The very final stages of the writing of this paper were financed by the EPSRC
Programme Grant No. EP/W007509/1: \textit{``Combinatorial
Representation Theory: Discovering the Interfaces of
Algebra with Geometry and Topology''.} We all thank the referee for  useful comments and suggestions.

\noindent \textbf{Authorship:} {The three authors have contributed in
equal measure to the conceptualisation and
contextualisation of the paper. The initial idea, and a great part of the calculations for the construction of the {\HHK\ model}, are due to VK. The relations between the model and Quinn's finite total homotopy TQFT were due to JFM.} 

\noindent \textbf{Data Access Statement:} No data was created while producing this publication.

\noindent \textbf{Open access statement:} {For the purpose of open access, the authors have applied a Creative Commons Attribution (CC BY) licence to any Author Accepted Manuscript version arising from this submission.}

 \setcounter{tocdepth}{2}
\tableofcontents

\section{Introduction}

In his seminal paper Kitaev \cite{Kitaev} defined a lattice model, henceforth called the \emph{\KQDM}, {or simply the \emph{\KM},}
for 
(2+1)-dimensional, (2+1D), 
topological phases of matter on an 
oriented surface $\Sigma$, with a triangulation, $L$.
{Initially proposed in the context of quantum computing, as a model for an error-correcting 
{quantum} 
code, the so-called \emph{toric code}, it allows for fault-tolerant quantum gates by braiding (non-abelian) anyons \cite{Kitaev,NonAbelianAnyons}.
Due to its relation 
{\cite{balsam-kirillov}}
to the {Turaev-Viro / Barrett-Westbury constructions} 
\cite{Turaev_Viro,Barrett_Westbury} {of quantum invariants of 3-manifolds from spherical fusion categories}, 
the \KM\  
{also} 
provides a link between low-dimensional topology, Hopf algebras and tensor categories.}
In this paper we extend {the} \KM\ to handle crossed modules of Hopf algebras {more generally}, {hence extending it to Hopf-algebraic higher gauge theory,  as we now elaborate.}

The {\KM} on $L$, {a triangulation of a surface $\Sigma$,} has as input a finite group $G$, {and can be seen as a lattice gauge theory model \cite{meusburger2016hopf}.} 
The total Hilbert space of the \KM\ 
on $L$ is 
$\Hil^\Kit_L := \CC G^{L^1}$, 
where $L^1$ is the set of edges of $L$. This vector space is the free vector space on the set of discretised $G$-connections over $(\Sigma,L)$, called in \cite[\S 2.1]{companion} \emph{gauge $G$-configurations} over $(\Sigma,L)$. One can then define \emph{vertex operators}, 
$V_v^g\colon {\Hil^\Kit_L \to \Hil^\Kit_L}$, and plaquette operators, 
$F_{P,w}^{\delta_a}\colon {\Hil^\Kit_L \to \Hil^\Kit_L}$, 
where $g,a \in G$, $v \in L^0$ is a vertex of $L$, and $P \in L^2$ is a plaquette of $L$, with a 
vertex 
$w \in \partial P$. The operator $V_v^g 
\colon {\Hil^\Kit_L \to \Hil^\Kit_L}$ 
performs 
a discrete gauge transformation supported on $v$, whereas the operator $F_{P,w}^{\delta_a}\colon
{\Hil^\Kit_L \to \Hil^\Kit_L}$ ‘chooses’  the discrete gauge configurations whose holonomy around the plaquette $P$, with initial point $w$, is $a \in G$.  These operators extend 
{respectively}
to representations of the Hopf {algebras} $\CC G$, the group algebra of $G$, and $\Fun{G}=\mathrm{span}_{\mathbb{C}}\{\delta_a \mid a \in G\}$, the algebra of functions on $G$, (and if $V_v^g$ and $F_{P,v}^{\delta_a}$, for the same $v$, are put together, of the quantum double $D(G) = \Fun{G} \rtimes \CC G$). 
The \emph{local operator algebra} of 
the \KQDM\  
is the algebra of operators ${\Hil^\Kit_L \to \Hil^\Kit_L}$ generated by all vertex and plaquette operators. 
The topological excitations of the \KM\  
are naturally those that are given by representations of the local operator algebra, as these are the excitations that cannot be destroyed by local operators \cite{Kitaev,Lan_Wen}.

The Hopf algebras $\CC G$ and $\Fun{G}$ are both semisimple (and so is the quantum double $D(G) = \Fun{G} \rtimes \CC G$). In particular \cite{larson_radford} they have unique Haar integrals, which take the form $\ell=\frac{1}{|G|}\sum_{g \in G} g \in \CC G$ and $\lambda=\delta_{1_G} \in \Fun{G}$. Define, given a vertex $v$, a projector $V_v:=V_v^\ell=\frac{1}{|G|} \sum_{g \in G} V_v^g$, called a \emph{vertex projector}, and given $P \in L^2$, the projector $F_P:=F_{P,w}^\lambda$, called a \emph{plaquette projector}. (These plaquette projectors turn out to be
independent of the vertex $w$ in the boundary of $P$.) 
The Hamiltonian of the \KM\   
is {$H_0\colon \Hil^\Kit_L \to \Hil^\Kit_L$} \cite[\S 5]{Kitaev}, where
\begin{equation}\label{eq:KM}
H_0=\sum_{v \in L^0}(1-V_v)+\sum_{P \in L^2}(1-F_P).
\end{equation}
Here $L^0$ and $L^2$ are the sets of vertices and plaquettes of $L$.
All projectors appearing in $H_0\colon {\Hil^\Kit_L \to \Hil^\Kit_L}$ are mutually commuting. It is for this reason that the 
\KM\
is called a{n exactly} solvable model {\cite{kitaev_exactly-solvable}}.
Given that the Hamiltonian is the sum of commuting projectors, it is diagonalisable, and it has a lowest-energy eigenspace, the \emph{ground-state space}.

Even though the \KM\  
defined on $(\Sigma,L)$ explicitly depends on the triangulation (or cell decomposition) $L$ of $\Sigma$, its ground-state space is naturally a topological invariant, canonically given by the free vector space on the set of homotopy classes of maps $\Sigma \to B_G$, where  $B_G$ is the classifying space of $G$. A 2+1D topological quantum field theory (TQFT) exists, sending a surface $\Sigma$ to the ground-state space of the 
\KQDM\  
on $\Sigma$ derived from $G$. Namely consider the Dijkgraaf-Witten TQFT {\cite{DW,Morton}} with group $G$ and trivial cocycle. The latter TQFT is a special case of  {Quinn's finite  total homotopy TQFT} $\mathscr{Q}_\mathcal{B}$ {\cite[Lecture 4]{Quinn}\cite[\S 4]{martins_porter21},} which depends on the choice of a homotopy finite space $\mathcal{B}$, and sends a surface $\Sigma$ to the free vector space on the set of homotopy classes of maps $\Sigma \to \mathcal{B}$. (Dijkgraaf-Witten TQFT with
group $G$ and 
trivial cocycle {$\omega \in  H^{3}(G,U(1))$} is just $\mathscr{Q}_{B_G}$.)

\medskip

The \KQDM\ has been extended in a number of ways. {In \cite{buerschaper-et-al}, see also \cite{balsam-kirillov},} a generalised 2+1D model, henceforth called the \emph{\HAKM} was defined. The latter takes as input, instead of a finite group  $G$, a finite-dimensional semisimple complex Hopf algebra $H$, with Haar integral $\ell \in H$, {and can be formulated in the context of Hopf algebra gauge theory on a lattice \cite{meusburger2016hopf}.  The \HAKM} is an exactly solvable model defined on the vector space {$\Hil_L^{\hKit}:=H^{\otimes L^1}$}. {This is a Hilbert space, and the vertex and plaquette projectors are Hermitian, if one takes as input a finite-dimensional Hopf $C^*$-algebra $H$ \cite{buerschaper-et-al}.}
The local operator algebra of the \HAKM\ likewise contains vertex operators $V_{v,P}^h\colon {{\Hil}_L^{\hKit} \to {\Hil}_L^{\hKit}}$, where $v \in L^0$, $h \in H$, and $P\in L^2$ is an adjacent plaquette to $v$, which is used {\cite[Definition 2.2]{balsam-kirillov}} to define a total order on the set of edges adjacent to $v$ (a cyclic order is given by the orientation, and $P$ is used to specify an initial edge adjacent to $v$). {This additional data to define vertex operators is essential when $H$ is non-cocommutative}. We also have plaquette operators $F_{P,w}^\psi\colon {{\Hil}_L^{\hKit} \to {\Hil}_L^{\hKit}}$, where $\psi \in H^*$, $P\in L^2$ and $w$ a vertex in the boundary of $P$.
Note that {the dual vector space} $H^*$ is also a finite-dimensional semisimple Hopf algebra \cite{larson_radford} and hence it has a unique Haar integral $\Lambda \in H^*$. As in the Kitaev model, we can define mutually commuting vertex projectors $V_v:=V_{v,P}^\ell\colon {\Hil}_L^{\hKit} \to {\Hil}_L^{\hKit}$ (this is independent of the adjacent plaquette {\cite[\S 2.4]{balsam-kirillov},} {because the Haar integral $\ell$ is cocommutative \cite{larson_radford} and hence any coproduct of it cyclically invariant}) and plaquette projectors $F_P:=F_{P,w}^\Lambda\colon {{\Hil}_L^{\hKit} \to {\Hil}_L^{\hKit}}$. (Likewise $F_{P,w}$ is independent of the vertex in the boundary of $P$.)
The {Hamiltonian of the} \HAKM\ is also given by \eqref{eq:KM}, and hence reduces to 
the \KQDM\ 
{if} $H=\CC G$.

The ground-state space of the \HAKM\ defined on a triangulated surface $(\Sigma,L)$ is, as for the \KM, canonically independent of the triangulation $L$ of $\Sigma$. This follows from \cite[Theorem 4.1]{balsam-kirillov} since the Turaev-Viro TQFT {for the spherical fusion category $\lmod{H}$ of modules of a semisimple Hopf algebra $H$} is such that the state space assigned to a surface $\Sigma$ is canonically isomorphic to the ground-state space of the \HAKM\ for the Hopf algebra $H$, on any triangulation of $\Sigma$, see also \cite{Kadaretal}.
{However, the  \HAKM\ does not describe the state spaces of every Turaev-Viro TQFT: only  those coming from spherical fusion categories that arise as representation categories of finite dimensional semisimple Hopf algebras.
More generally, for any spherical fusion category the Levin-Wen string-net construction \cite{Levin_Wen, kirillov} provides a commuting-projector Hamiltonian model whose ground-state spaces recover the state spaces assigned to surfaces by the corresponding Turaev-Viro TQFT. Those Hilbert spaces are smaller than the ones derived from the \HAKM\ in the cases when both models are defined.}

\medskip

{The model constructed in the present paper generalizes the Kitaev model in a different direction, which we describe in the following.}

A higher gauge theory `lifting' of Kitaev's quantum double model was constructed in \cite{companion,Higher_Kitaev,DG}. {This model, here called \emph{\HKM},} can be defined on manifolds of arbitrary dimension. In \textit{loc cit}, instead of generalising to Hopf algebras, the {\KM} was extended to 2-groups \cite{Baez_Lauda}, from the point of view of discrete higher gauge theory \cite{companion,Pfeiffer}. Let us give some details.
Higher gauge theory is a categorified version of gauge theory over a manifold $M$ where one has not only path-holonomy but also 2D holonomy along surfaces \cite{Baez_Huerta,Martins_Picken} any time a 2-connection is defined on a manifold $M$. In higher gauge theory, instead of having gauge groups, we have their categorified version usually called \emph{2-groups} \cite{Baez_Lauda}. A (in this paper always strict) gauge 2-group is faithfully represented by a crossed module $\mathcal{G}=(E \xrightarrow{\partial} G,\lact)$ of groups \cite{Baez_Lauda}\cite[\S 2.1 \& 2.7]{Brown_Higgins_Sivera}.  Here $G$ and $E$ are groups, the \emph{boundary map} $\partial\colon E \to G$ is a homomorphism and  $\lact$ is a left action of $G$ on $E$ by automorphisms, satisfying appropriate compatibility relations {(the Peiffer relations)}. Roughly speaking, path-holonomies take values in $G$ and surface holonomies in $E$, with a compatibility relation stating that the boundary of the  2D holonomy along a surface coincides with the path holonomy {around} its boundary. This compatibility relation was called \emph{fake-flatness} in \cite{companion}. (This term originates from the closely related notion of fake-curvature of a 2-connection \cite{Breen_Messing}, where its vanishing is essential for surface holonomy to be defined, and implies the above compatibility relation between path and surface holonomy \cite{Baez_Huerta,Martins_Picken}.) 

{As for discrete connections \cite{BaezSF}, discrete gauge 2-group 2-connections on a manifold $M$ can be discretised} given a triangulation, or 2-lattice decomposition, $L$ of $M$ \cite{companion,Pfeiffer}, by specifying their local holonomies along  edges (with a fixed orientation) and plaquettes $P$, provided with a choice of (fixed) vertex $v_P$ in the boundary of $P$,  its \emph{base-point}. The underlying set of 2-gauge configurations in $(M,L)$ is hence $G^{L^1} \times E^{L^2}$ \cite[\S 3.2]{companion}\cite{Higher_Kitaev}.
Inside the set of 2-gauge configurations there is a subset of fake-flat configurations which satisfy the fake-flatness relations for all plaquettes \cite{companion}. These are the configurations that have well-defined 2-dimensional holonomy operators. 

The \HKM\ defined in \cite{companion, Higher_Kitaev} is a model defined on the free vector space, $\mathcal{H}_L^{\fk}$, on {the set of} all fake-flat 2-gauge configurations, and features vertex operators $V_v^g$, where $g \in G$, $v \in L^0$: they implement gauge transformations supported on a vertex $v$; edge operators $E_t^e$, where $t \in L^1$ and $e \in E$, which implement gauge transformations supported on an edge $t$, and also blob operators $B_b^{\delta_a}$, where $a \in \ker(\partial)$ and $b \in L$, which choose those fake-flat configurations that have 2D holonomy equal to $a$ around a 3-cell (a blob) $b\in L^3$. In order for the latter blob operators to be well-defined (and commute with all edge operators) it is essential to restrict to fake-flat gauge configurations \cite{Higher_Kitaev}. That restriction is not necessary in the 2+1D case as there are no blob operators since a 2-lattice decomposition of a surface has no 3-cells. {However note that if fake-flatness is dropped, there are also issues with commutativity between edge operators, which require imposing additional conventions in the model, as elaborated below.}

{{Following on from the previous paragraph, the  starting point for the model constructed in this paper is a generalisation of the 2+1D version of the \HKM, whose construction was sketched in \cite[pages 7 and 8]{Higher_Kitaev},} here called the \emph{\sHKM}.}  This model and the 2+1D case of the \HKM\ have the same ground-state space, however the \sHKM\ has a larger Hilbert space $\mathcal{H}_L^{\full}:=\CC (G^{L^1}\times E^{L^2})$, the free vector space on the set of all 2-gauge configurations, and a larger repertoire of operators, also containing plaquette operators $F_P^{\delta_g}$, where $g \in G$. Those plaquette operators {choose the configurations whose fake curvature around the plaquette $P$, with initial point $v_P$, is $g$.} 
Passing to the Haar integrals, $\ell=\frac{1}{|G|}\sum_{g \in G} g\in \CC G$, $\Lambda=\frac{1}{|E|}\sum_{e \in E} e \in \CC E$, and $\lambda=\delta_{1_G}\in \CC G^*$, we can define mutually commuting vertex projectors $V_v:=V_{v}^\ell$, plaquette projectors $F_P:=F_{P}^\lambda$ and {(after fixing conventions)} edge projectors $E_t := E_t^\Lambda$, on $\mathcal{H}_L^{\full}$.
The \sHKM\ is given by the following Hamiltonian on $\mathcal{H}_L^{\full}$ (see \cite[Equation (35)]{Higher_Kitaev}):
\begin{equation}\label{eq:2KM}
\sum_{v \in L^0} (1-V_v) + \sum_{t \in L^1} (1-E_t) + \sum_{P \in L^2} (1-F_P).
\end{equation}
This model reduces to Kitaev's quantum double model when considering crossed modules of the form $(\{1\} \xrightarrow{\partial} G)$; see \cite{Higher_Kitaev}.
 {We note that the commutation relations between  edge projectors in the \sHKM\ strongly depend on the convention chosen for edge operators, as we will clarify in this paper, in the context of crossed modules of Hopf algebras. In particular in order that all edge projectors commute in \eqref{eq:2KM}, we will stay in what we call here \emph{adequate} lattice decompositions of surfaces, as defined in \S\ref{sec:conv-cell}. This restriction is mild, and satisfied for instance by triangulations with a total order on the set of vertices.}

\medskip

In this paper, we define a  Hopf-algebraic generalisation of the \sHKM\ in 2+1D, henceforth called the \emph{\HHK\ model}. It takes as input a crossed module of semisimple Hopf algebras $(A \xrightarrow{\partial} H, \lact)$, as defined in \cite{majid2012strict}, and also considered in \cite{FARIAMARTINS2016,fregier2011,emir}. Here $A$ and $H$ are Hopf algebras, $\lact$ is a left action of $H$ on $A$ making $A$ an $H$-module algebra and coalgebra, and $\partial \colon A \to H$ is a Hopf algebra map, which together with $\lact$ satisfies two compatibility conditions (the Peiffer relations, {for crossed modules of Hopf algebras}). {Crossed modules of Hopf algebras  were characterized as a class of quantum 2-groups in \cite{majid2012strict}.}
Given  an oriented  surface $\Sigma$, with a triangulation $L = (L^0, L^1, L^2)$, {with a total order on the set of vertices,} or a more generally {an adequate} cell decomposition (see Definition \ref{def:cell-decomposition}), where each plaquette $P \in L^2$ comes with a choice of base-point, the total  space of the \HHK\ model is:
\[\mathcal{H}_L:=H^{\otimes L^1}\otimes A^{\otimes L^2}.\]
{Combining the Hopf-algebraic setting in \cite{buerschaper-et-al, balsam-kirillov} with the 2-group  setting of \cite{Higher_Kitaev,companion},} in the model we can define:
\begin{itemize}\setlength\itemsep{0em}
    \item Vertex operators, $V_{v,P}^h\colon \mathcal{H}_L \to \mathcal{H}_L$, where $v\in L^0$, $P$ is an adjacent plaquette, and $h \in H$. For each pair $(v,P)$, these operators define a representation of $H$ on $\mathcal{H}_L$. A vertex projector is defined as $V_v:=V_{v,P}^\ell$, where $\ell$ is the Haar integral in $H$. This does not depend on the chosen adjacent plaquette $P$, for $\ell$ is cocommutative.
    \item Edge operators, $E_t^a\colon \mathcal{H}_L \to \mathcal{H}_L$, where $t \in L^1$ and $a \in A$. For each edge $t$, we have a representation of $A$. The Haar integral  $\Lambda\in A$ gives the vertex projector $E_t:=E_t^\Lambda$.
    \item Plaquette operators, $F_P^\psi \colon \mathcal{H}_L \to \mathcal{H}_L$, where $\psi \in H^*$. Again this defines for each pla\-quette $P$, which recall comes equipped with a base-point $v_P$, a representation of $H^*$. If $\lambda$ is the Haar integral of $H^*$, the plaquette projector $F_P$ is defined as $F_P^\lambda$.
     \end{itemize}

We explicitly compute the commutation relations between these operators. It follows from the commutation relations, {and the properties of Haar integrals for crossed modules of semisimple Hopf algebras,} that the vertex, edge and plaquette projectors are mutually commuting, {as long as we use adequate cell decompositions.} 
Therefore the Hamiltonian defined as in \eqref{eq:2KM} defines an exactly solvable model. This is our proposal for the \HHK\ model.

The \HHK\ model reduces to the \HAKM\ when $A=\CC$ in  $(A \xrightarrow{\partial} H, \lact)$. (Here $\partial \colon \CC \to H$ is $z \in \CC \mapsto z\,1_H$, the unique Hopf algebra map. The action of $H$ is given by the counit map $\varepsilon\colon H \to \CC$.) More precisely, the {total spaces $\Hil_L$ and $\Hil_L^{\hKit}$ of both models coincide,} and so do the Hamiltonians and vertex and plaquette operators. The additional edge operators $E^z_e$, $e \in L^1$, $z \in \CC$, that were not in the \HAKM, amount to multiplication by $z$.

If $(E\xrightarrow{\partial} G,\lact)$ is a crossed module of groups, then, passing to the group algebras, we have a crossed module of {semisimple} Hopf algebras $(\CC E\xrightarrow{\partial} \CC G,\lact)$. The resulting \HHK\ model coincides with the  \sHKM\ in \cite{Higher_Kitaev} for $(E \xrightarrow{\partial} G, \lact)$. {Given that the ground-state space of the \sHKM\ coincides with that of the \HKM\ on the same triangulated surface $(\Sigma,L)$, it hence follows from \cite[\S 5.2]{companion}, and also \cite{Higher_Kitaev,DG}, that the ground-state space of the \sHKM\ for  $(E\xrightarrow{\partial} G,\lact)$ on $(\Sigma,L)$ does not depend on the triangulation of $\Sigma$ and is canonically isomorphic to the free vector space on the set of homotopy classes of maps from $\Sigma$ to $\mathcal{B}$, the classifying space of the crossed module  $(E\xrightarrow{\partial} G,\lact)$; see \cite{Brown_Higgins,brown_hha,martins_porter07}. (This was interpreted in terms of the Yetter homotopy 2-type TQFT in \cite{Higher_Kitaev}). Therefore Quinn's finite total homotopy TQFT, $\mathscr{Q}_{\mathcal{B}}$, \cite[Lecture 4]{Quinn}\cite[\S 4]{martins_porter21} again gives a TQFT whose state spaces give the ground-state spaces of the \sHKM.}

The crossed module of Hopf algebras $(\CC E\xrightarrow{\partial} \CC G,\lact)$ derived from a crossed module of groups can be generalised. Indeed if $f\colon Y \to X$ is a homomorphism of groups, and we have a left action of $G$ on $X$ and $Y$ by automorphisms, such that $f$ preserves the actions, and such that $\partial(E)\subseteq G$ acts trivially on $X$ and $Y$, we can define a crossed module of semisimple Hopf algebras, $(\Fun{X} \ot \CC E \xrightarrow{\partial} \Fun{Y} \rtimes \CC G, \lact)$. Here $\partial$ means $f^*\otimes \partial$, and the action is the product of the obvious action of $\CC G$ on $\Fun{X} \ot \CC E$ and the trivial action of $\Fun{Y}$.

 We unpack the \HAKM\ derived from $\GEXY$, and the remaining data, in some particular cases and compute its ground-state space, proving that in these particular cases the ground-state space is canonically independent of the triangulation $L$ of the surface. In all of these cases, the ground-state spaces can be derived from {Quinn's  finite total homotopy TQFT} 
 $\mathscr{Q}_{\mathcal{B}}$, for some homotopy finite space $\mathcal{B}$. At this point we will not be self-contained, and use a deep result of Brown--Higgins \cite[Theorem A]{Brown_Higgins}  (see also \cite{brown_hha} and the recent monograph by Brown--Higgins--Sivera \cite[\S 11.4.iii]{Brown_Higgins_Sivera}), describing the set of homotopy classes of maps $M \to \B_\mathcal{A}$, where $\mathcal{A}$ is a crossed complex and $M$ is CW-complex, in terms of homotopy classes of crossed complex maps $\Pi(M) \to \mathcal{A}$, where $\Pi(M)$ denotes the fundamental crossed complex of a CW-complex $M$ \cite{brown_hha}. {(Below note that a crossed complex homotopy between groupoid functors \cite{brown_icen} boils down to a natural transformation.)}
 These techniques to prove triangulation independence 
 {have}
 already been applied in \cite{martins_porter07,companion}, and are further developed in \cite[\S 8.2]{martins_porter21}.

The cases of the $\GEXY$-model that we will consider, on triangulated surfaces, $(\Sigma,L)$, are:
\begin{itemize}  \setlength\itemsep{0em}
    \item The $\GoXo$-case, where $E=Y=\{1\}$. {(Our data reduces to a group $G$ acting on another group $X$ by automorphisms.)} In this case we obtain a coupling between the Kitaev model for $G$ and the $|X|$-state Potts model \cite{Martin_Potts}{\cite[\S 1.1.]{Fatimah}.}
    The ground-state space for $(\Sigma,L)$ is given by the free vector space on the set of equivalence classes of functors  $\pi_1(\Sigma,\Sigma_L^0) \to X // G$, considered up to natural transformations, where $X // G$ is the action groupoid of the action of $G$ on $X$. Here $\pi_1(\Sigma,\Sigma_L^0)$ denotes the fundamental groupoid of $\Sigma$, with set of base-points being the set of vertices $v \in L^0$ in $\Sigma$. 
    
    By applying Brown--Higgins theorem, \cite[Theorem A]{Brown_Higgins}, the latter space is canonically isomorphic to the free vector space on the set of homotopy classes of maps  $\Sigma \to B_{X//G}$. Here $B_{X//G}$ is the classifying space  of the action groupoid $X // G$. Hence the ground-state space does not depend on $L$. Quinn's {finite total homotopy} TQFT, $\mathscr{Q}_{B_{X//G}}$, \cite[Lecture 4]{Quinn}\cite[\S 4]{martins_porter21}, gives a TQFT whose state space on a surface $\Sigma$ coincides with the ground-state space on $(\Sigma,L)$, for any triangulation of $L$ of $\Sigma$. 
    
    \item More generally, the $\GEXo$-case, where $Y=\{1\}$. This includes the \sHKM\ as a special case (when $X=\{1\}$). In this case, the ground-state space for a triangulated surface $(\Sigma,L)$ is canonically given by the free vector space on the set of homotopy classes of crossed module maps $\Pi_2(\Sigma,\Sigma_L^1,\Sigma_L^0) \to (X// E \to X//G,\lact)$, where $\Pi_2(\Sigma,\Sigma_L^1,\Sigma_L^0)$ is the fundamental crossed module of  $(\Sigma,\Sigma_L^1,\Sigma_L^0)$, the underlying CW-complex the triangulation $L$, with its skeletal filtration; see \cite[\S 3.3]{companion}.  Also $(X// E \to X// G,\lact) $  is a   crossed module of groupoids, 
    where $X//E$ and $X//G$ are 
    action groupoids. (Here $E$ acts 
    on $X$ trivially, and the action of the groupoid $X//G$ on the groupoid $E//G$ is derived from the action of $G$ on $E$). 
    
    By applying Brown--Higgins theorem, the ground-state space is similarly seen to be independent of the triangulation $L$ of $\Sigma$, and given by the free vector space on the set of homotopy classes of maps $\Sigma \to \mathcal{B}$, where $\mathcal{B}$ is the classifying space of the crossed module ${(X// E \to X// G,\lact)}$. Hence, Quinn's {finite total homotopy} TQFT, $\mathscr{Q}_{\mathcal{B}}$, gives a TQFT whose state space coincides with the ground-state space of the underling \HHK\ model for each surface $\Sigma$.
    
    \item The $\ooXY$-case, when $E$ and $G$ are both the trivial group. In this case the ground-state space on $(\Sigma,L)$ is canonically triangulation independent, and given by $\mathscr{Q}_{B_{X//Y}}(\Sigma)$. Here $B_{X//Y}$ is the classifying space of the action groupoid of the action $\lact$ of $Y$ on $X$, where $y\lact x=f(y)x$. In order to prove that the ground-state space is indeed canonically isomorphic, regardless of the chosen triangulation $L$ of $\Sigma$, to the free vector space on the set of homotopy classes of maps $\Sigma \to B_{X//Y}$, we use a slightly different trick as in the two previous cases. Namely, we consider the dual cell decomposition $(\Sigma,L^*)$ to $(\Sigma,L)$, and identify the ground-state space with the free vector space on the set of equivalence classes of groupoid functors $\pi_1(\Sigma,\Sigma_{L^*}^0) \to X//Y$, considered up to natural transformations. (Note that we now have a base-point of $\pi_1(\Sigma,\Sigma_{L^*}^0)$ for each plaquette of $L$.)
    
    {Cf.\ \cite[\S 2.3]{balsam-kirillov},} indeed, the resulting $\ooXY$-model has a particularly simple expression in the dual cell decomposition ${(\Sigma,L^*)}$. Moreover the model reduces to the Kitaev model based on $Y$ on  ${(\Sigma,L^*)}$  when $X=\{1\}$ {(which essentially is \cite[Lemma~2.6]{balsam-kirillov})}, to the $|X|$-state Potts model if $Y=\{1\}$, and to a groupoid version of Kitaev model (with groupoid $X//Y$), also featuring edge operators, in the general case.
    {The $\ooXY$-model on the dual cell decomposition is also closely related to the construction in \cite{deresende2019quantum}.}
    
    \item Finally, we have the $\GooY$-case when $E$ and $X$ are the trivial group. 
    This is not a {proper} crossed \HHK\ model, contrary to the previous three cases, in that the crossed module of Hopf algebras is   $(\CC \xrightarrow{\partial} \Fun{Y} \rtimes \CC G, \lact)$. The $\GooY$-model is hence a special case of the \HAKM. In particular, the ground-state space {is known to be} canonically triangulation independent \cite{balsam-kirillov}.
    
    It is an open problem whether the Turaev-Viro TQFT derived from the spherical fusion category $\lmod{(\Fun{Y} \rtimes \CC G)}$, whose state spaces give the ground-state spaces of the $(1,G,1,Y)$-case, is also a particular case of {Quinn's finite total homotopy TQFT} and, in particular, if it has a homotopical explanation.
\end{itemize}

We finish this introduction by presenting the following set of open problems {which arise from the present paper}:
\begin{enumerate}  \setlength\itemsep{0em}
    \item Are the ground-state  spaces of the full $\GEXY$-model triangulation independent and is there a TQFT giving these ground-state spaces, which is a particular case of {Quinn's finite total homotopy TQFT}? 
    \item {The following are likely the most important open problems resulting from our construction.} Considering general crossed modules of semisimple Hopf algebras: 
    \begin{itemize}\setlength\itemsep{0em}
        \item Is the ground-state space of the \HHK\ model in $(\Sigma,L)$ always canonically invariant of the triangulation $L$ of $\Sigma$?
        \item Is there an underpinning TQFT whose state spaces give the ground-state spaces of the \HHK\ 
        model?
        In particular, what are the quantum gates that could be realized by the underlying mapping class group actions?
    \end{itemize}
    \item Majid defined in \cite{majid2012strict} a  generalisation of crossed modules of Hopf algebras that he called \emph{“braided crossed modules of Hopf algebras”}  $(B \xrightarrow{\partial} H,\lact)$. Here $B$ is a Hopf algebra in $\mathcal{Z}(\lmod{H})$, the braided category of Yetter-Drinfeld modules over $H$. It is an open problem whether our construction of the \HHK\ model generalises to this setting.
    
    \item {What local excitations does the \HHK\ model admit? Previously \cite{koppen}, defects of co-dimensions $2$ and $1$ (i.e.\ point-like and string-like excitations) in the Kitaev model have been studied in the general Hopf-algebraic setting.
    It is an interesting open problem to extend this to the \HHK\ model constructed in the present paper.}
    
    \item {Finally, is there a 3+1D version of the \HHK\ similar to the original 2-group \HKM\  \cite{Higher_Kitaev,companion,DG}? This extension would seem quite tricky when the Hopf algebra $H$ in $(A\xrightarrow{\partial} H,\lact)$ is not cocommutative. This is because there is not a natural way to define vertex operators, since there is no given cyclic order on the edges incident to a vertex.} A model however exists for 
 $(\CC E\xrightarrow{\partial} \CC G,\lact)$ where $(E\xrightarrow{\partial} G,\lact)$ is a crossed module of groups (as per the construction in \cite{Higher_Kitaev,companion,DG}).
 
\end{enumerate}

\noindent\textbf{Some broader motivation.}  
A primary question for the field-theoretic approach to physical modelling of gauge phenomena is that of the relationship between the representation of space(-time),
and the symmetry structure underlying the gauge fields. 
This may be cast as the choices of two categories (space and gauge, informally put)
--- yielding the category of functors between them (see e.g. 
\cite[\S2]{Pfeiffer}, \cite{Baez_Huerta,companion})
as (a basis for) the space of states.
Such a categorical formulation can somewhat obscure the direct tie to the physics
being modelled, but facilitates the development of candidate generalisations of
the `classical' suite of models.
For example in \cite{Pfeiffer,companion} one passes from 
the functor category of gauge configurations
 to a lift to higher gauge configurations --
 nominally 2-functors from a 2-lattice 2-category to a 2-group.
In this setting space(-time) is modelled by a generalisation of a CW-complex
-- depending on the requirements of the Hamiltonian,
and the gauge group becomes a kind of higher group
(or indeed lower group!
{in the sense that in Potts models it can really be just a set, since the interaction is a delta-function, furthermore requiring only graph data for the lattice}).
In parallel to this approach it is natural to ask what {\em other} 
generalisations of group characterisations of symmetry can be supported
in principle (the \$64 question is What is demanded by physical observation?,
but this begs also the question of where and how to observe -- see later,
and cf. \cite{Levin_Wen}).
And what impact on the formalisation of space--time this might have.

Hopf algebras, and in particular higher versions such as Hopf crossed modules \cite{majid2012strict,FARIAMARTINS2016}, offer a possible line of generalisation from gauge groups in this context because Hopf algebras (miraculously) bridge between the combinatorial and geometric worlds,
at least in 2D \cite{manin2018quantum}. However while the linear structure on the resultant Hilbert space
in the group cases is essentially passive, it plays a crucial role in the Hopf case
(confer e.g.\ \cite{Bullock,meusburger2016hopf}). Here we take a first step to probe this challenge in a higher setting -- a higher theory, formally, but staying on surfaces.
The version of this analysis relevant for higher Kitaev models (and hence relatively 
straightforward use of Whitehead's underlying free fundamental crossed module
technology) is discussed for example in 
\cite{companion}.

\section{Review of semisimple Hopf algebras and Hopf crossed modules}

Algebras here will by default be unital, associative and finite-dimensional, and  over some field, $\kappa$, which we will later on take to be $\CC$.
 The proofs of some crucial results on semisimple Hopf algebras and their Haar integrals can be found for example in \cite{larson_radford,larson_sweedler} and  \cite{RadfordBook}. These results had already been applied in \cite{balsam-kirillov,meusburger2016hopf,buerschaper-et-al} to construct Hopf algebraic versions of the \KQDM.
 
Crossed modules of Hopf algebras, {which are discussed later in this section, were defined in \cite{majid2012strict}, and also discussed in \cite{FARIAMARTINS2016,fregier2011,emir}.}

\subsection{Hopf algebras}
A Hopf algebra \cite{Majid_book} $H$ is a unital associative algebra, $(H,\mu,\nu)$, where the unit map $\nu: \kappa \to H$ sends $t$ to $t 1_H$, together with a compatible structure of a co-unital co-associative co-algebra, given by a unital algebra map $\Delta : H \to H \ot H$, and a co-unit $\eps : H \to \CC$.
We denote its antipode as usual by $S : H \to H$, or $S_H : H \to H$ if we want to emphasize the Hopf algebra it belongs to.
A Hopf algebra is thus a 6-tuple 
$(H,\mu,\nu,\Delta,\eps,S)$. We often write the multiplication $\mu$ as $\mu(x,y)=xy$, {and will usually write the 6-tuple simply as $H$.} But we may also write $H$ to denote the underlying vector space; or only the algebra structure. Such variations will be clear from context.

We will make extensive use of the Sweedler notation for co-multiplication:
\[ h_{(1)} \ot h_{(2)} := \Delta(h), \quad \textrm{ where } h \in H , \]
which is in general a sum of pure tensors, even though we omit the summation symbol
and summation variable.
Due to co-associativity, any $n$-fold composition of the co-multiplication tensored with suitable identity maps has the same result, and one may therefore write:
\[ h_{(1)} \ot h_{(2)} \ot\cdots\ot h_{(n)} := \Delta^{(n-1)}(h) 
= \Delta \ot \id{H}^{\ot (n-2)}(\cdots
\Delta\ot\id{H}  
(\Delta(h))), \quad \textrm{where } h \in H . \]

A fact we will frequently use when we construct our model is the following. 
\begin{lemma}\label{lem:coproduct-cocommut-el}
Any multiple coproduct of a cocommutative element $h \in H$, i.e.\ one that satisfies $h_{(1)} \ot h_{(2)} = h_{(2)} \ot h_{(1)}$, is cyclically invariant, i.e.\
\begin{equation} \label{eq:cocommutative-elements-cyclically-invariant}
h_{(1)} \ot h_{(2)} \ot\cdots\ot h_{(n)} = h_{(2)} \ot h_{(3)} \ot\cdots\ot h_{(n-1)} \ot h_{(1)} = \ldots = h_{(n)} \ot h_{(1)} \ot h_{(2)} \cdots\ot h_{(n-1)}.  
\end{equation}
\end{lemma}

\subsection{Left and right integrals for Hopf algebras, and semisimplicity}\label{Sec:semi}
Nothing here is new. We follow chapter 10 of \cite{RadfordBook} very closely.} {We work over a general field $\kappa$.}
\begin{definition}
(Cf. \cite[Definition 10.1.1]{RadfordBook}). Let $H$ be a Hopf algebra.
A \emph{left-integral}, respectively right integral,  for $H$ is an element, $\Lambda_l \in H$, respectively $\Lambda_r\in H$, such that, for all $a \in H$, we have:
\[ a \Lambda_l=\eps(a) \Lambda_l, \textrm{  respectively } \Lambda_r a=\eps(a) \Lambda_r.\]  
An \emph{integral}, $\Lambda$, for $H$, is an element, $\Lambda\in H$, that is at the same time a left and a right-integral.
\end{definition}

{Non-zero  (left or right) integrals can only exist if $H$ is  finite-dimensional; see \cite[Proposition 10.2.1]{RadfordBook}. On the other hand, if $H$ is finite-dimensional,  then both (left and right) ideals, of left-integrals, and of right-integrals, are one-dimensional (see e.g. \cite[Theorem 10.2.2, (a)]{RadfordBook}). Hence, non-zero left and right-integrals always exist in finite dimensional Hopf algebras. In the latter case, $S$ is bijective, by Larson-Sweedler Theorem \cite[Proposition 2]{larson_sweedler}, thus $S$ gives a bijection between the ideals of left and right integrals.}

\begin{definition}(Cf. \cite[Definitions 10.2.3 and 10.2.5]{RadfordBook}.) Let $(H,\mu,\nu,\Delta,\eps,S)$ be  finite-dimensional Hopf algebra. We say that $H$ is  \emph{semisimple} if its underlying unital algebra, $(H,\mu,\nu)$, is semisimple. We say that $H$ is \emph{unimodular} if the  ideals of left-integrals and of right-integrals coincide. 
\end{definition}
The following result is key for the construction in this paper, as was in \cite{balsam-kirillov}. 
For a proof see  \cite[Theorem 10.3.2 and Corollary 10.3.3]{RadfordBook}.
\begin{theorem}\label{th:semisimple-char0}
Let $H$ be a finite-dimensional Hopf  algebra over a field $\kappa$.  If $H$ is semisimple, then $H$ is unimodular.~ 
Furthermore, the following conditions are equivalent:
\begin{enumerate}  \setlength\itemsep{0em}
    \item  $H$ is semisimple;
    \item for some left integral, equivalently for all non-zero left integrals,  $\Lambda_l$,   we have $\eps(\Lambda_l)\neq 0$.
\end{enumerate}
\end{theorem}
On the assumption that $H$ is finite dimensional and semisimple, hence unimodular, it then follows that the ideal of integrals is one dimensional, and furthermore invariant under $S$.

\begin{definition}[Haar integral] \label{def:haar-integral}{Let $H$ be a finite-dimensional semisimple Hopf algebra, over a field $\kappa$. To the  unique integral $\ell \in H$, satisfying $\eps(\ell)=1$ we call \emph{the Haar integral on $H$.}}
\end{definition}
\noindent We can also define the  Haar integral for $H$ as  the unique non-zero idempotent $\ell \in H$ satisfying \[x \ell = \eps(x) \ell = \ell x, \textrm{ for all } x \in H.\]
{It follows from our discussion that, if $\Lambda$ is the Haar integral of $H$, semisimple, then $S(\Lambda)=\Lambda$.}

\subsection{Semisimple Hopf algebras in characteristic zero}
When the field $\kappa$ has characteristic zero, stronger results concerning semisimplicity for Hopf algebras can be obtained. We follow \cite[Chapter 16]{RadfordBook}. Our discussion 
closely parallels  that of \cite[Section~1]{balsam-kirillov} and \cite[Appendix A1]{meusburger2016hopf}, which similarly then go on to define Hopf gauge theory models on a lattice.
\begin{theorem}[{Adapted from \cite[Theorem 16.1.2]{RadfordBook}}] \label{prop:haar-integral0} 
Let $H$ be a finite-dimensional Hopf algebra over a field $\kappa$ of  characteristic zero. Then the following conditions are equivalent:
\begin{multicols}{2}
\begin{enumerate}  \setlength\itemsep{0em}
    \item the antipode $S$ is involutive, i.e., $S^2=\id{H}$;
    \item {$H$ has a non-zero integral $\ell$, with $\ell^2=\ell$;}
    \item $H$ is semisimple;
    \item $H^*$ is semisimple.
\end{enumerate}
\end{multicols}
Moreover, if $H$ is semisimple, the Haar integral for $H^*$ is given by $f\colon H \to \kappa$, where
\[f(a)=\frac{1}{\dim_\kappa(H)}\mathrm{Tr}(r(a)).\]
Here $r(a)\colon H \to H$, $x\mapsto xa.$. 
(Note that indeed:  $  \eps_{H^*}(f)=f(1_H)=1_{\kappa}$).
\end{theorem}
Observe that $f\in H^*$ above is cocommutative. Indeed, given $a,b \in H$:
\begin{align*}
    (\Delta_{H*} (f))(a,b)&=f(ab)= \frac{1}{\dim_\kappa(H)}\mathrm{Tr}(r(ab))= \frac{1}{\dim_\kappa(H)}\mathrm{Tr}\big (r(b)\circ r(a)\big)\\&= \frac{1}{\dim_\kappa(H)}\mathrm{Tr}\big (r(a)\circ r(b)\big)
 =(\Delta_{H*} (f))(b,a).
\end{align*}

{The following result will be crucial for us to construct our model.}
\begin{proposition}[{\cite[Theorem 1.2]{balsam-kirillov}}] \label{prop:haar-integral}
Let $H$ be a finite-dimensional semisimple Hopf algebra over $\kappa$, of characteristic zero.
The Haar integral $\ell \in H$ is cocommutative.
\end{proposition}
\begin{proof}
{By the previous theorem},  the Haar integral of $H$ can be expressed as $\ell = \frac{1}{\dim(H)} \chi_{H^*}$, where $\chi_{H^*} \in (H^*)^*$ is the regular character of the dual Hopf algebra $H^*$, and where the canonical identification $(H^*)^* \cong H$ is implicit.
The cocommutativity of the Haar integral then is an immediate consequence of the cyclicity of the trace, {as the previous calculation shows.}
\end{proof}
\begin{example}\label{ex:CG}
Let $G$ be a finite group. The group algebra $\CC G$ is semisimple by Maschke's theorem. We will regard $\CC G$ as a Hopf algebra in the usual way with $\Delta(g)=g \otimes g$, $\eps(g)=1$ and $S(g) = g^{-1}$, for all $g \in G$.
The Haar integral is $\ell=\frac{1}{|G|} \sum_{g \in G} g$. 
\end{example}
\begin{example}\label{ex:FG}
Given a finite group $G$, the algebra $\Fun{G}$ of functions $G \to \CC$, with the pointwise product, can be turned into a Hopf algebra by putting $\eps(f) = f(1_G)$ and $\Delta(f)(x\ot y) = f(xy)$ for all $x,y \in G$,
noting the isomorphism $\Fun{G} \otimes \Fun{G} \cong \Fun{G \times G}$.  This Hopf algebra is  semisimple, since it is the dual of the group algebra of $G$.
The Haar integral is $\delta_{1_G}$, where $\delta_{1_G}(x)=1$ if $x=1_G$, and otherwise $0$.
\end{example}

\subsection{Crossed modules of Hopf algebras}
Crossed modules of Hopf algebras were defined by Majid in \cite{majid2012strict}. See also \cite{FARIAMARTINS2016,emir}.
\begin{definition}\label{def:triv-braid-hopf-crossed-module}
A \emph{crossed module of Hopf algebras}, or \emph{Hopf crossed module} $(A \xrightarrow{\partial} H, \lact)$, consists of:
\begin{itemize}  \setlength\itemsep{0em}
\item
Hopf algebras $A$ and $H$, each with an invertible antipode (this condition is redundant in the finite-dimensional case),
with a Hopf algebra morphism $\partial \colon A \to H$, called \emph{boundary map}, and
\item
a left $H$-action $\lact \colon H \otimes A \to A$.
\end{itemize}
These are such that:
\begin{itemize}\setlength\itemsep{0em}
\item the action turns
$A$ into an \emph{$H$-module algebra}, i.e.:
\begin{align}\label{eq:H-mod-algebra}
    h \lact (ab)&=(h_{(1)}\lact a)\,(h_{(2)}\lact b), &&
    h \lact 1_A = \eps(h) 1_A,
\end{align}
and an \emph{$H$-module coalgebra}, i.e.:
\begin{align}
    \Delta(h \lact a)&=\big (h_{(1)}\lact a_{(1)}\big) \otimes \big (h_{(2)}\lact a_{(2)}\big), &&
    \eps(h \lact a) = \eps(h) \eps(a),
\end{align}
the latter two equations holding for all $h \in H,$ and all $a,b \in A$;
\item  the \emph{Yetter-Drinfeld condition} holds:
\begin{equation} \label{eq:yetter-drinfeld-condition-trivial}
h_{(1)} \otimes (h_{(2)} \lact a) = h_{(2)} \otimes (h_{(1)} \lact a), \qquad \text{ for all } h \in H, \text{ and all }  a \in A;
\end{equation}
\item the two \emph{Peiffer relations}, below, hold, for all $h \in H$ and all $a,b \in A$:
\begin{multicols}{2}
\begin{enumerate}\setlength\itemsep{0em}
\item[Pf 1:]	$\partial(h \lact a) 
= h_{(1)} \, \partial(a) \, S h_{(2)},$ 
\item[Pf 2:]	$\partial(a) \lact b
= a_{(1)} \, b\, S a_{(2)}$.
\end{enumerate}\end{multicols}
\end{itemize}
\end{definition}

\begin{example}\label{ex:gr} A \emph{crossed module of groups} $(E \xrightarrow{\partial} G, \lact)$ \cite{martins_porter07,brown_hha,Brown_Higgins} is given by a group homomorphism $\partial \colon E \to G$, together with a left action $\lact$ of $G$ on $E$ by automorphisms, such that the Peiffer relations, in the group case, as below, are satisfied:
\begin{multicols}{2}
\begin{enumerate}\setlength\itemsep{0em}
    \item $\partial(g \lact e)=g \, \partial(e) \, g^{-1}$, for all $g \in G$ and $e \in E$,
    \item $\partial(e) \lact f=e \, f \, e^{-1}$, for all $e, f \in E$.
\end{enumerate}
\end{multicols}
If we consider the induced map $\partial \colon \CC E \to \CC G$ on group algebras and the induced ‘linearised’ action $\lact$ of $\CC G$ on $\CC E$, then 
$(\CC E\xrightarrow{\partial} \CC G, \lact)$ 
is a crossed module of Hopf algebras; see \cite{FARIAMARTINS2016}. This example will be generalised in Subsection \ref{sec:main_example}.
\end{example}

{A crucial fact we will use about a 
Hopf crossed module 
$(A \xrightarrow{\partial} H,\lact)$ is the fact below, in \cite[page 4]{majid2012strict}. We will delay the proof to later in this section. 
 This will be a particular case of Lemma \ref{llem_Sprop}.}
\begin{lemma}\label{rem: S H-linear}
The antipode $S_A \colon A \to A$ is $H$-linear, namely:
\[S_A(h\lact a)=h\lact S_A(a), \textrm{ if $a \in A$ and $h \in H$.}\]
\end{lemma}

\begin{remark}[{Cf \cite[page 4]{majid2012strict}}]
Definition \ref{def:triv-braid-hopf-crossed-module} can be equivalently rephrased as follows.
$H$ is a Hopf algebra, $(A,\lact)$ together with the trivial left $H$-co-action is a Hopf algebra in the braided monoidal category of (left-left) Yetter-Drinfeld-modules over $H$, and $\partial \colon A \to H$ is a Hopf algebra morphism satisfying the Peiffer relations.
This follows straightforwardly by spelling out the definition of the braided monoidal category of Yetter-Drinfeld modules
(see e.g. \cite[\S10.6]{montgomery1993hopf}).
Note that equation \eqref{eq:yetter-drinfeld-condition-trivial} is precisely the Yetter-Drinfeld condition in the case of a trivial co-action.

We remark also that from this point of view the antipode $S \colon A \to A$ is by definition $H$-linear, since any morphism in the category of $H$-Yetter-Drinfeld modules is in particular an $H$-module morphism.
\end{remark}

The Yetter-Drinfeld condition \eqref{eq:yetter-drinfeld-condition-trivial}, {combined with the coassociativity of $\Delta$,} 
{implies}
the following property, which we note here because we will frequently make use of it.
\begin{lemma} \label{lem:yetter-drinfeld-for-many-factors}
For any $h \in H$ and $a_1 \ot  \cdots \ot a_k \in A^{\ot k}$, the expression,
\[ h_{(1)} \ot\cdots\ot h_{(n)} \ot h_{(n+1)} \lact a_1 \ot\cdots\ot h_{(n+k)} \lact a_k \in H^{\ot n} \ot A^{\ot k}, \]
is invariant under permutations, of $(h_{(1)}, \dots, h_{(n)}, h_{(n+1)}, \dots, h_{(n+k)})$, which preserve the order of the first $n$ factors. So, explicitly, if $\sigma\colon \{1,\dots, n+k\} \to  \{1,\dots, n+k\}$ is a bijection such that $\sigma$ is strictly increasing in $\{1,\dots, n\}$, then:
\[ h_{(1)} \ot\cdots\ot h_{(n)} \ot h_{(n+1)} \lact a_1 \ot\cdots\ot h_{(n+k)} \lact a_k=h_{(\sigma(1))} \ot\cdots\ot h_{(\sigma(n))} \ot h_{(\sigma(n+1))} \lact a_1 \ot\cdots\ot h_{(\sigma(n+k))} \lact a_k. \]
\end{lemma}
For instance, where each step follows from a one-step application of the  Yetter-Drinfeld condition \eqref{eq:yetter-drinfeld-condition-trivial}, we have:
\begin{align*}
 h_{(1)} \ot h_{(2)} \ot h_{(3)} \ot h_{(4)} \lact a_1 \ot  h_{(5)} \lact a_2 &=
 h_{(1)} \ot h_{(2)} \ot h_{(4)} \ot h_{(3)} \lact a_1 \ot  h_{(5)} \lact a_2\\
  &=
 h_{(1)} \ot h_{(3)} \ot h_{(4)} \ot h_{(2)} \lact a_1 \ot  h_{(5)} \lact a_2\\
&= h_{(2)} \ot h_{(3)} \ot h_{(4)} \ot h_{(1)} \lact a_1 \ot  h_{(5)} \lact a_2\\
&= h_{(2)} \ot h_{(3)} \ot h_{(5)} \ot h_{(1)} \lact a_1 \ot  h_{(4)} \lact a_2\\
&= h_{(2)} \ot h_{(4)} \ot h_{(5)} \ot h_{(1)} \lact a_1 \ot  h_{(3)} \lact a_2\\
&= h_{(3)} \ot h_{(4)} \ot h_{(5)} \ot h_{(1)} \lact a_1 \ot  h_{(2)} \lact a_2.
\end{align*}
Moreover:
\begin{align*}
 h_{(1)} \ot h_{(2)} \ot h_{(3)} \ot h_{(4)} \lact a_1 \ot  h_{(5)} \lact a_2 &=h_{(1)} \ot h_{(2)} \ot h_{(3)} \ot h_{(5)} \lact a_1 \ot  h_{(4)} \lact a_2.
\end{align*}
\begin{proof}
 Consider an expression of the form below, \[h_{(\sigma(1))} \ot\cdots \ot h_{(\sigma(j))} \ot h_{(\sigma(j+1))} \ot\dots  \ot h_{(\sigma(n))} \ot h_{(\sigma(n+1))} \lact a_1 \ot\cdots\ot h_{(\sigma(n+k))} \lact a_k,\]
 where, as above, $\sigma\colon \{1,\dots, n+k\} \to  \{1,\dots, n+k\}$ is a bijection  that is strictly increasing in $\{1,\dots, n\}$.
 
 We first note that, if $\sigma(j+1)-\sigma(j)\ge 2$, then
 \begin{multline}\label{eq:referA}
   h_{(\sigma(1))} \ot\cdots \ot h_{(\sigma(j))} \ot h_{(\sigma(j+1))} \ot\dots  \ot h_{(\sigma(n))} \ot h_{(\sigma(n+1))} \lact a_1 \ot\cdots\ot h_{(\sigma(n+k))} \lact a_k\\=h_{(\sigma'(1))} \ot\cdots \ot h_{(\sigma'(j))} \ot h_{(\sigma'(j+1))} \ot\dots  \ot h_{(\sigma'(n))} \ot h_{(\sigma'(n+1))} \lact a_1 \ot\cdots\ot h_{(\sigma'(n+k))} \lact a_k.
 \end{multline}
 Here $\sigma'=\tau_{(\sigma(j),\sigma(j)+1)}\circ \sigma$, the usual composition of permutations, and where $\tau_{(\sigma(j),\sigma(j)+1)}$ is the transposition exchanging $\sigma(j)$ and $\sigma(j)+1$. This follows by a one-step application of the Yetter-Drinfeld condition \eqref{eq:yetter-drinfeld-condition-trivial}, since a term of the form
 $h_{(\sigma(j))} \otimes h_{(\sigma(j)+1)}\lact a$, where $a \in A$, which is then equal to $h_{(\sigma(j)+1)} \otimes h_{(\sigma(j))}\lact a$,  appears in the left-hand-side of
 \eqref{eq:referA}.

Second, note that, analogously,
\begin{multline}\label{eq:referB}
h_{(\sigma(1))} \ot\cdots\ot h_{(\sigma(n))} \ot h_{(\sigma(n+1))} \lact a_1 \ot\cdots\ot h_{(\sigma(n+k))} \lact a_k\\=h_{(\sigma'(1))} \ot\cdots\ot h_{(\sigma'(n))} \ot h_{(\sigma'(n+1))} \lact a_1 \ot\cdots\ot h_{(\sigma'(n+k))} \lact a_k,
\end{multline}
where $\sigma'=\tau_{(\sigma(n), \sigma(n)+1)}\circ \sigma$.

{By successively applying \eqref{eq:referA} and \eqref{eq:referB}, it follows that the statement of the lemma holds if $\sigma$ is an $(n,k)$-shuffle. (This is exemplified in the first calculation just before this proof.)}

{The general case follows by noting that, again by Yetter-Drinfeld condition \eqref{eq:yetter-drinfeld-condition-trivial}, and since consecutive transpositions generate all of the symmetric group, if $\pi$ is a permutation of $\{1,\dots,k\}$, then:
\[ h_{(1)} \ot\cdots\ot h_{(n)} \ot h_{(n+1)} \lact a_1 \ot\cdots\ot h_{(n+k)} \lact a_k =
h_{(1)} \ot\cdots\ot h_{(n)} \ot h_{(n+\pi(1))} \lact a_1 \ot\cdots\ot h_{(n+\pi(k))} \lact a_k.\] The remaining details are left to the reader.} 
\end{proof}
{When we construct our model, we will also make use of the following result, which combines the previous result with Lemma~\ref{lem:coproduct-cocommut-el}.
\begin{lemma} \label{lem:cocom-yetter-drinfeld-for-many-factors}
For any $h \in H$, cocommutative, 
and $a_1 \ot  \cdots \ot a_k \in A^{\ot k}$, the expression:
\[ h_{(1)} \ot\cdots\ot h_{(n)} \ot h_{(n+1)} \lact a_1 \ot\cdots\ot h_{(n+k)} \lact a_k \in H^{\ot n} \ot A^{\ot k} \]
is invariant under permutations, of $(h_{(1)}, \dots, h_{(n)}, h_{(n+1)}, \dots, h_{(n+k)})$, that are cyclic when restricted to the first $n$ terms. So, explicitly, if $\sigma\colon \{1,\dots, n+k\} \to  \{1,\dots, n+k\}$ is a bijection, for which  the restriction of $\sigma$  to $\{1,\dots, n\}$ gives a cyclic permutation of $\{1,\dots,n\}$, then:
\[ h_{(1)} \ot\cdots\ot h_{(n)} \ot h_{(n+1)} \lact a_1 \ot\cdots\ot h_{(n+k)} \lact a_k=h_{(\sigma(1))} \ot\cdots\ot h_{(\sigma(n))} \ot h_{(\sigma(n+1))} \lact a_1 \ot\cdots\ot h_{(\sigma(n+k))} \lact a_k. \]
\end{lemma}
For instance,
 where each step follows from one-step applications of \eqref{eq:yetter-drinfeld-condition-trivial} or \eqref{eq:cocommutative-elements-cyclically-invariant}, we have:
\begin{align*}
 h_{(1)} \ot h_{(2)} \ot h_{(3)} \ot h_{(4)} \lact a_1 \ot  h_{(5)} \lact a_2 &=
 h_{(5)} \ot h_{(1)} \ot h_{(2)} \ot h_{(3)} \lact a_1 \ot  h_{(4)} \lact a_2\\
 &=
 h_{(4)} \ot h_{(1)} \ot h_{(2)} \ot h_{(3)} \lact a_1 \ot  h_{(5)} \lact a_2\\
  &=
 h_{(3)} \ot h_{(1)} \ot h_{(2)} \ot h_{(4)} \lact a_1 \ot  h_{(5)} \lact a_2\\
  &=
 h_{(3)} \ot h_{(1)} \ot h_{(2)} \ot h_{(5)} \lact a_1 \ot  h_{(4)} \lact a_2.
\end{align*}
}
\begin{proof}
The calculation just above indicates how the proof is performed,
\begin{align*}
     h_{(1)} &\ot\cdots\ot h_{(n)} \ot h_{(n+1)} \lact a_1 \ot\cdots\ot h_{(n+k)} \lact a_k\\ &= h_{(n+k)}\otimes h_{(1)} \ot\cdots\ot h_{(n-1)} \ot h_{(n)} \lact a_1 \ot\cdots\ot h_{(n+k-1)} \lact a_k \\
     &= h_{(n+k-1)}\otimes h_{(1)} \ot\cdots\ot h_{(n-1)} \ot h_{(n)} \lact a_1 \ot \cdots\ot h_{(n+k-2)} \lact a_{k-1}  \ot h_{(n+k)} \lact a_k \\
     &= h_{(n+k-2)}\otimes h_{(1)} \ot\cdots\ot h_{(n-1)} \ot h_{(n)} \lact a_1 \ot \cdots\ot h_{(n+k-3)} \lact a_{k-2}  \\ & \phantom{-------------------}\ot h_{(n+k-1)} \lact a_{k-1}  \ot h_{(n+k)} \lact a_k \\
     & = \dots \textrm{ apply \eqref{eq:yetter-drinfeld-condition-trivial} several times } \dots\\
      &= h_{(n+1)}\otimes h_{(1)} \ot\cdots\ot h_{(n-1)} \ot h_{(n)} \lact a_1 \ot h_{(n+2)} \lact a_{2}  \ot \dots  \ot h_{(n+k)} \lact a_k \\
      &= h_{(n)}\otimes h_{(1)} \ot\cdots\ot h_{(n-1)} \ot h_{(n+1)} \lact a_1 \ot h_{(n+2)} \lact a_{2}  \ot \dots  \ot h_{(n+k)} \lact a_k.
     \end{align*}
We can now apply \eqref{eq:yetter-drinfeld-condition-trivial} several times and arbitrarily change the order of the last $k$ elements, noting that any permutation can be generated by products of consecutive transpositions.
\end{proof}

\subsection{Review of crossed product Hopf algebras}

The following structure of a \emph{cross product} (also known as \emph{smash product}) is well known and arises naturally from a crossed module of Hopf algebras \cite[Section 2]{majid2012strict}.

{The definition below is a particular case of \cite[Theorem 6.2.2.]{Majid_bigbook}, where we consider  the trivial right-coaction, $\beta\colon H \to H\otimes A$, of $A$ on $H$, where $\beta(h)=h\otimes 1_A$. 

\begin{definition}\label{def:crossed-product-algebra} \textbf{/ Proposition.}
Let $H$ and $A$ be a Hopf algebras. Suppose that $A$ is a left $H$-module algebra.
Then the \emph{cross product algebra} $A \rtimes H$, {as in   \cite[Theorem 1.6.6]{Majid_bigbook},} is the algebra with underlying vector space $A \ot H$, and multiplication:
\[ (a \ot h) \cdot_\rtimes (b \ot k) := a(h_{(1)} \lact b) \ot h_{(2)} k , \]
for $a, b \in A$ and $h, k \in H$. {This is a unital algebra, with unit $1_A\otimes 1_H$.}

If $A$ is additionally a $H$-module coalgebra, such that the Yetter-Drinfeld condition \eqref{eq:yetter-drinfeld-condition-trivial}, 
holds,
then the algebra $A \rtimes H$ becomes a Hopf algebra, with the usual tensor product coalgebra structure, 
\[\Delta(a \ot h) := (a_{(1)} \ot h_{(1)}) \ot (a_{(2)} \ot h_{(2)}),\]
together with
$\eps(a\otimes h)=\eps(a) \eps(a)$, and
\[S(a\otimes h)=(1_A \otimes S_H(h)) \cdot_\rtimes  (S_A(a)\otimes 1_H).\]
\end{definition}
\begin{proof}
    This is a particular case of \cite[Proposition 1.6.16 and Theorem 6.2.2.]{Majid_bigbook}, where the right-coaction of $A$ on $H$ is the trivial coaction, $\beta(h)=h\ot 1_A$. Indeed \cite[(6.8)]{Majid_bigbook} follows from the fact that $A$ is an $H$-module coalgebra, \cite[(6.9)]{Majid_bigbook} from \eqref{eq:H-mod-algebra}, and \cite[(6.10)]{Majid_bigbook} from \eqref{eq:yetter-drinfeld-condition-trivial}. 
\end{proof}

We can now finish the proof of Lemma \ref{rem: S H-linear}. More generally we have that:
\begin{lemma}\label{llem_Sprop}
Let $H$ and $A$ be Hopf algebras.   If $A$ is an $H$-module algebra and coalgebra, satisfying moreover the Yetter-Drinfeld conditions \eqref{eq:yetter-drinfeld-condition-trivial}, then, given $h\in H$ and $a \in A$, we have:
    \[S_A(h \lact a)=h\lact S_A(a).\]
\end{lemma}
\begin{proof} We use the Hopf algebra structure of  $A \rtimes H$. 
Given $g \in H$ and $a \in A$, we consider the element, \[(1\ot g) \cdot_\rtimes  (a\ot 1)=(g_{(1)}\lact a) \ot g_{(2)}=\big (g_{(1)} \lact a \ot 1 \big)\cdot_\rtimes \big( 1\ot g_{(2)}\big)=\big (g_{(2)} \lact a \ot 1 \big)\cdot_\rtimes \big( 1\ot g_{(1)}\big).\] 
Let us compute its antipode, in two different ways, namely
\begin{align*}
    S\big( (1\ot g) \cdot_\rtimes  (a\ot 1) \big) &= (S_A(a) \ot 1)\cdot_\rtimes \big(1 \ot S_H(g)\big)=S_A(a) \ot S_H(g),
    \end{align*}
    and
\begin{align*}
S\big (( g_{(2)} \lact a \ot 1 )\cdot_\rtimes ( 1\ot g_{(1)})\big) &=\big(1\ot S_H(g_{(1)})\big) \cdot_\rtimes \big( S_A(g_{(2)}\lact a)\ot 1\big).  \end{align*}
Now note, given $h\in H$,
\[ (1 \ot h_{(1)}) \cdot_\rtimes (S_A(a) \ot S_H(h_{(2)}))=\big(h_{(1)} \lact S_A(a)\big) \otimes \big(h_{(2)} S_H(h_{(3)})\big) =\big(h \lact S_A(a)\big)\ot 1_H, \]
and,
\[(1 \ot h_{(1)} )\cdot_\rtimes  (1\ot S(h_{(2)})) \cdot_\rtimes ( S(h_{(3)}\lact a)\ot 1)= S_A(h \lact a)\ot 1_H.\]
\end{proof}

\subsection{Crossed modules of semisimple Hopf algebras }

{From this section onwards, we work over $\mathbb{C}$. We will make strong use of Theorem \ref{prop:haar-integral0}.} 

The following is a key lemma 
to prove the commutativity between vertex and edge projectors in the commuting-projector Hamiltonian model defined in Section \ref{sec:model} below; see Theorem~\ref{thm:hamiltonian}.}

\begin{lemma} \label{lem:haar-integral-central-in-crossed-product} 
Let $H$ and $A$ be Hopf algebras, over $\CC$, with $A$ finite dimensional and semisimple. Suppose that $A$ is a $H$-module coalgebra. Then the Haar integral $\Lambda$ of $A$ is $H$-invariant, namely:
\[h \lact \Lambda = \eps(h) \Lambda; \textrm{ for all } h \in H.
\]

\end{lemma}
\begin{proof}
{As recalled in Theorem \ref{prop:haar-integral0},}  the Haar integral in $A$ can be expressed by the regular character $\chi_{A^*} \in (A^*)^*$ of the dual Hopf algebra $A^*$, i.e.\ $\Lambda = \frac{1}{\dim(A)} \chi_{A^*}$, where the canonical identification $(A^*)^* \cong A$ is implicit.
The regular character of the dual Hopf algebra $A^*$ has the following expression as an element of $A$, in terms of the co-multiplication of $A$:
\[ \chi_{A^*} = \sum_{i=1}^{\dim(A)} (e_i)_{(1)} e^i\big((e_i)_{(2)}\big), \]
where $(e_i)_{i=1}^{\dim(A)}$ and $(e^i)_{i=1}^{\dim(A)}$ are dual bases for $A$ and $A^*$, respectively.
Hence, we compute, omitting the summation symbol and letting $h \in H$:
\begin{align*}
h \lact \Lambda &= h \lact \left( (e_i)_{(1)} \, e^i((e_i)_{(2)}) \right) \\
&= (h \lact (e_i)_{(1)})\, e^i((e_i)_{(2)}) \\
&= (h_{(1)} \lact (e_i)_{(1)}) \, e^i\big (S^{-1}(h_{(3)}) \lact h_{(2)} \lact (e_i)_{(2)}\big) \\
&\stackrel{\text{$A$ $H$-mod coalg.}}{=} (h_{(1)} \lact e_i)_{(1)}\, e^i\big(S^{-1}(h_{(2)}) \lact (h_{(1)} \lact e_i)_{(2)}\big) \\
&\stackrel{\text{(*)}}{=} (e_i)_{(1)} \, e^i\big(h_{(1)} \lact S^{-1}(h_{(2)}) \lact (e_i)_{(2)}\big) \\
&\stackrel{S^{-1}=S}{=} \eps(h)\, (e_i)_{(1)} e^i((e_i)_{(2)}) \\
&= \eps(h) \,  \Lambda.
\end{align*}
Here in the fourth step (*) we have used the elementary property of dual bases which states that $\sum_i f(e_i) \ot e^i = \sum_i e_i \ot (e^i\circ f)$, for any linear map $f \colon A \to A$ (which can be verified by evaluating both sides of the equation on the basis). Hence we have, for all $g \in H$, and where $L_g\colon A \to A$, $x \mapsto gx$.
\begin{align*}
  e^i \circ L_g   \otimes (e_i)_{(1)} \otimes (e_i)_{(2)}= 
  e^i   \otimes (g \lact e_i)_{(1)} \otimes (g\lact e_i)_{(2)},
\end{align*}
and so, for $h \in H$:
\begin{align*}
  e^i \circ L_{h_{(1)}}   \otimes (e_i)_{(1)} \otimes S^{-1}(h_{(2)})\lact (e_i)_{(2)}= 
  e^i   \otimes (h_{(1)} \lact e_i)_{(1)} \otimes S^{-1}(h_{(2)}) \lact ( h_{(1)} \lact e_i)_{(2)}.
\end{align*}\end{proof}

\begin{example} In the case of a crossed module of Hopf algebras 
$(\CC E\xrightarrow{\partial} \CC G, \lact)$ derived from a crossed module of finite groups $(E\xrightarrow{\partial}G,\lact)$, see Example \ref{ex:gr}, the compatibility relation between the Haar integral in $\CC E$ and the action of  $\CC G$ simply means that:
\begin{align*}
    g \lact \left ( \frac{1}{|E|}\sum_{e \in E} e\right)= \frac{1}{|E|}\sum_{e \in E} e, \quad \text{ for all } g \in G.
\end{align*}
\end{example}

\subsection{A class of examples of crossed modules of semisimple Hopf algebras from crossed products of groups algebras with dual group algebras}\label{sec:main_example}

Recall, from examples \ref{ex:CG} and \ref{ex:FG}, our conventions for how given a finite group $G$ we can form the Hopf algebras $\CC G$ and $\Fun{G}$.  
The following result is proved in Subsection \ref{Proof:GEXY}, in the Appendix.

\begin{proposition} \label{pr:GEXY}
Let $(E \xrightarrow{\partial} G, \lact)$ be a crossed module of groups.
Let $X$ and $Y$ be finite groups on which $G$ acts by automorphisms. Let $f \colon Y \to X$ be a $G$-equivariant group morphism.
Suppose that the restrictions of the actions of $G$ on $Y$ and $X$, respectively, to $\im(\partial) \subseteq G$ each are trivial.

The action below  gives $\Fun{Y}$ a $\CC G$-module algebra and coalgebra structure,  $$(g \lact \varphi)(y) = \varphi(g^{-1}\lact y), \textrm{  where  $g\in G$, $\varphi \in \Fun{Y}$ and $y \in Y$,  } $$  and the Yetter-Drinfeld condition \eqref{eq:yetter-drinfeld-condition-trivial} holds, so we can form $\Fun{Y} \rtimes \CC G$ as in Definition \ref{def:crossed-product-algebra}.

Furthermore, we have a crossed module of Hopf algebras, $(\Fun{X} \ot \CC E \xrightarrow{\partial} \Fun{Y} \rtimes \CC G, \lact)$,
where\begin{align*}
\partial \colon \Fun{X} \otimes \CC E &\lto \Fun{Y} \rtimes \CC G, \\
\xi \ot e &\lmapsto f^* \xi \ot \partial(e),
\end{align*}
{(here $(f^* \xi)(y)=\xi(f(y))$, if $y \in Y$)} and
\[ (\varphi \ot g) \lact (\xi \ot e) := \varphi(1) (g \lact \xi) \ot (g \lact e), \]
for $\varphi \ot g \in \Fun{Y} \ot \CC G$, $\xi \ot e \in \Fun{X} \ot \CC E$, where $(g \lact \xi)(x) := \xi(g^{-1} \lact x)$,
for $x \in X$.
\end{proposition}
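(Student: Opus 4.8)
The plan is to verify directly each clause of Definition \ref{def:triv-braid-hopf-crossed-module} for the data $(A = \Fun{X} \ot \CC E,\ H = \Fun{Y} \rtimes \CC G,\ \partial,\ \lact)$, working throughout in Sweedler notation and recording at each step which of the hypotheses (that $G$ acts by automorphisms on $X$ and $Y$; that $f$ is $G$-equivariant; that $\im(\partial)$ acts trivially on $X$ and on $Y$; and the group Peiffer relations of Example \ref{ex:gr}) is being used. First I would fix the ambient Hopf algebras: $A$ is simply the tensor-product Hopf algebra of the two semisimple Hopf algebras $\Fun{X}$ and $\CC E$. For $H$, I would check that the pullback-type action $(g \lact \varphi)(y) = \varphi(g^{-1} \lact y)$ turns $\Fun{Y}$ into a $\CC G$-module algebra and $\CC G$-module coalgebra, both of which reduce to each $g$ acting on $Y$ by a group automorphism, so that pullback respects the pointwise product and the coproduct $\Delta(\varphi)(y_1 \ot y_2) = \varphi(y_1 y_2)$. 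Since $\CC G$ is cocommutative, Definition \ref{def:crossed-product-algebra} then promotes $\Fun{Y} \rtimes \CC G$ to a Hopf algebra, whose grouplike $\CC G$-part makes its antipode computable as $S(\varphi \ot g) = (g^{-1} \lact S\varphi) \ot g^{-1}$, a formula I would record for later use.

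Next I would show that $\partial$ is a bialgebra morphism, the antipode compatibility then being automatic. Multiplicativity is the first place the hypotheses bite: expanding $\partial(\xi \ot e) \cdot_\rtimes \partial(\xi' \ot e')$ via the crossed-product product produces a factor $\partial(e) \lact f^*\xi'$, which collapses to $f^*\xi'$ precisely because $\partial(e) \in \im(\partial)$ acts trivially on $X$, and this is what matches $\partial$ applied to the product in $A$. Comultiplicativity and counit-preservation follow because $f^* : \Fun{X} \to \Fun{Y}$ is a coalgebra map (as $f$ is a group morphism) and $f(1_Y) = 1_X$. For the module structure I would verify that $\lact$ is a genuine unital $H$-action and an $H$-module algebra and coalgebra action; these computations reduce to $G$ acting by automorphisms on $X$ and $E$ together with the observation that the $\Fun{Y}$-part enters the action only through the scalar $\varphi(1_Y) = \eps(\varphi)$, so that the module-algebra and module-coalgebra identities follow from the counit axiom $\varphi_{(1)}(1)\varphi_{(2)} = \varphi = \varphi_{(2)}(1)\varphi_{(1)}$. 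That same observation makes the Yetter--Drinfeld condition \eqref{eq:yetter-drinfeld-condition-trivial} immediate, since both sides collapse to $(\varphi \ot g) \ot (g \lact \xi \ot g \lact e)$.

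The heart of the proof, and the main obstacle, is the two Peiffer relations, where every remaining hypothesis is needed simultaneously. For relation (1), I would expand $h_{(1)}\,\partial(a)\,S(h_{(2)})$ using the recorded antipode formula; the $\Fun{Y}$-factor telescopes to $\varphi_{(1)} S\varphi_{(2)} = \eps(\varphi)1$ by the antipode axiom, the $\CC G$-factor becomes $g\,\partial(e)\,g^{-1} = \partial(g \lact e)$ by the group Peiffer relation (1), and the surviving action $\partial(g\lact e) \lact S\varphi_{(2)}$ is killed by triviality of the $\im(\partial)$-action on $Y$; matching the result against $\partial(h \lact a)$ then requires exactly the $G$-equivariance of $f$, in the form $g \lact f^*\xi = f^*(g \lact \xi)$. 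For relation (2), I would compute $a_{(1)} b\, S(a_{(2)})$ in $A$; the $\Fun{X}$-factor telescopes via the antipode axiom and the commutativity of $\Fun{X}$ to $\eps(\xi)\xi'$, the $\CC E$-factor gives $e e' e^{-1}$ by the group Peiffer relation (2), and this matches $\partial(a) \lact b$ once the action $\partial(e) \lact \xi'$ is trivialised by the $\im(\partial)$-hypothesis on $X$.

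I would expect the bookkeeping of the crossed-product multiplication and of the antipode in the noncocommutative $H$ to be the only genuinely delicate part, since every other identity dissolves into either the fact that $G$ acts by automorphisms or the counit and antipode axioms. Finally, as flagged in Remark \ref{rem: S H-linear}, I would record the direct check that $S_A(\xi \ot e) = S\xi \ot e^{-1}$ is $H$-linear, which again follows from $g$ acting by automorphisms on $X$ and $E$.
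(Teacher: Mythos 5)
Your proposal is correct and follows essentially the same route as the paper: a direct verification of the module\-/algebra and module\-/coalgebra axioms, the Yetter--Drinfeld condition, the Hopf\-/algebra\-/morphism property of $\partial$, and the two Peiffer relations, invoking the same hypotheses at the same places (the trivial $\im(\partial)$-action for the two Peiffer relations, the $G$-equivariance of $f$ to match $\partial(h\lact a)$, and the collapse of the $\Fun{Y}$-leg of the action to the scalar $\varphi(1_Y)$). The only cosmetic differences are the order of the checks and your explicit recording of the crossed-product antipode $S(\varphi\otimes g)=(g^{-1}\lact S\varphi)\otimes g^{-1}$, which the paper uses implicitly in the form $S(\varphi\otimes g)=(1\otimes g^{-1})(S(\varphi)\otimes 1)$.
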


\begin{lemma} \label{lem:haar-integrals-of-example}
The Hopf algebras $\Fun{X} \ot \CC E$ and $\Fun{Y} \rtimes \CC G$ are semisimple with Haar integrals:
\[ \delta_{1_X} \ot \Big( \frac{1}{\vert E \vert} \sum_{e \in E} e \Big) \in \Fun{X}\otimes\CC E, \qquad \textrm{ and }\qquad
 \delta_{1_Y} \ot \Big( \frac{1}{\vert G \vert} \sum_{g \in G} g \Big) \in \Fun{Y}\rtimes\CC G . \]
\end{lemma}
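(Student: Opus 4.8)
The plan is to lean entirely on the characterization of the Haar integral in Definition/Proposition~\ref{def:haar-integral}: it is the unique idempotent $\ell$ satisfying $x\ell=\eps(x)\ell=\ell x$ for all $x$. Since both $\Fun{X}\ot\CC E$ and $\Fun{Y}\rtimes\CC G$ are complex and finite-dimensional, exhibiting such an element is enough — Proposition~\ref{prop:haar-integral} then gives semisimplicity for free, so the two clauses of the statement are established simultaneously. Thus for each candidate I would only verify the absorption property and idempotence, after which uniqueness identifies it as the Haar integral.

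For $\Fun{X}\ot\CC E$, which is a genuine tensor product of Hopf algebras, I would invoke the general fact that a tensor product of Haar integrals is a Haar integral: if $\ell_A,\ell_B$ absorb in $A,B$ then $(a\ot b)(\ell_A\ot\ell_B)=\eps(a)\eps(b)(\ell_A\ot\ell_B)=\eps(a\ot b)(\ell_A\ot\ell_B)$, symmetrically on the right, and idempotence is inherited factorwise. The earlier Examples identify $\delta_{1_X}$ and $\frac{1}{\vert E\vert}\sum_{e\in E}e$ as the Haar integrals of $\Fun{X}$ and $\CC E$, so this case is immediate.

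The substantive case is the crossed product, where the twisted product $(\varphi\ot g)(\psi\ot h)=\varphi(g\lact\psi)\ot gh$ must be handled directly. Writing $\ell_G=\frac{1}{\vert G\vert}\sum_{g\in G}g$ and $L=\delta_{1_Y}\ot\ell_G$, I would compute $(\varphi\ot h)\cdot L$ and $L\cdot(\varphi\ot h)$ separately. On the right one uses $h\ell_G=\ell_G$, the fact that $h\lact\delta_{1_Y}=\delta_{1_Y}$ (because $G$ acts by automorphisms and so fixes $1_Y$), and the pointwise identity $\varphi\,\delta_{1_Y}=\varphi(1_Y)\delta_{1_Y}$, yielding $\varphi(1_Y)L=\eps(\varphi\ot h)L$. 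On the left one uses $\ell_G h=\ell_G$ and the averaged function $\ell_G\lact\varphi=\frac{1}{\vert G\vert}\sum_{g}(g\lact\varphi)$, whose value at $1_Y$ is again $\varphi(1_Y)$ since each $g$ fixes $1_Y$; then $\delta_{1_Y}(\ell_G\lact\varphi)=\varphi(1_Y)\delta_{1_Y}$ gives $\eps(\varphi\ot h)L$. Idempotence follows by specializing $\varphi\ot h=L$ with $\eps(L)=1$, and uniqueness concludes.

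The only mild obstacle I anticipate is the bookkeeping on the left-multiplication side, where the action $\ell_G\lact\varphi$ intervenes nontrivially; the observation that collapses it is that $G$ fixes $1_Y$, so both $h\lact\delta_{1_Y}=\delta_{1_Y}$ and $(g\lact\varphi)(1_Y)=\varphi(1_Y)$ hold, after which everything reduces to the counit of $\Fun{Y}$ being evaluation at $1_Y$.
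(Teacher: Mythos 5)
Your proposal is correct and follows essentially the same route as the paper: exhibit the candidate elements, verify idempotence and the two-sided absorption property directly (using that $G$ fixes $1_Y$ so that $h\lact\delta_{1_Y}=\delta_{1_Y}$ and $(g\lact\varphi)(1_Y)=\varphi(1_Y)$), and then deduce semisimplicity from Proposition~\ref{prop:haar-integral}. One small point of bookkeeping: in the left-multiplication step the product is $\frac{1}{|G|}\sum_{g}\delta_{1_Y}(g\lact\varphi)\ot gh$ with the same $g$ in both tensor legs, not $\delta_{1_Y}(\ell_G\lact\varphi)\ot\ell_G h$ as your phrasing suggests — this happens to be harmless because the scalar $(g\lact\varphi)(1_Y)=\varphi(1_Y)$ is independent of $g$ and factors out, after which the remaining sum is $\ell_G$.
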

\begin{proof}{It is clear that both Hopf algebras are finite dimensional and involutory, hence semisimple, by Theorem~\ref{prop:haar-integral0}.}
We verify the defining properties of the Haar integral in Definition \ref{def:haar-integral}.
For $\Fun{X} \ot \CC E$ this is straightforward, since it is just a tensor product of two Hopf algebras and $\delta_{1_X} \in \Fun{X}$ and $\frac{1}{\vert E \vert} \sum_{e \in E} e\in \CC E$ are the Haar integrals of the two Hopf algebras.

To show the idempotence of the Haar integral for  $\Fun{Y}\rtimes\CC G$ we calculate:
\begin{align*}
\Big( \delta_{1_Y} \ot \Big(  \frac{1}{\vert G \vert} \sum_{g \in G} g \Big) \Big) \cdot_\rtimes \Big( \delta_{1_Y} \ot \Big(  \frac{1}{\vert G \vert} \sum_{g \in G} g \Big) \Big) &= \frac{1}{\vert G \vert^2} \sum_{g,h \in G} \delta_{1_Y} (g \lact \delta_{1_Y}) \ot gh \\
&= \delta_{1_Y} \delta_{1_Y} \ot \frac{1}{\vert G \vert^2} \sum_{g,h \in G} gh
= \delta_{1_Y} \ot \frac{1}{\vert G \vert} \sum_{g \in G} g .
\end{align*}

Let us now show the remaining left and right invariance properties of the Haar integral.
Let $y \in Y$ and $h\in G$.
Then we have, on the one hand,
\begin{align*}
(\delta_y \ot h) \cdot_\rtimes \Big( \delta_{1_Y} \ot \Big( \frac{1}{\vert G \vert} \sum_{g \in G} g \Big) \Big) &= \delta_y (h \lact \delta_{1_Y}) \ot \Big( \frac{1}{\vert G \vert} \sum_{g \in G} h g \Big) = \delta_y \delta_{1_Y} \ot \Big( \frac{1}{\vert G \vert} \sum_{g \in G} g \Big) \\
&= \delta_y(1_Y)  \Big(\delta_{1_Y} \ot \frac{1}{\vert G \vert} \sum_{g \in G} g \Big)= \eps (\delta_y \otimes h)\Big( \delta_{1_Y} \ot \Big ( \frac{1}{\vert G \vert} \sum_{g \in G} g \Big)\Big),
\end{align*}
and on the other hand,
\begin{align*}
\Big( \delta_{1_Y} \ot \Big( \frac{1}{\vert G \vert} \sum_{g \in G} g \Big) \Big) \cdot_\rtimes (\delta_y \ot h) &= \frac{1}{\vert G \vert} \sum_{g \in G} \delta_{1_Y} \big(  g \lact \delta_y \big) \ot  g h  = \frac{1}{\vert G \vert} \sum_{g \in G} \delta_{1_Y} \delta_{\left(  gh^{-1}\lact y \right)} \ot g  \\
&= \delta_y (1_Y)\Big(  \delta_{1_Y} \ot   \frac{1}{\vert G \vert} \sum_{g \in G} g \Big) 
= \eps (\delta_y \otimes h)\Big( \delta_{1_Y} \ot  \frac{1}{\vert G \vert} \sum_{g \in G} g \Big).
\end{align*}
{(In particular, by Theorem~\ref{prop:haar-integral0}, if follows again that the two Hopf algebras are semisimple.)}
\end{proof}

\section{Hopf-algebraic higher Kitaev model:
Commuting-projector Hamiltonian model for  oriented surfaces {with cell decomposition} from a crossed module of semisimple Hopf algebras} \label{sec:model}

{In \cite{Higher_Kitaev,companion} a 2-group {commuting-projector} Hamiltonian model for 2+1D and 3+1D topological phases was defined. The model, henceforth called \emph{\HKM}, takes as input a 2-group, which can be represented by a crossed module $(E \xrightarrow{\partial} G, \lact)$ of groups Definition \ref{ex:gr}; see \cite{Baez_Lauda}\cite[\S 2.5 \& 2.7]{Brown_Higgins_Sivera}. Given a {manifold $M$, with a triangulation (or more generally a 2-lattice decomposition \cite{companion}) $L$,} the total Hilbert space of the \HKM\ is the free vector space on the set of all fake-flat 2-gauge configurations in {$(M,L)$} \cite[\S 3.2.1]{companion}.  

In 2+1D there is a variant of the \HKM, whose construction was sketched in \cite[page  8]{Higher_Kitaev}. Its total Hilbert space is the free vector space on the set of all 2-gauge configurations in $(M,L)$.
We will call this  latter model the \emph{\sHKM}. We note that in order that the edge projectors in the \sHKM\ commute with each other,  certain restrictions on the 2-lattice decompositions, `adequacy', to be defined later, must be put.  This restriction is however  mild enough for us to be able to work with triangulations of surfaces, with a total order in the set of vertices.

{Crossed modules of Hopf algebras were defined in \cite{majid2012strict}, and there related to a special case of strict quantum 2-groups; see also \cite{FARIAMARTINS2016,emir,fregier2011}.}
In this section, we define a Hopf-algebraic generalization of the 2+1D \sHKM, defined on a surface $\Sigma$ with an adequate cell decomposition.
 The  input datum is a crossed module $(A \xrightarrow{\partial} H, \lact)$ of finite-dimensional semisimple Hopf algebras $A$ and $H$, see Definition \ref{def:triv-braid-hopf-crossed-module}, {which we will fix throughout this section}. 
To the model constructed in this section we  
give the name:  
 \emph{\HHK\ model}.

\subsection{Cell decompositions and the total state space of the model} 

\subsubsection{{Conventions for cell decompositions of surfaces}}\label{sec:conv-cell}
Let $\Sigma$ be a compact oriented surface.
\begin{definition}[Cell decomposition of a surface] \label{def:cell-decomposition}
By a \emph{cell decomposition $L$ of $\Sigma$} we mean the following.
We have an embedded graph $L$ in $\Sigma$ with no loops and no univalent vertices, with finite sets $L^0$ and $L^1$ of \emph{vertices} and \emph{edges}, such that its complement in $\Sigma$ is a disjoint union of open $2$-disks, forming the set $L^2$ of \emph{plaquettes}. We 
will require 
that the closure, in $\Sigma$, of each plaquette  is a topological 2-disk, and that its boundary is a 1-sphere $S^1$,  with a CW-decomposition constructed by vertices and edges of $\Gamma$, with no repetitions of either edges or vertices.  

We moreover require that two different plaquettes have at most one edge in common in their boundary, and that at most one edge can connect a given pair of vertices.

 Furthermore, each edge is oriented and each plaquette ${P}$ has a distinguished vertex {$v_P$} in its boundary, called its \emph{base-point}.
 {In other words $L$ includes a function $v:L^2 \rightarrow L^0$ such that $v_P \in \partial P$ and a function $\omega:L^1 \rightarrow L^0$ such that $\omega_e \in \partial e$.}
 \end{definition}
\noindent{The choice of orientations on edges and of a base-point in each plaquette is essential to establish conventions for edge operators and plaquette operators, and was also considered in \cite{Higher_Kitaev,companion}.} The conditions on boundaries of plaquettes, and that we have at most one edge between a pair of vertices, arise only in order to reduce the number of cases to consider in the proofs we give.
\begin{remark}
In particular, $L$ is in this way a regular CW decomposition of $\Sigma$, which additionally has that the attaching maps of the $2$-cells are cellular, as the $2$-lattices defined in \cite{companion}.
\end{remark}
\begin{remark}
The cell decompositions considered here in particular include triangulations of surfaces,  {which in this paper we regard to be} equipped with an orientation of each edge and a choice of a vertex (base-point) for each triangle, {unless explicitly stated otherwise}. Here triangulations, as usual, come with the restriction that two simplices can only intersect along a common face.
\end{remark}

We will initially define our model for any cell decomposition of an oriented surface $\Sigma$. This model has the best properties when we restrict to, here called, \emph{adequate} cell decompositions, which we will now define.

Let $P$ be a plaquette, with base-point $v_P$. Its boundary is topologically $S^1$, and has a
preferred orientation, anti-clockwise, or positive, induced by the orientation that $\Sigma$ induces in the plaquette. Let $(e_1^P, \dots, e_{n_{P}}^P)$ be the ordered set of edges in the boundary of $P$, starting at the base-point $v_P$ and going around the boundary in an anti-clockwise way, with respect to the plaquette.

\begin{definition}[Adequate cell decomposition]\label{def:ad-cell-decomposition} Let $L$ be a cell decomposition of $\Sigma$. Let $P$ be a plaquette. Let $e$ and $e'$ be different edges in the boundary of $P$. We say that the pair $(e,e')$ is \emph{adequate} (with respect to $P$) if one of the following conditions holds:
\begin{enumerate}\setlength\itemsep{0em}
    \item either $e$ and $e'$ are both oriented clockwise or anti-clockwise, with respect to $P$,
    \item or, if $e$ and $e'$ have opposite orientations, the edge which is oriented clockwise appears later than than the one oriented 
    anti-clockwise in the list $(e_1^P, \dots, e_{n_{P}}^P)$.
\end{enumerate}
What this means in practice, should be clear from Figure \ref{fig:adequate}.

We say that $L$ is adequate if given any plaquette, $P$, all pairs, $(e,e')$ of different edges in the boundary of $P$ are adequate with respect to $P$.

\begin{figure}[ht!]
 \labellist
\pinlabel $\small{v_P}$ at 490 10
\pinlabel $\small{e}$ at 453 340
\pinlabel $\small{e'}$ at 109 340
\pinlabel $\textrm{adequate}$ at 300 -60
\pinlabel $\small{v_P}$ at 1283 10
\pinlabel $\small{e}$ at 1246 340
\pinlabel $\small{e'}$ at 902 340
\pinlabel $\textrm{adequate}$ at 1069 -60
\pinlabel $\small{v_P}$ at 2076 10
\pinlabel $\small{e}$ at 2039 340
\pinlabel $\small{e'}$ at 1695 340
\pinlabel $\textrm{adequate}$ at 1886 -60
\pinlabel $\small{v_P}$ at 2875 10
\pinlabel $\small{a}$ at 2832 340
\pinlabel $\small{a'}$ at 2488 340
\pinlabel $\textrm{non-adequate}$ at 2690 -60
\pinlabel $\small{P}$ at 285  150
\pinlabel $\small{P}$ at 1078  150
\pinlabel $\small{P}$ at 1871  150
\pinlabel $\small{P}$ at 2664  150
\endlabellist
\centering
\includegraphics[scale=0.17]{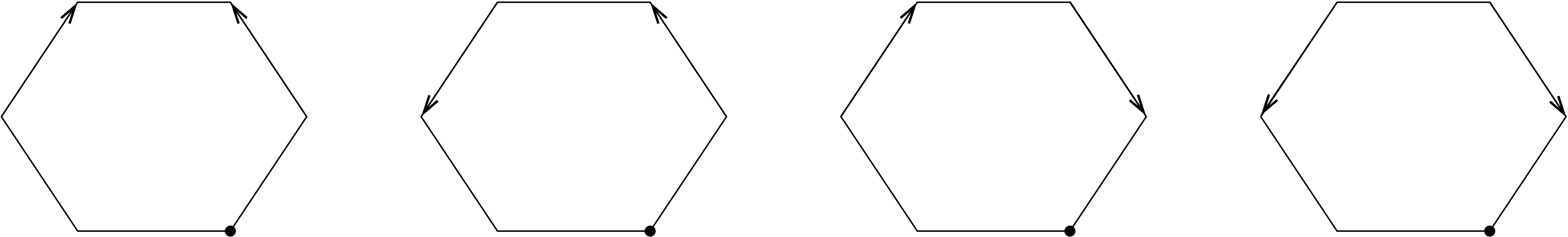}
\vspace{5px}
\caption{{Adequate and non-adequate pairs of edges on a plaquette $P$.}}
\label{fig:adequate}
\end{figure}
\end{definition}
\vspace{-15px}
\noindent Any oriented surface can be given an adequate cell decomposition, given the following result.
\begin{lemma}\label{lem:adequatefromtriang}
Let $\Sigma$ be a compact oriented surface as before.
Let $L$ be a triangulation of $\Sigma$. Suppose that we put a total order on the set of vertices of $L$. We can then base each plaquette by its `minimal' vertex, and we can also orient each edge $(a,b)$ in increasing order. This cell decomposition is adequate.
\end{lemma}
\begin{proof}
This follows from a case-by-case analysis (two cases only). Consider a triangle $P$, with vertices $a$, $b$ and $c$, with $a<b<c$. This can be embedded in the plane, so that the orientation of $P$ coincides with the orientation in its image induced by the plane. 
There are two possibilities for how the embedding looks like, up to orientation preserving diffeomorphism of the plane, as shown below,
$$\vcenter{\xymatrix@C=15pt@R=15pt{&& b\ar[dr]\\
&a\ar[ur]\ar[rr] && c } } \qquad \textrm{ and } \vcenter{\xymatrix@C=15pt@R=15pt{&& c\ar@{<-}[dr]\\
&a\ar[ur]\ar[rr] && b }} $$
Each clearly only gives adequate pairs of edges.
\end{proof}

\subsubsection{{The total vector space of the \HHK\ model}}

{Let $(A \xrightarrow{\partial} H, \lact)$ be a crossed module of finite-dimensional semisimple Hopf algebras.} Given an oriented surface $\Sigma$ with a cell decomposition $L$, we define the \emph{total state space} of the \HHK\ model as:
\begin{equation} \label{eq:total-state-space}
\mathcal{H}_L 
:= H^{\ot L^1} \ot A^{\ot L^2} := \bigotimes_{e \in L^1} H \ot \bigotimes_{P \in L^2} A \ .
\end{equation}
On the vector space 
$\mathcal{H}_L  $
we will define three families of linear endomorphisms: vertex operators, edge operators and plaquette operators.

By the \emph{support} of an endomorphism $T$ of $\mathcal{H}_L = H^{\ot L^1} \ot A^{\ot L^2}$, we mean the largest subset, {$M = M^1 \sqcup M^2 \subseteq L^1 \sqcup L^2$, such that there exists an endomorphism $T_S$ of $\mathcal{H}_M := H^{\ot M^1} \ot A^{\ot M^2}$ such that $T = T_M \ot \id{\mathcal{H}_L \setminus \mathcal{H}_M}$.}

We will occasionally write $H_e$ or $A_P$, where $e \in L^1$ and $P \in L^2$, for copies of the Hopf algebras $H$ and $A$ seen as the tensor factors in $\mathcal{H}_L$ associated with the edge $e$ or the plaquette $P$, respectively.

\begin{remark}
As in \cite{buerschaper-et-al}, one could consider $C^*$-Hopf algebras $H$ and $A$ and this would turn $\mathcal{H}_L$ into a Hilbert space and make the projectors, and hence the Hamiltonian, Hermitian.
\end{remark}

\subsection{Vertex operators
{of the model on $(A \xrightarrow{\partial} H, \lact)$}
}\label{sec:vertex_ops}

{Fix a pair $(\Sigma,L)$, as in Definition \ref{def:cell-decomposition}.}
Let $v \in L^0$ be a vertex and let $P \in L^2$ be an adjacent plaquette.
Such a pair $(v,P)$ is also called a \emph{site} {\cite{Kitaev}\cite[Definition 2.1]{balsam-kirillov} \cite[\S 3.1]{buerschaper-et-al}.} {Note that $v$ is not necessarily the base-point of $P$.}

We define now a family of vertex  operators, $V_{v,P}^h, h \in H$, acting on $\mathcal{H}_L$, whose support is the set of the edges and plaquettes incident to the vertex $v$.
Here $P$ is a plaquette adjacent to $v$, which is used to fix the convention for the vertex operator \cite[Definition 2.2]{balsam-kirillov}.
{The definition of the vertex operators in our model follows that of the vertex operators in the \HAKM\ \cite{buerschaper-et-al}, and reduces to the latter when $A = \CC$ {(up to minor differences in conventions)}.
At the same time, the vertex operators here recover the vertex gauge spikes in the \sHKM\ defined in \cite{companion} for a Hopf crossed module $(A \xrightarrow{\partial} H, \lact) = (\CC E \xrightarrow{\partial} \CC G, \lact)$ induced by a crossed module of groups $(E \xrightarrow{\partial} G, \lact)$.}

Consider the set of edges incident to (that is, ending or starting at) the vertex $v$.
The orientation of the surface $\Sigma$ induces a cyclic order on this set, by going around the vertex in the counterclockwise order with respect to the orientation of $\Sigma$.
The choice of plaquette $P$ incident to $v$ furthermore lifts this cyclic order to a linear one by starting with the edge that comes right after $P$ in the counterclockwise order around $v$.
We write the edges in this specified linear order as $(e_1, \dots, e_n)$.
These edges might be oriented away from or towards $v$.
Define $\theta_i := +1$ in the former case and $\theta_i := -1$ in the latter, for $i = 1, \dots, n$. These conventions should be clear from Figure \ref{fig:vertex_op}.

There are two canonical left actions of $H$ on itself: By left multiplication, $h \ot x \lmapsto hx$ for any $h,x \in H$, on the one hand, and by right multiplication pulled back along the antipode, $h \ot x \lmapsto x S(h)$, on the other hand; {see \cite[\S 3.1]{buerschaper-et-al}}.
In order to treat both cases at once, we use the notation \begin{equation}\label{eq: <>}h^\lrangle{\theta} = \begin{cases}
h^\lrangle{+1} := h, &\text{ if } \theta = +1, \\
h^\lrangle{-1} := S(h), &\text{ if } \theta = -1,
\end{cases}
\end{equation}
which is justified, since $S$, being involutive {(Proposition \ref{prop:haar-integral})}, defines an action of the group $\{ +1,-1 \} \cong \ZZ_2$. {In some cases this notation will be combined with the Sweedler notation, where it is understood that for instance $h_{(1)}^{\langle +1 \rangle} \otimes h_{(2)}^{\langle -1\rangle}$ means $h_{(1)} \otimes S(h_{(2)})$, and $h_{(1)}^{\langle-1\rangle} \otimes h_{(2)}^{\langle -1 \rangle}$ means $S(h_{(1)}) \otimes S(h_{(2)})$, so we always apply comultiplication before applying $S$.}

Furthermore, consider the (possibly empty) set of all plaquettes $P \in L^2$ whose base-point is $v$ and choose any order on this set, say $(P_1, \dots, P_k)$.
The definition of the vertex operator will be independent of this choice, {from Lemma \ref{lem:yetter-drinfeld-for-many-factors}.}
\begin{definition}[Vertex operator] \label{def:vertex-operator}
For any $h \in H$, the \emph{vertex operator, $V_{v,P}^h$, based at  site $(v,P)$}, is
\begin{align*}
V_{v,P}^h : \mathcal{H}_L &\lto \mathcal{H}_L, \\
v_{e_1} \ot\cdots\ot v_{e_n} \ot X_{P_1} \ot\cdots\ot X_{P_k} &\lmapsto \big(h_{(1)} v_{e_1}^\lrangle{\theta_1}\big)^\lrangle{\theta_1} \ot\cdots\ot \big(h_{(n)} v_{e_n}^\lrangle{\theta_n}\big)^\lrangle{\theta_n} \\
&\phantom{\lmapsto----} \ot h_{(n+1)}\lact X_{P_1} \ot\cdots\ot h_{(n+k)}\lact X_{P_k},
\end{align*}
where $v_{e_1} \ot\cdots\ot v_{e_n} \ot X_{P_1} \ot\cdots\ot X_{P_k} \in H_{e_1} \ot\cdots\ot H_{e_n} \ot A_{P_1} \ot\cdots\ot A_{P_k} = H^{\ot n}\ot A^{\ot k}$ are in the tensor factors of $\mathcal{H}_L$ associated with the edges $(e_1,\dots,e_n)$ and plaquettes $(P_1,\dots,P_k)$.
It is implicitly understood that $V_{v,P}^h$ acts as the identity on all remaining tensor factors.
\end{definition}
\begin{figure}[ht!]
 \labellist
\pinlabel $\small{v}$ at 410 171
\pinlabel $\small{P_2}$ at 125 195
\pinlabel $\small{P_1}$ at 650 185
\pinlabel $\small{P}$ at 360 390
\pinlabel $\small{e_4}$ at 460 301
\pinlabel $\small{e_1}$ at 296 326
\pinlabel $\small{e_2}$ at 256 120
\pinlabel $\small{e_3}$ at 491 110
\endlabellist
\centering
\includegraphics[scale=0.20]{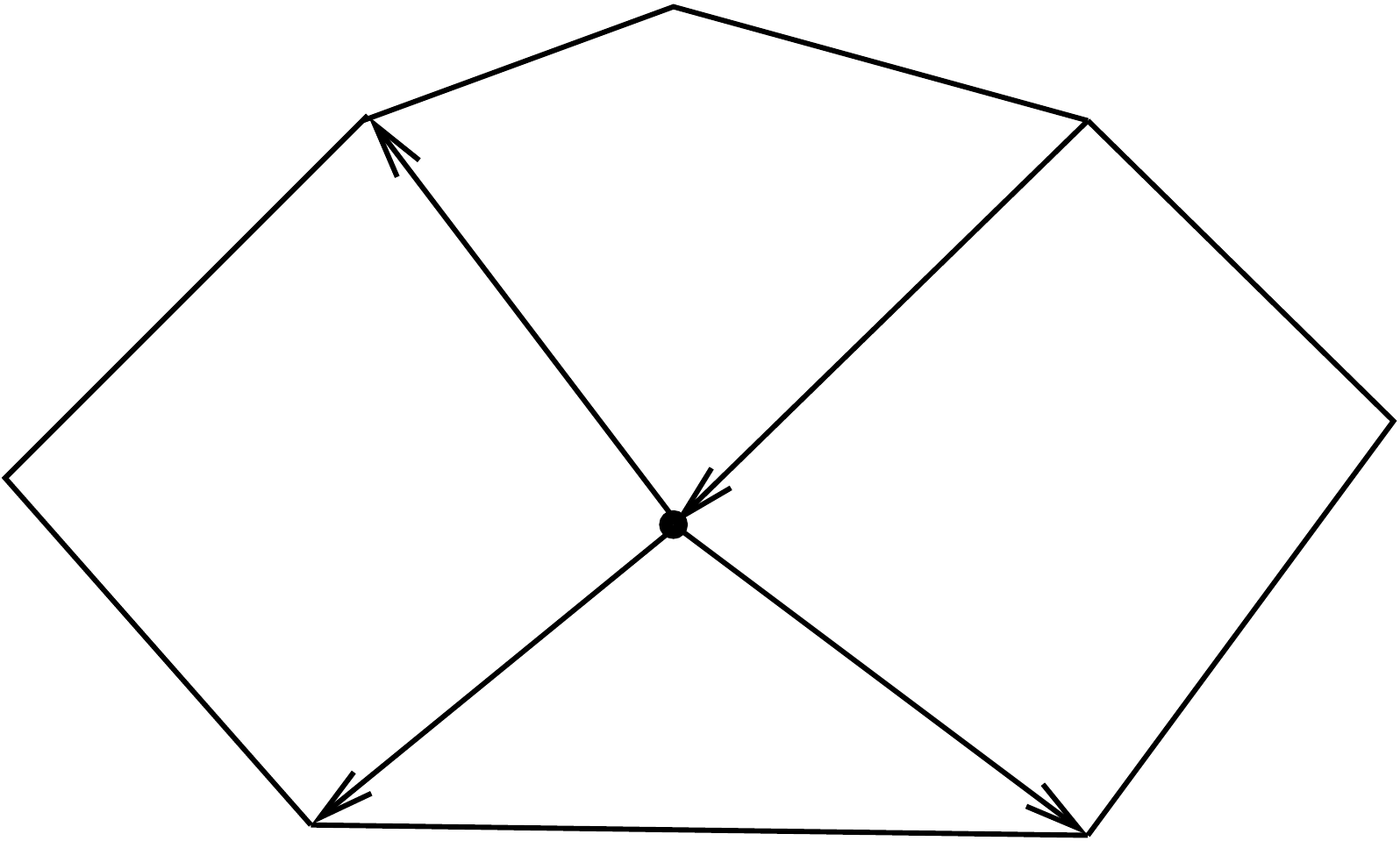}
\caption{Example of conventions for vertex operator  $V_{v,P}^h$. Here the plaquettes $P_1$ and $P_2$ each are based in $v$. Furthermore $n=4$, $k=2$, and the sequence $(\theta_1,\theta_2,\theta_3,\theta_4)$ is $(1,1,1, -1)$.}
\label{fig:vertex_op}
\end{figure}

In the case of the configuration in Figure \ref{fig:vertex_op}, the vertex operator, $V_{v,P}^h$, takes the form below,
\begin{align*}
v_{e_1} \ot v_{e_2} \ot v_{e_3} & \ot v_{e_4} \ot X_{P_1} \ot X_{P_2} \\& \stackrel{V_{v,P}^h}{\longmapsto} 
h_{(1)}v_{e_1}  \ot h_{(2)} v_{e_2} \ot h_{(3)} v_{e_3} \ot  v_{e_4} S(h_{(4)}) \ot  h_{(5)}\lact X_{P_1} \ot h_{(6)}\lact  X_{P_2}.
\end{align*}
\begin{remark}By Lemma \ref{lem:yetter-drinfeld-for-many-factors}, the definition of the vertex operator $V_{v,P}^h$ is independent of the order chosen on the set $\{P_1,\dots,P_k\}$.
 The vertex operator $V_{v,P}^h$ does however in general depend  on the plaquette $P$.
\end{remark}

\begin{lemma}\label{lem:vertex-operators-representation} Given any vertex $v \in L^0$ and adjacent plaquette $P\in L^2$, the endomorphisms $( V_{v,P}^h : \mathcal{H}_L  \lto \mathcal{H}_L )_{h \in H}$ define a representation of $H$
 on $\mathcal{H}_L$, that is:
\[ V_{v,P}^h \circ V_{v,P}^{h'} = V_{v,P}^{h h'} \quad \text{ for all } h, h' \in H . \]
\end{lemma}
\begin{proof}
This follows from the fact that $\Delta^{(n+k-1)}: H \to H^{\otimes (n+k)} $ is an algebra map, and the fact that the antipode is an anti-homomorphism of algebras. \end{proof}

Owing to the fact that a semisimple Hopf algebra $H$ comes with a distinguished idempotent, the Haar integral $\ell \in H$, as reviewed in Subsection \ref{Sec:semi}, we can define the following projectors on $\mathcal{H}_L$, which will enter the definition of the Hamiltonian of the exactly solvable model.

{(Cf. \cite[\S 2.4]{balsam-kirillov} and \cite{buerschaper-et-al} for the similar case of vertex projectors on the \HAKM.)} {Recall by Proposition  \ref{prop:haar-integral}} that the Haar integral $\ell \in H$ is cocommutative. {By combining Lemma \ref{lem:coproduct-cocommut-el} with Lemma \ref{lem:cocom-yetter-drinfeld-for-many-factors}, it follows that} the operator $V_{v,P}^\ell$ depends only on the cyclic order of the set $(e_1,\dots,e_n)$ of edges incident to the vertex $v$, and is hence independent of the choice of adjacent plaquette $P$.

\begin{definition}[Vertex projector]\label{def:vert_proj}
Let $\ell \in H$ be the Haar integral of $H$.
Then we define the \emph{vertex projector, {$V_v\colon \mathcal{H}_L \lto \mathcal{H}_L$}, based at the vertex $v$} as below, choosing any plaquette $P$ adjacent to $v$, \[ V_v := V_{v,P}^\ell . \]
\noindent (Note that a cyclic order on the set  $(e_1,\dots,e_n)$ is given by the orientation of the surface.)
\end{definition}
Given that $\ell^2=\ell$, it follows from Lemma \ref{lem:vertex-operators-representation} that indeed $V_v =V^\ell_{v,P}$ is a projector.
\begin{lemma} \label{lem:vertex-operators-commute}
Let $v_1$ and $v_2 \in L^0$ be two distinct vertices.
Then the corresponding vertex operators commute with each other, that is:
\[ V_{v_1,P_1}^{h_1} \circ V_{v_2,P_2}^{h_2} = V_{v_2,P_2}^{h_2} \circ V_{v_1,P_1}^{h_1} \quad \text{ for all } h_1, h_2 \in H , \]
where $P_1$ and $P_2 \in L^2$ are any plaquettes adjacent to $v_1$ and $v_2$, respectively.
\end{lemma}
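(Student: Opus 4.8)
The plan is to compare the two supports and then reduce the commutativity to the elementary fact that left and right multiplication in $H$ commute. First I would recall from Definition \ref{def:vertex-operator} that the support of $V_{v,P}^h$ consists of the edges incident to $v$ together with the plaquettes whose base-point is $v$. Since every plaquette has a \emph{unique} base-point and $v_1 \neq v_2$, the plaquette tensor factors touched by $V_{v_1,P_1}^{h_1}$ are disjoint from those touched by $V_{v_2,P_2}^{h_2}$. Likewise, an edge incident to $v_1$ but not to $v_2$ (and vice versa) lies in the support of only one operator. As there are no loops, each edge has two distinct endpoints, so the two operators can act on a common tensor factor \emph{only} along edges $e \in L^1$ whose endpoints are exactly $v_1$ and $v_2$. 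On every other tensor factor at most one of the two operators acts nontrivially, so there they trivially commute.

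Next I would isolate the action on such a shared edge $e$. For an edge oriented away from a vertex ($\theta = +1$) the vertex operator multiplies the corresponding factor $H_e$ on the left, $x \mapsto h_{(i)}\, x$, whereas for an edge oriented towards the vertex ($\theta = -1$) it multiplies on the right, since, using $S^2 = \id{H}$,
\[ x \longmapsto \big(h_{(i)}\, x^\lrangle{-1}\big)^\lrangle{-1} = S\big(h_{(i)}\, S(x)\big) = x\, S(h_{(i)}). \]
An edge joining the distinct vertices $v_1$ and $v_2$ points away from one endpoint and towards the other, so exactly one of $V_{v_1,P_1}^{h_1}$ and $V_{v_2,P_2}^{h_2}$ multiplies $H_e$ on the left (by the relevant Sweedler component of $h_1$ or $h_2$) and the other multiplies it on the right.

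The core step is then the observation that for any fixed $a, b \in H$ the endomorphisms $x \mapsto a\,x$ and $x \mapsto x\,b$ of $H$ commute by associativity, $a(xb) = (ax)b$. To make this precise I would write out both composites $V_{v_2,P_2}^{h_2} \circ V_{v_1,P_1}^{h_1}$ and $V_{v_1,P_1}^{h_1} \circ V_{v_2,P_2}^{h_2}$ on a pure tensor, distributing $\Delta^{(\cdots)}(h_1)$ and $\Delta^{(\cdots)}(h_2)$ over the respective incident edges and base-point plaquettes; the two coproducts do not interact, so on each shared edge the result is $\big(\text{component of } h_1\big)\cdot x \cdot S\big(\text{component of } h_2\big)$ independently of the order of application, and on every remaining factor only one operator acts. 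I do not expect a real obstacle: the only point requiring care is the bookkeeping when several edges simultaneously join $v_1$ and $v_2$ (possible for general cell decompositions, though not for triangulations), but since the factors $H_e$ are independent this is handled edge-by-edge with no extra input.
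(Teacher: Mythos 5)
Your proof is correct and follows essentially the same route as the paper: the supports can only overlap on the $H_e$ factors of edges joining $v_1$ and $v_2$ (plaquette factors are disjoint because base-points are unique), and on such a factor one operator acts by left multiplication while the other acts by right multiplication (via $x \mapsto S(h_{(i)} S(x)) = x\,S(h_{(i)})$, using $S^2 = \id{H}$), so they commute by associativity. Your explicit unwinding of the $\lrangle{\theta}$ notation and the remark about several shared edges are slightly more detailed than the paper's argument, but the idea is identical.
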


\begin{proof}
The vertex operators only have intersecting supports if $v_1$ and $v_2$ are the end-points of a common edge $e \in L^1$.
In this case, the intersection of their support is only the copy of $H$ that is associated with $e$.
On this tensor factor the two vertex operators in question act via left or right multiplication, depending on whether the edge $e$ is directed away from or towards a given vertex.
Since this differs for the two vertices, one of the operators acts via left multiplication and the other via right multiplication, the vertex operator commute with each other.
\end{proof}

 {In particular, putting $h_1 = h_2=\ell \in H$, and using this last lemma and the previous one, vertex projectors commute: $[V_{v_1},V_{v_2}]=0,$ for all $v_1,v_2 \in L^0.$}

\subsection{Edge operators
{of the model on $(A \xrightarrow{\partial} H, \lact)$}
}

Fix a pair $(\Sigma,L)$, as in Definition \ref{def:cell-decomposition}.
Let $e \in L^1$ be an edge.
With respect to the orientations of the edge $e$ and the surface $\Sigma$, we can define the \emph{plaquette $P \in L^2$ on the left side of the edge $e$} and the \emph{plaquette $Q \in L^2$ on the right side of $e$, {as we go from the starting vertex 
$\omega_e$ 
to the target vertex of $e$.}}

\begin{figure}[ht!]
 \labellist
\pinlabel $\small{v_P}$ at 40 91
\pinlabel $\small{v_Q}$ at 794 302
\pinlabel $\small{e}$ at 495 191
\pinlabel $\small{P}$ at 231 174
\pinlabel $\small{Q}$ at 661 182
\pinlabel $\small{v}$ at 495 122
\pinlabel $\small{e_1}$ at 91 58
\pinlabel $\small{e_2}$ at 224 30
\pinlabel $\small{e_3}$ at 388 76
\pinlabel $\small{d_1}$ at 800 220
\pinlabel $\small{d_2}$ at 825 99
\pinlabel $\small{d_3}$ at 750 58
\pinlabel $\small{d_4}$ at 555 86
\endlabellist
\centering
\includegraphics[scale=0.25]{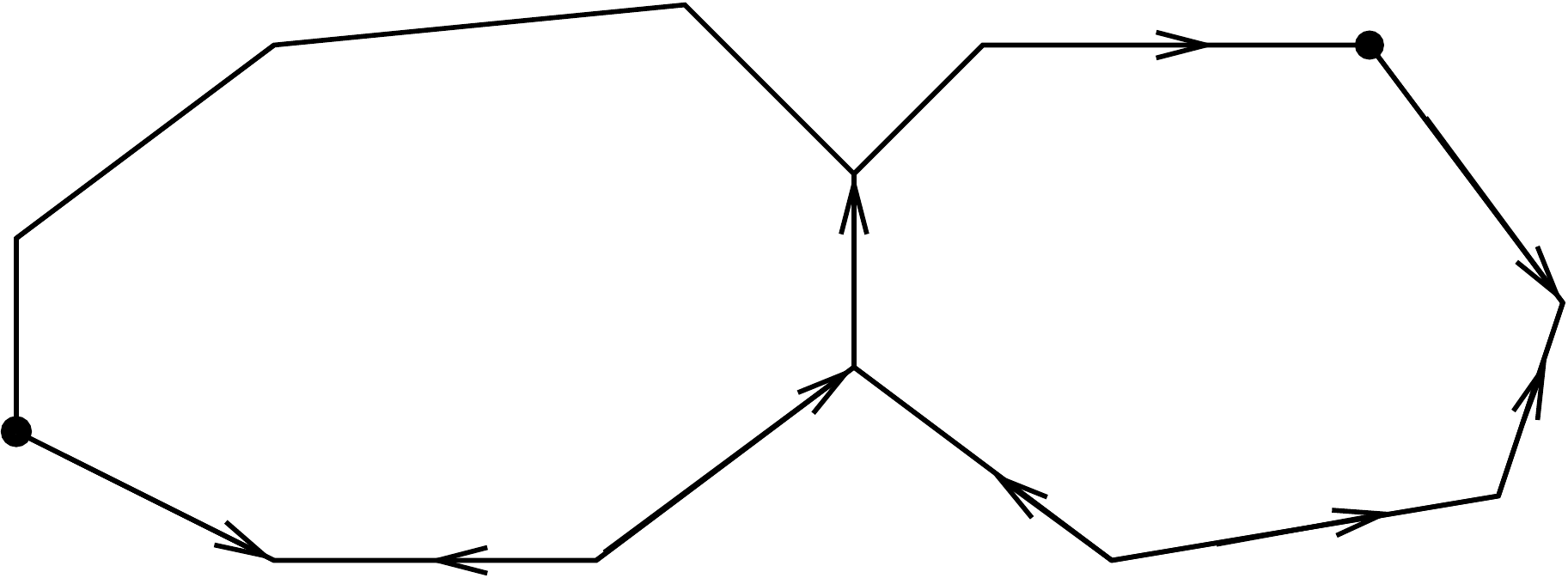}
\caption{{Conventions 
for 
edge operators $E^a_e$. Here $l=3$ and $(\theta_1,\theta_2,\theta_3)=(1,-1,1)$, whereas $r=4$ and $(\sigma_1,\sigma_2,\sigma_3,\sigma_4)=(1,-1,-1,1)$.}}
\label{fig:edge}
\end{figure}
We define now a family of edge operators acting on $\mathcal{H}_L$, whose support {consists of} the plaquettes $P$ and $Q$ and the set of the edges in the boundary of $P$ and $Q$.
These operators do not appear in the \HAKM.
For a Hopf crossed module $(A \xrightarrow{\partial} H, \lact) = (\CC E \xrightarrow{\partial} \CC G, \lact)$ induced by a crossed module of groups $(E \xrightarrow{\partial} G, \lact)$
they recover the edge gauge transformations in the \sHKM\ \cite{Higher_Kitaev} and in its fake-flat subspace the gauge spikes in the \HKM\ of \cite{companion}.

Denote by $v_P, v_Q \in L^0$ the base-points of $P$ and $Q$, respectively, and let $v \in L^0$ be the starting vertex of the edge $e$, i.e.\ the vertex from which $e$ points away.
Let $(e_1,\dots, e_\ell)$ be the (possibly empty) ordered set of edges connecting $v_P$ with $v$ in the boundary of $P$, starting from $v_P$ and going in counterclockwise direction (with respect to the orientation of $\Sigma$) around $P$. {(Another way to see this sequence of edges is as the unique path from $v_P$ to $v$ that does not pass through $e.$)}
The set is empty precisely when $v_P = v$.
For $i = 1,\dots,\ell$, let $\theta_i := +1$ if $e_i$ is oriented in counterclockwise direction around $P$, and let $\theta_i := -1$ otherwise. {These conventions should be clarified by Figure \ref{fig:edge}.}

Similarly, let $(d_1,\dots, d_r)$ be the (possibly empty) ordered set of edges connecting $v_Q$ with $v$ in the boundary of $Q$, starting from $v_Q$ and going in clockwise direction (with respect to the orientation of $\Sigma$) around $Q$.
The set is empty precisely when $v_Q = v$.
For $j = 1,\dots,r$, let $\sigma_j := +1$ if $d_j$ is oriented in clockwise direction around $Q$, and let $\sigma_j := -1$ otherwise.

\begin{definition}[Edge operator] \label{def:edge-operator}
For any $a \in A$, the \emph{edge operator $E_e^a$ on $\mathcal{H}_L$, based at the edge $e$}, is
\begin{align*}
E_e^a : \mathcal{H}_L &\lto \mathcal{H}_L, \\
v_e \ot v_{e_1} \ot\cdots\ot v_{e_\ell} \ot v_{d_1} \ot\cdots\ot v_{d_r} \ot X_P \ot X_Q &\lmapsto \partial a_{(3)} v_e \\
&\phantom{\lmapsto} \ot (v_{e_1})_{(1)} \ot\cdots\ot (v_{e_\ell})_{(1)} \\
&\phantom{\lmapsto} \ot (v_{d_1})_{(2)} \ot\cdots\ot (v_{d_r})_{(2)} \\
&\phantom{\lmapsto} \ot \big( (v_{e_1})_{(2)}^\lrangle{\theta_1} \cdots (v_{e_\ell})_{(2)}^\lrangle{\theta_\ell} \lact a_{(1)} \big) X_P \\
&\phantom{\lmapsto} \ot X_Q \big( (v_{d_1})_{(1)}^\lrangle{\sigma_1} \cdots (v_{d_r})_{(1)}^\lrangle{\sigma_r} \lact S ( a_{(2)})\big),
\end{align*}
where we use the notation {defined in \eqref{eq: <>}, and where}
\begin{align*}
&\phantom{xxx}v_e \ot v_{e_1} \ot\cdots\ot v_{e_\ell} \ot v_{d_1} \ot\cdots\ot v_{d_r} \ot X_P \ot X_Q \\
&\phantom{-----------}\in H_e \ot H_{e_1} \ot\cdots\ot H_{e_\ell} \ot H_{d_1} \ot\cdots\ot H_{d_r} \ot A_P \ot A_Q \\
&\phantom{-----------}= H^{\ot 1+\ell+r}\ot A^{\ot 2}
\end{align*}
are in the tensor factors of $\mathcal{H}_L$ associated with the edges $e$, $(e_1,\dots,e_\ell)$ and $(d_1,\dots, d_r)$, and plaquettes $P$ and $Q$, respectively.
It is implicitly understood that the edge operator is defined to act as the identity on all remaining tensor factors in $\mathcal{H}_L$.

\end{definition}

\begin{example}\label{eg:edge}
In the case of the configuration in Figure \ref{fig:edge}
the edge operator, $E^a_e$, is:
\begin{align*}
v_e &\ot \left (v_{e_1} \ot v_{e_2}\ot v_{e_3}\right)\ot \left (v_{d_1} \ot v_{d_2} \ot v_{d_3}\ot v_{d_4} \right )\ot \left( X_P \ot X_Q \right) \\
&\stackrel{E^a_e}{\longmapsto} \partial a_{(3)} v_e 
\ot\left ( (v_{e_1})_{(1)} \ot (v_{e_2})_{(1)} \ot (v_{e_3})_{(1)} \right)\ot \left ((v_{d_1})_{(2)} \ot (v_{d_2})_{(2)} \ot (v_{d_3})_{(2)} \ot (v_{d_4})_{(2)} \right) 
\\ & \qquad \ot  \left( \left ( (v_{e_1})_{(2)} \, S\big( (v_{e_2})_{(2)}\big) (v_{e_3})_{(2)}\right)\lact a_{(1)} \right) X_P \\ & \qquad  \ot  X_Q \left( (v_{d_1})_{(1)} S\big( (v_{d_2})_{(1)}\big )  S\big( (v_{d_3})_{(1)}\big ) (v_{d_4})_{(1)}\right)  \lact S ( a_{(2)}).
\end{align*}
\end{example}
We also note that, due to the Yetter-Drinfeld condition \eqref{eq:yetter-drinfeld-condition-trivial}, and using that $S$ is an anti-coalgebra-map, {with $S^2=\id{H}$}, 
     we can substitute any term $(v_{e_i})_{(1)}\otimes (v_{e_i})_{(2)}^\lrangle{\theta_i}$ by $(v_{e_i})_{(2)}\otimes (v_{e_i})_{(1)}^\lrangle{\theta_i}$ in the formula for the edge operator, and the same for the $v_{d_i}$.
    To see this note that if $v \in H$ and $a \in A$ then:
\begin{equation}\label{eq:compatibility<>}\begin{split}
 v_{(1)} \otimes S(v_{(2)}) \lact a &= S(S(v_{(1)})) \otimes S(v_{(2)}) \lact a=
  S((Sv)_{(2)}) \otimes S(v)_{(1)} \lact a\\&\stackrel{\text{Y.-D.}}{=} 
   S((Sv)_{(1)}) \otimes S(v)_{(2)} \lact a=
   S(S(v_{(2)})) \otimes S(v_{(1)}) \lact a\\&= v_{(2)} \otimes S(v_{(1)}) \lact a .
  \end{split}
    \end{equation}

As in the case of vertex operators we have:
\begin{lemma}\label{lem:edge-ops-is-a-rep}
Given an edge $e \in L^1$, the edge operators $E^a_e\colon \mathcal{H}_L \to \mathcal{H}_L$ form a representation of $A$ on $\mathcal{H}_L$. I.e., given any $a,b \in A$ we have:
$$E^b_e \circ E^a_e=E^{ba}_e.$$
\end{lemma}
In the proof below, and others, in the case of iterated use of Sweedler notation, we will e.g. write: $v_{(1,2)}$ instead of $(v_{(1)})_{(2)}$.
\begin{proof} (Sketch.)
The proof is quite clear from the explicit form of the edge operators, together with the fact that $A$ is a $H$-module algebra.
It is crucial to note that, applying \eqref{eq:compatibility<>}, given $v \in H$ and $a,b \in A$, we have:
\begin{equation}\label{calc: s act}
    S\big (v_{(1)} \big )\lact a \ot S\big (v_{(2)} \big )\lact b
    =\big (S( v)\big)_{(1)} \lact a \ot \big (S( v)\big)_{(2)} \lact b.
\end{equation}
As an example, which shows the type of calculations required, let us prove we have a representation in the case of the configuration in Figure \ref{fig:edge}, and (in order to simplify the calculations) where the base-point of $P$ is moved to $v$.
Applying $E^a_e$ followed by $E^b_e$ yields:
\begin{align*}
v_e \ot &\left (v_{d_1} \ot v_{d_2} \ot v_{d_3}\ot v_{d_4} \right )\ot \left( X_P \ot X_Q \right) \\
&\stackrel{E^a_e}{\lmapsto} \partial a_{(3)} v_e \ot 
 \left ((v_{d_1})_{(2)} \ot (v_{d_2})_{(2)} \ot (v_{d_3})_{(2)} \ot (v_{d_4})_{(2)} \right) 
\ot    a_{(1)}  X_P \\ & \qquad  \ot  X_Q \left( (v_{d_1})_{(1)} S\big( (v_{d_2})_{(1)}\big )  S\big( (v_{d_3})_{(1)}\big ) (v_{d_4})_{(1)}\right)  \lact S ( a_{(2)})\\
&\stackrel{E^b_e}{\lmapsto} 
\partial (b_{(3)} a_{(3)}) v_e \ot 
 \left ((v_{d_1})_{(2,2)} \ot (v_{d_2})_{(2,2)} \ot (v_{d_3})_{(2,2)} \ot (v_{d_4})_{(2,2)} \right) 
\ot  b_{(1)}  a_{(1)}  X_P \\ & \qquad  \ot  X_Q \left( (v_{d_1})_{(1)} S\big( (v_{d_2})_{(1)}\big )  S\big( (v_{d_3})_{(1)}\big ) (v_{d_4})_{(1)}\right)  \lact S ( a_{(2)}) \\ & \quad \quad \quad \quad\quad \quad \left (  (v_{d_1})_{(2,1)}  S\big((v_{d_2})_{(2,1)}\big )  S\big( (v_{d_3})_{(2,1)}\big ) (v_{d_4})_{(2,1)} \right)  \lact S ( b_{(2)})\\
&=\partial (b_{(3)} a_{(3)}) v_e \ot 
 \left ((v_{d_1})_{(2)} \ot (v_{d_2})_{(2)} \ot (v_{d_3})_{(2)} \ot (v_{d_4})_{(2)} \right) 
\ot  b_{(1)}  a_{(1)}  X_P \\ & \qquad  \ot  X_Q \left( (v_{d_1})_{(1,1)} S\big( (v_{d_2})_{(1,1)}\big )  S\big( (v_{d_3})_{(1,1)}\big ) (v_{d_4})_{(1,1)}\right)  \lact S ( a_{(2)}) \\ & \quad \quad \quad \quad\quad \quad \left (  (v_{d_1})_{(1,2)} S\big((v_{d_2})_{(1,2)}\big )  S\big( (v_{d_3})_{(1,2)}\big ) (v_{d_4})_{(1,2)} \right)  \lact S ( b_{(2)})\\
&=\partial (b_{(3)} a_{(3)}) v_e \ot 
 \left ((v_{d_1})_{(2)} \ot (v_{d_2})_{(2)} \ot (v_{d_3})_{(2)} \ot (v_{d_4})_{(2)} \right) 
\ot  b_{(1)}  a_{(1)}  X_P \\ & \qquad  \ot  X_Q \left( (v_{d_1})_{(1)} S\big( (v_{d_2})_{(1)}\big )  S\big( (v_{d_3})_{(1)}\big ) (v_{d_4})_{(1)}\right)_{(1)}  \lact S (b_{(2)} a_{(2)})\\
&=E_e^{ba} \big(v_e \ot \left (v_{d_1} \ot v_{d_2} \ot v_{d_3}\ot v_{d_4} \right )\ot \left( X_P \ot X_Q \right) \big)
.\end{align*}
In the penultimate step we used the calculation in \eqref{calc: s act}, and the fact that $A$ is an $H$-module algebra.\end{proof}

\begin{lemma}\label{edge-ops-adequate-commute}Let $\Sigma$ be an oriented surface with a cell decomposition $L$. 
\begin{enumerate}\setlength\itemsep{0em}
    \item Suppose that $e$ and $e'$ are  
    edges 
    not
    on the boundary of the same plaquette. Then, given any $a,a' \in A$ we have:
    $$E^a_e \circ E^{a'}_{e'}= E^{a'}_{e'} \circ E^a_e.$$
    \item Suppose that $e$ and $e'$ are in the boundary of the same plaquette $P$, and moreover that the pair $(e,e')$ is adequate, Definition \ref{def:ad-cell-decomposition}. Given any cocommutative elements, $a$ and $a'$ in $A$, we have:
    $$E^a_e \circ E^{a'}_{e'}= E^{a'}_{e'} \circ E^a_e.$$
\end{enumerate}    
\end{lemma}
The proof of this lemma is deferred to \S \ref{sec:commutation_edge}. The reason why we must restrict  to adequate pairs, $(e,e')$, in the second item  is that otherwise the application of $E_e^a$ affects the value of the edges between $v_P$ and the initial vertex of $e'$ in an uncontrollable way, and analogously for the application of $E_{e'}^{a'}$.

Analogously to the vertex projectors, we can now define the edge projectors by acting with the Haar integral of the Hopf algebra $A$.
This gives an idempotent endomorphism, because the Haar integral is idempotent and the edge operators are a representation of $A$:
\begin{definition}[Edge projector]\label{def:edge_proj}
Let $\Lambda \in A$ be the Haar integral of $A$.
Then we define the \emph{edge projector, $E_e : \mathcal{H}_L \lto \mathcal{H}_L$, based at the edge $e$} as
\[ E_e := E_e^\Lambda . \]
\end{definition}
Given the previous two results, we can see that if $L$ is an adequate cell decomposition of $\Sigma$ then all edge projectors commute. This is in general not the case for non-adequate cell decompositions.

\subsection{Plaquette operators
{of the model on $(A \xrightarrow{\partial} H, \lact)$}
}

Fix a pair $(\Sigma,L)$, as in Definition \ref{def:cell-decomposition}.
Let $P \in L^2$ be a plaquette and let $v_P \in L^0$ be its base-point.

We define a family of plaquette operators acting on $\mathcal{H}_L$, whose support is $P$ and set of the edges in the boundary of $P$. 
{The definition of the plaquette operators in our model follows that of the plaquette operators in the \HAKM\ \cite{buerschaper-et-al}, and reduces to the latter when $A = \CC$ (up to minor differences in conventions)}. {The starting point is the `1-holonomy operator' in \cite[Equation (5)]{Higher_Kitaev}.}

Denote by $(e_1, \dots, e_n)$ the ordered set of edges in the boundary of $P$, starting at the base-point $v_P$ and going around the boundary in a counterclockwise way with respect to the orientation of $\Sigma$.
For $i = 1, \dots, n$, let $\theta_i = +1$ if the edge $e_i$ is oriented counterclockwise along the boundary of $P$ 
(i.e. if $e_i$ is the first edge traversed proceeding counterclockwise along $\partial P$ from $\omega_{e_i}$);
and let $\theta_i = -1$ otherwise. Figure \ref{fig:Plaquette} should clarify these conventions.

\begin{figure}[ht!]
 \labellist
\pinlabel $\small{v_P}$ at -28 80
\pinlabel $\small{P}$ at 231 100
\pinlabel $\small{e_1}$ at 90 59
\pinlabel $\small{e_2}$ at 224 28
\pinlabel $\small{e_3}$ at 390 73
\pinlabel $\small{e_4}$ at 404 172
\pinlabel $\small{e_5}$ at 110 157
\endlabellist
\centering
\includegraphics[scale=0.25]{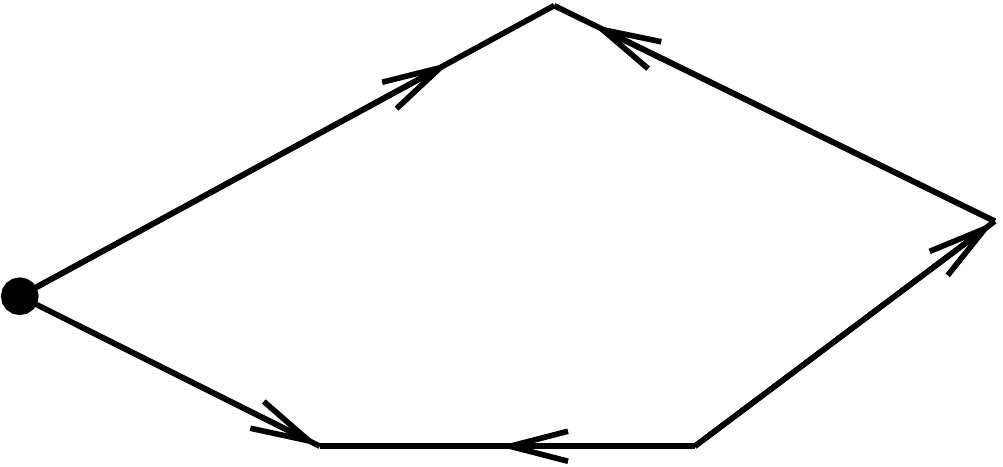}
\caption{{Conventions for plaquette operators. Here $n=5$ and $(\theta_1,\theta_2,\theta_3,\theta_4,\theta_5)=(1,-1,1,1,-1)$.}}
\label{fig:Plaquette}
\end{figure}

{Cf. the very similar case of plaquette operators in the \HAKM\ in \cite[Definition 2.3]{balsam-kirillov} and \cite[Definition 1]{buerschaper-et-al}, our conventions are opposite. In \textit{loc cit} only the formula for the plaquette operators when all the edge orientations match the direction in which we go around the plaquette 
is presented. It is however understood \cite[\S 3.2]{buerschaper-et-al} that if an edge does not have the same orientation we first apply $S$ to the corresponding  {edge tensor factor}.}

\begin{definition} \label{def:plaquette-operator}
For any $\varphi \in H^*$, the \emph{plaquette operator $F_P^\varphi$ on $\mathcal{H}_L$ based at the plaquette $P$} is
\begin{align*}
F_P^\varphi : \mathcal{H}_L &\lto \mathcal{H}_L, \\
v_{e_1} \ot\cdots\ot v_{e_n} \ot X_P &\lmapsto \big( v_{e_1}^\lrangle{\theta_1}\big)_{(1)}^\lrangle{\theta_1} \ot\cdots\ot \big( v_{e_n}^\lrangle{\theta_n}\big)_{(1)}^\lrangle{\theta_n} \ot (X_P)_{(2)} \\
&\phantom{xxxxx} \varphi\left( \big(v_{e_1}^\lrangle{\theta_1}\big)_{(2)} \cdots \big(v_{e_n}^\lrangle{\theta_n}\big)_{(2)}\,\, ( S \circ \partial) \big( (X_P)_{(1)}\big) \right),
\end{align*}
where we use the notation {in \eqref{eq: <>},} and 
where $v_{e_1} \ot\cdots\ot v_{e_n} \ot X_P \in H_{e_1} \ot\cdots\ot H_{e_n} \ot A_P = H^{\ot n}\ot A$ are in the tensor factors of $\mathcal{H}_L$ associated with the edges $(e_1,\dots,e_n)$ and the plaquette $P$.
\end{definition}
\begin{remark}
For concrete values of the edge orientations $\theta_j$ the formula for the plaquette operators simplifies significantly, as
the antipode is involutive, $S^2=\id{H}$, and an anti-coalgebra map. Explicitly:
$$\big( v_{e_1}^\lrangle{\theta_1}\big)_{(1)}^\lrangle{\theta_1}\otimes \big(v_{e_1}^\lrangle{\theta_1}\big)_{(2)}
 =\begin{cases} ( v_{e_1})_{(1)} \otimes ( v_{e_1})_{(2)}, \textrm{ if $\theta_1=1$,}\\
  (v_{e_1})_{(2)} \otimes S\big((v_{e_1})_{(1)}\big),  \textrm{ if $\theta_1=-1$.}
 \end{cases}$$
 The idea is that on the tensor factors $H_{e_1} \ot\cdots\ot H_{e_n}$ the plaquette operator acts {via the} left or right co-multiplication depending on the orientation of the edge $e_j$ relative to the plaquette $P$. The formula in Definition \ref{def:plaquette-operator} captures this in one closed form for all possible configurations of edge orientations. 
\end{remark}
\begin{example}In the case of the example in Figure \ref{fig:Plaquette}, the plaquette operator  $F_P^\varphi$ is:
\begin{align*}
\big (v_{e_1} &\ot v_{e_2} \ot v_{e_3} \ot v_{e_4} \ot v_{e_5}\big) \ot X_P \\
& \quad \stackrel{F_P^\varphi}{\lmapsto} 
(v_{e_1})_{(1)} \ot (v_{e_2})_{(2)} \ot (v_{e_3})_{(1)} \ot (v_{e_4})_{(1)} \ot (v_{e_5})_{(2)} \ot (X_P)_{(2)} \\
& \quad \quad\quad\quad\quad\quad \varphi\Big ( (v_{e_1})_{(2)}\,\, S\big ( (v_{e_2})_{(1)}\big) \,\,  (v_{e_3})_{(2)} \,\, (v_{e_4})_{(2)} \, S\big ( (v_{e_5})_{(1)} \big)  \, (S\circ \partial)\big ( (X_P)_{(1)} \big) \Big) . 
\end{align*}
\end{example}
\begin{lemma}\label{lem:plaq-ops-a-rep}
Let $P \in L^2$ be a plaquette.
Then the plaquette operators $(F_P^\varphi)_{\varphi\in H^*}$ define a representation of $H^*$ on $\mathcal{H}_L$, that is:
\[ F_P^{\varphi'} \circ F_P^{\varphi} = F_P^{\varphi'\cdot \varphi} \quad\text{ for all } \varphi, \varphi' \in H^* . \]\end{lemma}
\begin{proof}(Sketch.)
We show the proof in the particular case of the configuration in Figure \ref{fig:Plaquette}, which shows the type of calculations required. Applying  $F_P^{\varphi}$, and then  $F_P^{\varphi'}$ we get:
\begin{align*}
v_{e_1} &\ot v_{e_2} \ot v_{e_3} \ot v_{e_4} \ot v_{e_5} \ot X_P \\
& \quad \stackrel{F^\varphi_P}{\lmapsto} 
(v_{e_1})_{(1)} \ot (v_{e_2})_{(2)} \ot (v_{e_3})_{(1)} \ot (v_{e_4})_{(1)} \ot (v_{e_5})_{(2)}  \ot (X_P)_{(2)} \\
& \quad \quad \quad\quad \quad \varphi\Big ( (v_{e_1})_{(2)}\,\, S\big ( (v_{e_2})_{(1)}\big) \,\,  (v_{e_3})_{(2)} \,\, (v_{e_4})_{(2)} \, S\big ( (v_{e_5})_{(1)} \big)  \, (S\circ \partial)\big ( (X_P)_{(1)} \big) \Big)\\
& \quad \smash{\stackrel{F^{\varphi'}_P}{\lmapsto}}\,\, 
(v_{e_1})_{(1,1)} \ot (v_{e_2})_{(2,2)} \ot (v_{e_3})_{(1,1)} \ot (v_{e_4})_{(1,1)} \ot (v_{e_5})_{(2,2)}  \ot (X_P)_{(2,2)} \\
& \quad \quad \quad\quad \quad \varphi\Big ( (v_{e_1})_{(2)}\,\, S\big ( (v_{e_2})_{(1)}\big) \,\,  (v_{e_3})_{(2)} \,\, (v_{e_4})_{(2)} \, S\big ( (v_{e_5})_{(1)} \big)  \, (S\circ \partial)\big ( (X_P)_{(1)} \big) \Big)\\
& \qquad \qquad \quad\quad \quad \varphi'\Big ( (v_{e_1})_{(1,2)}\,\, S\big ( (v_{e_2})_{(2,1)}\big) \,\,  (v_{e_3})_{(1,2)} \,\, (v_{e_4})_{(1,2)} \, S\big ( (v_{e_5})_{(2,1)} \big)  \, (S\circ \partial)\big ( (X_P)_{(2,1)} \big) \Big)\\
& \quad  = (v_{e_1})_{(1)} \ot (v_{e_2})_{(2)} \ot (v_{e_3})_{(1)} \ot (v_{e_4})_{(1)} \ot (v_{e_5})_{(2)}  \ot (X_P)_{(2)} \\
& \quad \quad \quad\quad \quad \varphi\Big ( (v_{e_1})_{(2,2)}\,\, S\big ( (v_{e_2})_{(1,1)}\big) \,\,  (v_{e_3})_{(2,2)} \,\, (v_{e_4})_{(2,2)} \, S\big ( (v_{e_5})_{(1,1)} \big)  \, (S\circ \partial)\big ( (X_P)_{(1,1)} \big) \Big)\\
& \qquad \qquad \quad\quad \quad \varphi'\Big ( (v_{e_1})_{(2,1)}\,\, S\big ( (v_{e_2})_{(1,2)}\big) \,\,  (v_{e_3})_{(2,1)} \,\, (v_{e_4})_{(2,1)} \, S\big ( (v_{e_5})_{(1,2)} \big)  \, (S\circ \partial)\big ( (X_P)_{(1,2)} \big) \Big) \\
& \quad  = (v_{e_1})_{(1)} \ot (v_{e_2})_{(2)} \ot (v_{e_3})_{(1)} \ot (v_{e_4})_{(1)} \ot (v_{e_5})_{(2)}  \ot (X_P)_{(2)} \\
& \quad \quad \quad \varphi\Big ( (v_{e_1})_{(2,2)}\,\, S\big ( (v_{e_2})_{(1)}\big)_{(2)} \,\,  (v_{e_3})_{(2,2)} \,\, (v_{e_4})_{(2,2)} \, S\big ( (v_{e_5})_{(1)} \big)_{(2)}  \, \big ((S\circ \partial)\big ( (X_P)_{(1)} \big) \big)_{(2)}\Big)\\
& \qquad \qquad  \quad \varphi'\Big ( (v_{e_1})_{(2,1)}\,\, S\big ( (v_{e_2})_{(1)}\big)_{(1)} \,\,  (v_{e_3})_{(2,1)} \,\, (v_{e_4})_{(2,1)} \, S\big ( (v_{e_5})_{(1,2)} \big)_{(1)}  \, \big ((S\circ \partial)\big ( (X_P)_{(1)} \big) \big)_{(1)}\Big). 
\end{align*}Since $A$ is an $H$-module algebra, the last formula is exactly the result of the action by $F^{\varphi'\varphi}_P$.
\end{proof}

Since $H$ is semisimple, then so is $H^*$, as we recalled in Theorem \ref{prop:haar-integral0}, note that we are working only with finite-dimensional Hopf algebras over $\CC$. Hence, we again obtain idempotent endomorphisms of $\mathcal{H}_L$ by acting with the Haar integral of $H^*$ via the plaquette operators:
\begin{definition}\label{def:plaq_proj}
Let $\lambda \in H^*$ be the Haar integral of $H^*$.
Then we define the \emph{plaquette projector $F_P : \mathcal{H}_L \lto \mathcal{H}_L$ based at the plaquette} $P$ as
\[ F_P := F_P^{\lambda} . \]
\end{definition}

{For the case of a crossed module of Hopf algebras  induced by a crossed module of groups $(E \xrightarrow{\partial} G, \lact)$, the plaquette projectors recover the plaquette  projectors in \cite[Equation (20)]{Higher_Kitaev}. Those latter plaquette projectors choose the 2-gauge configurations that are fake flat around a plaquette. Plaquette projectors do not arise in the \HKM\ \cite{companion}, due to the fake-flatness being, 
 {by inception},  imposed in the 2-gauge configurations.}

\begin{lemma} \label{lem:plaquette-operators-commute}
Let $P_1 \in L^2$ and $P_2 \in L^2$ be any two distinct plaquettes.
Then
\[ F_{P_1}^{\varphi_1} \circ F_{P_2}^{\varphi_2} = F_{P_2}^{\varphi_2} \circ F_{P_1}^{\varphi_1}, \quad \text{ for all } \varphi_1, \varphi_2 \in H^* . \]
\end{lemma}
\begin{proof}
If $P_1$ and $P_2$ are not adjacent to each other, {in other words if they do not have an edge in common,} then the corresponding plaquette operators $F_{P_1}^{\varphi_1}$ and $F_{P_2}^{\varphi_2}$, for any $\varphi_1, \varphi_2 \in H^*$, have disjoint support in $\mathcal{H}_L$ and therefore they commute.

Let us assume that $P_1$ and $P_2$ are adjacent to each other. Given the restrictions we put on $L$  there exists one single edge, $e$, which lie in both the boundary of $P_1$ and the boundary of $P_2$, as in Figure \ref{fig:Plaquette-ops-commute}.
\begin{figure}[ht!]
 \labellist
\pinlabel $\small{v_{P_1}}$ at -25 80
\pinlabel $\small{P_1}$ at 231 100
\pinlabel $\small{e}$ at 395 75
\pinlabel $\small{P_2}$ at 510 60
\pinlabel $\small{v_{P_2}}$ at 488 141
\endlabellist
\centering
\includegraphics[scale=0.25]{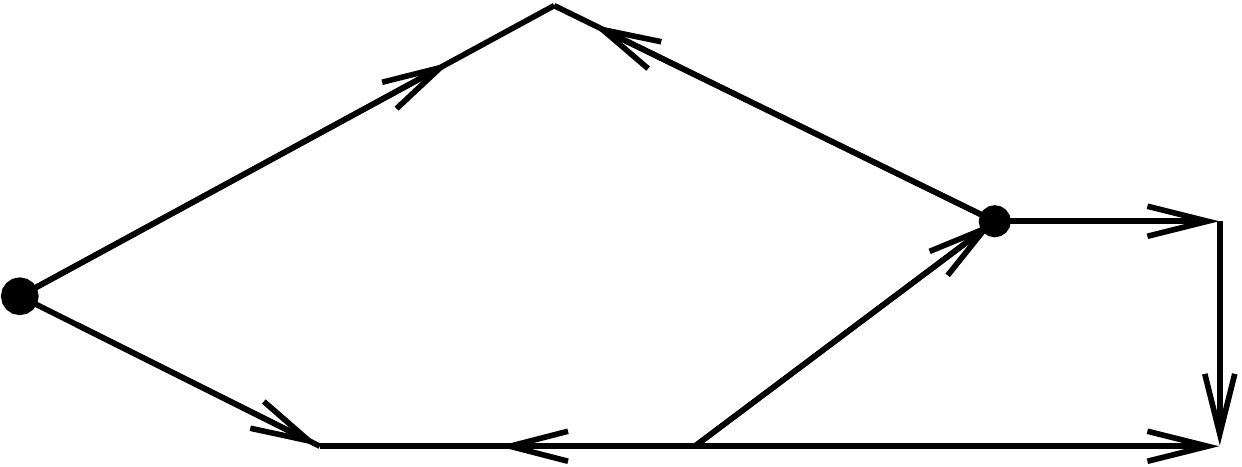}
\caption{{Two plaquettes, $P_1$ and $P_2$, intersecting along an edge $e$.}}
\label{fig:Plaquette-ops-commute}
\end{figure}
Given that $e$ has opposite orientations relative to $P_1$ and to $P_2$, the corresponding operators will act via opposite-sided co-multiplication (that is, one of them left and the other right) of the Hopf algebra $H$ assigned to $e$.
Therefore, due to coassociativity the operators  $F_{P_1}^{\varphi_1}$ and $F_{P_2}^{\varphi_2}$  commute.
\end{proof}

\noindent {In particular, putting $\varphi_1 = \varphi_1=\lambda \in H^*$, and using this last lemma together with the previous one, plaquette projectors commute: $[F_{P_1},F_{P_2}]=0,$ for each $P_1,P_2 \in L^2.$}

\subsection{Edge orientation reversal and plaquette base-point shift}

The total state space $\mathcal{H}_L$ of the \HHK\ model defined in this section does not depend on the edge orientations or plaquette base-points of the cell decomposition $L$, but the vertex, edge and plaquette operators defined on it do.

Now we define linear maps on $\mathcal{H}_L$ which represent the reversal of the orientation of an edge and the moving around of the base-point of a plaquette.
{By showing that these maps commute with some of the  vertex, edge and plaquette operators, in a suitable way,} our calculations in the Appendix in Subsection \ref{sec:proofs-of-commutation-relations} showing the commutation relations of the latter operators will simplify significantly, because it will then suffice to carry out the computations for only the most convenient configurations of edge orientations and base-points.

The edge orientation reversals and plaquette base-point shifts can also be of independent interest, towards showing the independence of the ground-state space of the model of the cell decomposition, a problem which we leave for future work.

Consider the total state space $\mathcal{H}_L = H^{\ot L^1} \ot A^{\ot L^2}$ of the \HHK\ model.
\begin{definition}
For any edge $e \in L^1$, we define the \emph{edge orientation reversal,}
\[ R_e : \mathcal{H}_L \lto \mathcal{H}_L, \]
to act as the antipode $S$ of $H$ on the tensor factor in $\mathcal{H}_L$ associated with the edge $e$ and as the identity on all remaining tensor factors. 
Locally we hence have, showing an edge $e$ connecting vertices $a$ and $b$, $$\xymatrix{&a  \ar[r]_{v_e} &b} \qquad \smash{\stackrel{R_e}{\longmapsto}} \xymatrix{&  a \ar@{<-}[r]_{S(v_e)} & b}  .$$
\end{definition}
Due to the involutivity of the antipode, the edge orientation reversal is an isomorphism and, more specifically, $R_e^2 = \id{\mathcal{H}_L}$. 

\begin{definition}
For any plaquette $P \in L^2$, we define the \emph{(positive) base-point shift},
\[ T^+_P : \mathcal{H}_L \lto \mathcal{H}_L, \]
as follows.
Let $A$ be the tensor factor in $\mathcal{H}_L$ associated with the plaquette $P$ 
and let $H$ be the tensor factor in $\mathcal{H}_L$ associated with the edge in 
the boundary of $P$ incident to the base-point of $P$ in counterclockwise direction around $P$.
Then define $T^+_P$ to act as 
\begin{align*}
H \ot A \ni v \ot X &\stackrel{T_P^+}{\lmapsto} v_{(1)} \ot v_{(2)}^\lrangle{-\theta} \lact X =v_{(2)} \ot v_{(1)}^\lrangle{-\theta} \lact X.
 \end{align*}
Here $\theta \in \{ +1, -1 \}$ is $+1$ if the edge is oriented counterclockwise around $P$ and otherwise $-1$. {We have used Equations \eqref{eq:yetter-drinfeld-condition-trivial}} and \eqref{eq:compatibility<>}.
{The new base-point, $v'_P$ of $P$, is the next vertex of $P$ in counterclockwise order around $P$.}

We analogously define the \emph{(negative) base-point shift}
\[ T^-_P : \mathcal{H}_L \lto \mathcal{H}_L \] to act as:
\begin{align*}
H \ot A \ni v \ot X &\stackrel{T^-_P}{\lmapsto} v_{(1)} \ot v_{(2)}^\lrangle{\sigma} \lact X= v_{(2)} \ot v_{(1)}^\lrangle{\sigma} \lact X,
\end{align*} where $H$ be the tensor factor in $\mathcal{H}_L$ associated with the edge in the boundary of $P$ incident to the base-point of $P$ in \emph{clockwise} direction around $P$. Again, $\sigma \in \{ +1, -1 \}$ is $+1$ if the edge is oriented counterclockwise around $P$ and otherwise $-1$. The new base-point, $v'_P$ of $P$ is the next vertex of $P$ in clockwise order around $P$.

We call the edge from $v_P$ to $v'_P$  the \emph{edge along which the plaquette base-point shift was performed.}
\end{definition}
For example, for a square plaquette:
$$\vcenter{\xymatrix@C=30pt{ \bullet \ar@{-}[d]  \ar@{-}[r]  &\bullet \ar@{-}[d]  \\ v_P\ar[r]_{v} \ar@{}[ur]|X&   \bullet }}  \stackrel{T_P^+}{\lmapsto} \vcenter{\xymatrix@C=30pt{ \bullet \ar@{-}[d]  \ar@{-}[r]  &\bullet \ar@{-}[d]  \\ \bullet \ar[r]_{v_{(1)}} \ar@{}[ur]|{S(v_{(2)})\lact X}&   v'_P }}  =  \vcenter{\xymatrix@C=30pt{ \bullet \ar@{-}[d]  \ar@{-}[r]  &\bullet \ar@{-}[d]  \\ \bullet \ar[r]_{v_{(2)}} \ar@{}[ur]|{S(v_{(1)})\lact X}&   v'_P }} \textrm{ and } \vcenter{\xymatrix@C=30pt{ \bullet \ar@{-}[d]  \ar@{-}[r]  &\bullet \ar@{-}[d]  \\ v_P\ar@{<-}[r]_{v} \ar@{}[ur]|X&   \bullet }}  \stackrel{T_P^+}{\lmapsto}  \vcenter{\xymatrix@C=30pt{ \bullet \ar@{-}[d]  \ar@{-}[r]  &\bullet \ar@{-}[d]  \\ \bullet \ar@{<-}[r]_{v_{(1)}} \ar@{}[ur]|{v_{(2)}\lact X}&   v'_P }}  =  \vcenter{\xymatrix@C=30pt{ \bullet \ar@{-}[d]  \ar@{-}[r]  &\bullet \ar@{-}[d]  \\ \bullet \ar@{<-}[r]_{v_{(2)}} \ar@{}[ur]|{v_{(1)}\lact X}&   v'_P }}   $$
$$\vcenter{\xymatrix@C=30pt{ \bullet \ar@{-}[d]  \ar@{-}[r]  &\bullet \ar@{-}[d]  \\ \bullet \ar[r]_{v} \ar@{}[ur]|X&   v_P }}  \stackrel{T^-_P}{\lmapsto}  \vcenter{\xymatrix@C=30pt{ \bullet \ar@{-}[d]  \ar@{-}[r]  &\bullet \ar@{-}[d]  \\ v'_P \ar[r]_{v_{(1)}} \ar@{}[ur]|{v_{(2)}\lact X}&   \bullet }} =   \vcenter{\xymatrix@C=30pt{ \bullet \ar@{-}[d]  \ar@{-}[r]  &\bullet \ar@{-}[d]  \\ v'_P \ar[r]_{v_{(2)}} \ar@{}[ur]|{v_{(1)}\lact X}&   \bullet }}  \textrm{ and } 
\vcenter{\xymatrix@C=30pt{ \bullet \ar@{-}[d]  \ar@{-}[r]  &\bullet \ar@{-}[d]  \\ \bullet \ar@{<-}[r]_{v} \ar@{}[ur]|X&   v_P }}  \stackrel{T^-_P}{\lmapsto}  \vcenter{\xymatrix@C=30pt{ \bullet \ar@{-}[d]  \ar@{-}[r]  &\bullet \ar@{-}[d]  \\ v'_P \ar@{<-}[r]_{v_{(1)}} \ar@{}[ur]|{S(v_{(2)})\lact X}&   \bullet }}  =   \vcenter{\xymatrix@C=30pt{ \bullet \ar@{-}[d]  \ar@{-}[r]  &\bullet \ar@{-}[d]  \\ v'_P \ar@{<-}[r]_{v_{(2)}} \ar@{}[ur]|{S(v_{(1)})\lact X}&   \bullet }} $$
\begin{remark}\label{rem:shifts_reverse}
Observe that the base-point shifts are isomorphisms and that, more specifically, if $P'$ has the same underlying plaquette as $P$ but its base-point is shifted once in counterclockwise direction around $P$, then $T^-_{P'} = (T^+_P)^{-1}$.
\end{remark}
\begin{remark}
{Note that moving the base-point of a plaquette $P$ all the way around its boundary, coming back to the initial base-point, does not in general induce the identity on $\Hil_L$. This operation however restricts to the identity on a certain subspace $\Hil_L^{\text{ff}}$ (the \emph{fake-flat subspace)} of $\Hil_L$, as shown in Lemma \ref{lem:base-point-all-around} in the Appendix.}
\end{remark}

The following very useful result holds. Most calculations are quite trivial. We shall show examples of the more involved ones in  Subsection \ref{sec:proofs-base-point-shifts-orientation-reversals}, in the Appendix.

\begin{proposition}\label{prop: commut-vertex-ops-shifts_etc}
Let $\Sigma$ be an oriented surface with cell decomposition $L$.
\begin{enumerate}\setlength\itemsep{0em}

\item {Base-point shifts commute with edge orientation reversals.} 

\item Vertex operators commute with edge orientation reversals.
\item  Edge orientation reversals along an edge commute with edge operators based on a different edge.\\
\textbf{NB:} the orientation of the edge in which we apply the edge operator affects the conventions for this edge operator in such a way that we cannot expect  edge operators to commute with reversal of the orientation of the same edge.  
\item  Plaquette operators commute with edge orientation reversals.
\item \label{it:bps-vertex} Base-point shifts commute with vertex operators.
\item \label{it:bps-edge} {Base-point shifts and  edge operators commute, as long as the base-point shift is not along the same edge in which we apply an edge operator}
\item {Base-point shifts on a plaquete commute with plaquette operators on a different plaquette. Base point-shifts commute with plaquette operators $F^\phi_P$, on the same plaquette, if $\phi\in H^*$ is cocommutative. In particular, base-point shifts  commute with plaquette projectors.}
\end{enumerate}
\end{proposition}
\begin{proof}(Sketch.)
We deal with each point separately.
\begin{enumerate}\setlength\itemsep{0em}
    \item   This follows at once from \eqref{eq:yetter-drinfeld-condition-trivial}, and with the fact that $S$ is anti-coalgebra map.
    \item \label{it:vertex-eor} This follows at once from the fact that $S$ is an anti-algebra map.
    \item \label{it:edge-eor} This follows from the explicit formula for the edge operators, combined with the fact that  $S$ is anti-coalgebra map. The Yetter-Drinfeld relations \eqref{eq:yetter-drinfeld-condition-trivial} also play a crucial role.
    \item This again follows from the  fact that  $S$ is anti-coalgebra map.
    \item The proof is  more involved, and can be found in \S \ref{subs:basepoint-shifs and vertex ops}. Lemma \ref{lem:yetter-drinfeld-for-many-factors} plays a crucial role.
    \item The proof is in \S \ref{sec: commu_edge_shift}.    
    \item This is the most laborious proof, and a comprehensive discussion is in  \S\ref{sec:shift-plaquettes}.  The tricky case is when the base-point shift happens in the same plaquette where we apply the plaquette operator.
    \end{enumerate}
\end{proof}
\vspace{-15px}

\subsection{Commutation relations between vertex, edge and plaquette operators}

Having defined all operators, we can now state some of their commutation relations, which we prove in the Appendix. {The following generalises \cite[Theorem 2.4]{balsam-kirillov}.} We do not require $L$ to be adequate here.
\begin{proposition} \label{prop:local-operator-algebra}
Let $\Sigma$ be a compact oriented surface with cell decomposition $L$, and let $(A \xrightarrow{\partial} H)$ be a crossed module of finite-dimensional semisimple Hopf algebras.
Then the associated vertex, edge and plaquette operators, on $\mathcal{H}_L = H^{\ot L^1} \ot A^{\ot L^2}$, enjoy the following relations:
\begin{enumerate}
    \item Let $e$ be any edge of $L$ and $v$ be any vertex of $L$. For all $h \in H$ and $a \in A$ we have:
\begin{enumerate}
\item[1.a)] $E_e^{h_{(1)}\lact a} \circ V_{v,P}^{h_{(2)}} = V_{v,P}^h \circ E_e^a$, if $v$ is the starting vertex of $e$,
\item[1.b)] $E_e^a \circ V_{v,P}^h = V_{v,P}^h \circ E_e^a$, if $v$ is not the starting vertex of $e$.
\end{enumerate}
\item Let $P\in L^2$ be a plaquette and let $e \in L^1$ be an edge.
\begin{enumerate}
\item[2.a)] If $e$ is in the boundary of $P$, then
\[ F_P^{\varphi} \circ E_e^{a} = E_e^a \circ F_P^\varphi,
\quad\text{ for all cocommutative } a \in A, {\textrm{and cocommutative }} \varphi \in H^*. \]
\item[2.b)] If $e$ is not in the boundary of $P$, then
\[ F_P^{\varphi} \circ E_e^{a} = E_e^a \circ F_P^\varphi,
\quad\text{ for all } a \in A, \varphi \in H^*. \]
\end{enumerate}
\item Let $P \in L^2$ be a plaquette and let $v \in L^0$ be a vertex.
\begin{enumerate}
\item[3.a)] If $v$ is in the boundary of $P$, then
\[ F_P^{\varphi} \circ V_{v,P'}^h = V_{v,P'}^h \circ F_P^\varphi,
\quad\text{ for all cocommutative } h \in H, \textrm{ and cocommutative } \varphi \in H^*, \]
where $P' \in L^2$ may be any plaquette with base-point $v$.
\item[3.b)] 
If $v$ is not in the boundary of $P$, then
\[ F_P^{\varphi} \circ V_{v,P'}^h = V_{v,P'}^h \circ F_P^\varphi,
\quad\text{ for all } h \in H, \varphi \in H^* . \]
\end{enumerate}
\end{enumerate}
\end{proposition}
\begin{proof}
See Lemmas \ref{lem:edge-operators-and-vertex-operators}, \ref{lem:edge-and-plaquette-operators-commute} and \ref{lem:plaquette-operators-and-vertex-operators} in the Appendix.
\end{proof}

The commutation relations between plaquette operators, and vertex and edge operators become considerably more complicated outside the cocommutative case, and a general formula would become quite involved. Below we show a sample, for a very particular configuration of the base-point of the plaquette. It is not possible to reduce all other cases to this one, outside the cocommutative case, in that plaquette operators, in general, only commute with base-point shift in the cocommutative case.
\begin{proposition}\label{prop:local-operator-algebra-1}Let $P \in L^2$ be a plaquette with base-point $v \in L^0$ and let $e \in L^1$ be an edge in the boundary of $P$, 
with starting vertex $v$. We have:
\begin{enumerate}
    \item  For all $a \in A$, and  $\varphi \in H^*,$ then:
    \begin{enumerate}
        \item[1.a)] if $e$ is oriented counterclockwise: 
$$F_P^{\varphi (S \partial a_{(3)} \cdot ? \cdot \partial a_{(1)} )} \circ E_e^{a_{(2)}} = E_e^a \circ F_P^\varphi,$$
     \item[1.b)] If $e$ is oriented clockwise then:
     $$ [E_e^a , F_P^\varphi]=0.$$
     \end{enumerate}
\item For all $h \in H$ and $\varphi \in H^*$ we have:
$$F_P^{\varphi(S h_{(3)} \cdot ? \cdot h_{(1)})} \circ V_{v,P}^{h_{(2)}} = V_{v,P}^h \circ F_P^\varphi.$$
\end{enumerate}
Here, given $g,h \in H$,  we define $\varphi(S (g) \cdot ? \cdot h)\colon H \to \CC$  by $x\mapsto \varphi(S (g)) \cdot x\cdot h)$.  
\end{proposition}
\begin{proof}See lemmas 
    \ref{lem:edge-and-plaquette-operators-commute} and \ref{lem:plaquette-operators-and-vertex-operators} in the Appendix.
\end{proof}

As a consequence of the fact that the vertex, edge and plaquette operators define on $\mathcal{H}_L$ representations of the algebras $H$, $A$ and $H^*$, respectively, and that they satisfy the commutation relations in Proposition \ref{prop:local-operator-algebra-1}, we obtain in total a representation of the algebra which has underlying vector space $H^* \ot (A \rtimes H)$, with the multiplication defined by
\[ (\varphi \ot (a \ot h)) \cdot (\psi \ot (b \ot k)) := \varphi \ \psi(S\partial a_{(3)} Sh_{(3)} \cdot ? \cdot h_{(1)} \partial a_{(1)}) \ot (a_{(2)} \ot h_{(2)})  \]
$\varphi, \psi \in H^*, a, b \in A, h, k \in H$.

In the case that $A = \CC$ is the trivial Hopf algebra,
this algebra is the Drinfeld double $D(H)$ and
we therefore recover the well-known fact that in the Kitaev model for a semisimple Hopf algebra $H$ the local operators form a representation of the Drinfeld double $D(H)$ \cite{buerschaper-et-al, balsam-kirillov}.

\subsection{Commuting-projector Hamiltonian and its ground-state space}\label{sec:grspace}

Recall the definition of vertex $V_v$, edge $E_e$, and plaquette $F_P$ projectors, in Definitions \ref{def:vert_proj}, \ref{def:edge_proj}
and \ref{def:plaq_proj}.    
\begin{theorem} \label{thm:hamiltonian}
Let $\Sigma$ be a compact oriented surface with an adequate cell decomposition $L$, and let $(A \xrightarrow{\partial} H)$ be a crossed module of finite-dimensional semisimple Hopf algebras.
Then the following operators, on $\mathcal{H}_L = H^{\ot L^1} \ot A^{\ot L^2}$, are 
pairwise commuting projectors:
\begin{align*}
V_v, \text{ for all } v \in L^0; 
&&E_e,  \text{ for all } e \in L^1; &&
F_P, \text{ for all } P \in L^2.
\end{align*}

As a consequence, the following operator (the \emph{Hamiltonian}) is a diagonalizable endomorphism of $\mathcal{H}_L$:
\begin{equation}\label{eq:hamiltonian}
h := \sum_{v \in L^0} (1 - V_v) + \sum_{e \in L^1} (1 - E_e) + \sum_{P \in L^2} (1 - F_P),
\end{equation}
whose eigenspace for its lowest eigenvalue $0$ (the \emph{ground-state space}) is
\begin{equation*}
\ker h = \im\Big( \prod_{v \in L^0} V_v \prod_{e \in L^1} E_e \prod_{P \in L^2} F_P \Big)
= \left\{ x \in \mathcal{H}_L \mid \substack{V_v(x) = x \text{ for all } v \in L^0,\\ E_e(x)=x \text{ for all } e \in L^1, \\ F_P(x) = x \text{ for all } P \in L^2.} \right\} 
\end{equation*}
\end{theorem}
\begin{proof}
The commutativity of the projectors follows from Lemmas  \ref{lem:vertex-operators-commute},
\ref{edge-ops-adequate-commute},
\ref{lem:plaquette-operators-commute}, \ref{lem:edge-operators-and-vertex-operators}, \ref{lem:edge-and-plaquette-operators-commute} and \ref{lem:plaquette-operators-and-vertex-operators}. Apart from the case of commutativity between  vertex and edge projectors, which requires Lemma \ref{lem:haar-integral-central-in-crossed-product}, the commutativity relations are proven more generally for $V_{v,P}^a, E_e^b$ and $P_P^\psi$, for all cocommutative elements $a \in H$, $b \in A$ and $\psi \in H^*$. In particular (by Proposition \ref{prop:haar-integral}) the commutativity  relations hold for $V_{v,P}^\ell, E_e^\Lambda$ and $P_P^\lambda$, where $\ell$, $\Lambda$ and $\lambda$, recall, are the Haar integrals of $H$, $A$ and $H^*$.

Mutually commuting projectors are simultaneously diagonalizable and, hence, the map $h$ is diagonalizable, with eigenvalues being all the non-negative integers up to $|L^0| + |L^1| + |L^2|$.
The ground-state space is therefore the eigenspace for eigenvalue $0$, i.e.\ the kernel, which is the space on which all the projectors occurring in the Hamiltonian act by $0$. \end{proof}

\begin{proposition}\label{prop:gr_state}
The ground-state space $\ker h$ can 
be described as follows.
A vector \[ x = \bigotimes_{e \in L^1} v_e \ot \bigotimes_{P \in L^2} X_P \in H^{\ot L^1} \ot A^{\ot L^2}, \] which by abuse of notation we consider to be a sum of pure tensors while omitting the summation symbol, is a ground state, if, and only if, the following conditions are satisfied:
\begin{enumerate}\setlength\itemsep{0em}
    \item \label{it:plaquettes}
For every $P \in L^2$, and, following the notation of Definition \ref{def:plaquette-operator},
  $(e_1, \dots, e_n)$ are the edges in the boundary of $P$ in counterclockwise order, starting at $v_P$, we have:
\begin{align*}
& {y} \ot ((v_{e_1}^{\langle \theta_1 \rangle})_{(1)})^{\langle \theta_1 \rangle} \ot\cdots\ot ((v_{e_n}^{\langle \theta_n \rangle})_{(1)})^{\langle \theta_n \rangle} \ot (X_P)_{(2)} \ot \left( (v_{e_1}^{\langle \theta_1 \rangle})_{(2)} \cdots (v_{e_n}^{\langle \theta_n \rangle})_{(2)} S\big(\partial (X_P)_{(1)} \big)\right)\\
&\qquad \qquad \qquad = {y} \ot v_{e_1} \ot\cdots\ot v_{e_n} \ot X_P \ot 1_H . 
\end{align*}
(Here $\theta_j = +1$ if the edge $e_j$ is oriented counterclockwise, around $P$, and $\theta_j = -1$ otherwise.)
\item \label{it:edges}
For every $e \in L^1$: 
\[ E_e^a ( x ) = \eps(a) x, \quad \text{ for all } a \in A.\]
\item\label{it:vertices}
For every $v \in L^0$, and for any choice of $P\in L^2$, adjacent to $v$: \[ V_{(v,P)}^h ( x ) = \eps(h) x, \quad \text{ for all } h \in H.\] \end{enumerate}
\end{proposition}
{This result ultimately follows from the representation-theoretical properties of the Haar integral; {see e.g. \cite[Corollary 1.4]{balsam-kirillov}\cite[Lemma B.9]{meusburger2016hopf}}.}
\begin{proof}
By the previous proposition, the ground state is $\bigcap_{v \in L^0} \im\big( V_v\big)\cap \bigcap_{e \in L^1} \im\big( E_e\big) \cap  \bigcap_{P \in L^2} \im\big( F_P\big).$
For any module $M$ over a finite-dimensional semisimple Hopf algebra {$H'$}, the Haar integral {$\ell' \in H'$} projects onto the subspace of {$H'$}-invariants: \[ \ell' . M = M^{{H'}} := \{ m \in M \mid h.m = \eps(h) m, \text{ for all } h \in {H'} \} . \]
(This follows since the Haar integral $\ell'$ is idempotent and $x\ell'=\ell' x= \eps(x) \ell'$, for all $x \in H'$.)

Now apply this respectively to $H^*$, $A$ and $H$, and their actions on $\Hil_L$ given respectively by plaquette operators, edge operators and vertex operators.
{This leads directly to Items \ref{it:edges}  and \ref{it:vertices}, showing when an element is invariant under all edge and all vertex projectors.}

{{The formula for when an element is invariant under  all plaquette projectors, $F_P^{\lambda}$, in Item \ref{it:plaquettes}}, follows from the fact that an element $x \in \Hil_L$ lies in the space of $H^*$-invariants under the representation $F_P^{(\_)}$ of $H^*$ if, and only if, for any $\psi \in H^*$:}  
\begin{align*}
& {y} \ot ((v_{e_1}^{\langle \theta_1 \rangle})_{(1)})^{\langle \theta_1 \rangle} \ot\cdots\ot ((v_{e_n}^{\langle \theta_n \rangle})_{(1)})^{\langle \theta_n \rangle} \ot (X_P)_{(2)} \ \psi\left( (v_{e_1}^{\langle \theta_1 \rangle})_{(2)} \cdots (v_{e_n}^{\langle \theta_n \rangle})_{(2)} S\partial (X_P)_{(1)} \right)\\
&= {y} \ot v_{e_1} \ot\cdots\ot v_{e_n} \ot X_P \, \psi(1_H), 
\end{align*}
{since recall $\eps(\psi)=\psi(1_H)$.}
\end{proof}

\section{Some particular cases of the \HHK\ model}

\subsection{The $\GEXY$-class of models}

Let $(E \xrightarrow{\partial} G, \lact)$ be a crossed module of groups.
Let $X$ and $Y$ be finite groups on which $G$ acts by automorphisms, and let $f : Y \to X$ be a $G$-equivariant group morphism, such that  $\im(\partial) \subseteq G$ acts trivially on $Y$ and $X$. In Subsection \ref{sec:main_example} we constructed a crossed module of semisimple Hopf algebras $\big(\Fun{X} \ot \CC E \xrightarrow{\partial} \Fun{Y}\rtimes\CC G, \lact\big)$. The resulting class of \HHK\ models as defined in Section \ref{sec:model} is here called the $\GEXY$-class of models.

{In this section we investigate} {more closely} three particular cases of the associated $\GEXY$-class of models. We will also determine their ground-state spaces on a surface $\Sigma$, proving that they are canonically independent of the  adequate cell decomposition of $\Sigma$.

In order to simplify our discussion, we will only discuss the typical case when $\Sigma$ is an oriented surface without boundary and $L$ is a {triangulation} of $\Sigma$, {with a total order on the set of vertices. This gives an adequate cell decomposition as explained in Lemma \ref{lem:adequatefromtriang}.

The particular cases considered for $\big(\Fun{X}\ot\CC E \xrightarrow{\partial} \Fun{Y}\rtimes\CC G, \lact\big)$ are:
\begin{itemize}\setlength\itemsep{0em}
    \item The $\GoXo$-case, where the groups $E$ and $Y$ both are trivial. This turns out to give Potts model \cite{Martin_Potts} when $G$ is trivial. The general case of the model is a coupling between Kitaev quantum-double model \cite{Kitaev} and Potts model.
    \item A generalisation of $\GoXo$-case, the $\GEXo$ case, where also $E$ may be non-trivial will be briefly discussed.
    \item The $\ooXY$ case where the groups $G$ and $E$ each are trivial.
\end{itemize}

{Note that the $\GEoo$-case, where the groups $X$ and $Y$ each are trivial, coincides with the $n=2$ case of the 2-group Kitaev model in \cite[Equation (35)]{Higher_Kitaev},  in the Introduction called the \sHKM. In \textit{loc cit}, its ground-state space  was proven to be related with Yetter homotopy 2-type TQFT \cite{Yetter}, and in particular {to be} canonically independent of the {triangulation} of $\Sigma$. The $\GEoo$-case  reduces to the Kitaev model \cite{Kitaev} when $E$ is trivial; see \cite[page 8]{Higher_Kitaev}.}

{In all the particular cases above, the ground-state space is identified with the $\CC$-linear span of the set of homotopy classes of maps $\Sigma \to B$. Here $B$ is a certain space, actually a homotopy 1-type or 2-type, \cite{martins_porter07,Porter}, obtained as the classifying space of a certain groupoid, or crossed module of groupoids, \cite{Brown_Higgins,brown_hha} derived from $(G,E,X,Y)$. (The way the groupoids and crossed modules of groupoids are  constructed depends on the example, so we do not have a general treatment of the full $(G,E,X,Y)$-model.)}

{The latter identification of the ground-state spaces as  free vector spaces on  sets of homotopy classes of maps from $\Sigma$ into 1-types or 2-types, here homotopy finite spaces \cite{Quinn,martins_porter21},  means in particular that Quinn's {{finite total homotopy}} TQFT \cite{Quinn,martins_porter21} provides a TQFT whose state spaces are canonically isomorphic to the ground-state spaces of the corresponding \HHK\ model, in the particular cases we consider. A discussion is in Subsubsection \ref{sec:relQuinn}.}

{The calculation of the {ground-state space} of the $\GoXo$, $(G,E,X,1)$  and $\ooXY$-model is  based on results of Brown and Higgins \cite[Theorem A]{Brown_Higgins}, and Brown-Higgins-Sivera \cite[11.4.iii]{Brown_Higgins_Sivera}, expressing the set of homotopy classes of maps $M \to B_\mathcal{A}$, where $\mathcal{A}$ is a crossed complex of groupoids, a generalisation of crossed modules, and $B_\mathcal{A}$ is its classifying space, in terms of homotopy classes of crossed complex maps $\Pi(M) \to \mathcal{A}$. Here $M$ is any space with a CW-complex structure and $\Pi(M)$ denotes its fundamental crossed complex \cite{brown_hha}.} {The classifying space of a crossed complex was originally constructed in \cite{Brown_Higgins}, using a simplicial setting, for which some more explanation can be found in \cite{martins_porter21}, and, more recently, in a cubical setting in \cite[Chapter 11]{Brown_Higgins_Sivera}.}

It should however be {re-emphasised} that the tricks that permit the application of Brown-Higgins theorem are quite different in the $\GEXo$- and $\ooXY$-models. It is an open problem whether the results can be generalised to the full $\GEXY$-model.

In the $\GooY$-case, the Hopf algebra $\Fun{X}\otimes\CC E$ becomes $\CC$, so this example is not a crossed module case. In particular our model reduces to the Hopf-algebraic Kitaev model (with Hopf algebra $\Fun{Y}\rtimes\CC G)$ \cite{buerschaper-et-al}. Therefore its ground-state space is known to be triangulation-independent.

\subsection{The $\GoXo$-case {and its relation to Quinn's {finite total homotopy} TQFT}}\label{goxo}
Let $G$ be a finite group. Let $X$ be a finite group on which $G$ acts by automorphisms. The general construction in Subsection \ref{sec:main_example} for $E=1=Y$ gives a crossed module of Hopf algebras:
\[ ( \Fun{X} \xrightarrow{\partial} \CC G, \lact ) .\]
Here $G$ acts on $\Fun{X}$ as $(g\lact \psi)(x)=\psi(g^{-1} \lact x)$, for $\psi \in \Fun{X}$ and $x \in X$.
The boundary map $\partial$ is trivial: $\partial(\psi)=\psi(1_X) 1_G$. Recalling Examples \ref{ex:CG} and \ref{ex:FG},
the Haar integral of $\CC G$ is $\ell=\frac{1}{|G|} \sum_{g \in G} g$ and the Haar integral of $\Fun{X}$ is $\Lambda =\delta_{1_X}$. Clearly $g \lact \Lambda=\Lambda$ for each $g\in G$; cf. Lemma  \ref{lem:haar-integral-central-in-crossed-product}.

\subsubsection{An explanation of the $\GoXo$-model}
Let $\Sigma$ be a surface with a triangulation $L$, with a total order on the set of vertices, with the adequate cell decomposition in Lemma \ref{lem:adequatefromtriang}.
\begin{definition}
The total state space assigned to $(\Sigma,L)$ is 
\[ \mathcal{H}_{L,X,G} := \CC G^{\otimes L^1} \ot \Fun{X}^{\otimes L^2}.
\] 
\end{definition}
The total state space assigned to $(\Sigma,L)$ is naturally described in terms of, {what we call here,} $\GX$-colourings.
We define:
\begin{definition} \label{def:colourings}
 For sets $A$ and $B$, an \emph{$(A,B)$-colouring, $\F=(\F_1,\F_2)$, of $L$} is a labelling, $\F_1\colon L^1 \to A,$ of the edges of $L$ by elements of $A$, and another (independent) labelling $\F_2\colon L^2 \to B$ of the plaquettes of $L$ by elements of $B$.
\end{definition}
\noindent Clearly $\mathcal{H}_{L,X,G}$ is isomorphic to the free vector space on the set of $\GX$-colourings $\F$ of $L$. The language of $\GX$-colourings, similar to that of \cite{Kitaev,companion,martins_porter07}, is quite suitable for specifying the actions of the vertex, edge and plaquette operators of the $\GoXo$-model.

\medskip

Let $\F=(\F_1,\F_2)$ be a $\GX$-colouring of $(\Sigma,L)$.
\medskip

\noindent{\textbf{Vertex operators}} \\
Let $v \in L^0$ be a vertex and let $P \in L^2$ be an adjacent plaquette.
The vertex operators $V_{v,P}^g$ depend  only on the underlying vertex $v$, $V_{v,P}^g = V_v^g$, due to cocommutativity of the group algebra $\CC G$.
Given a vertex $v \in L^0$ and an element $g \in G$, we obtain the $\GX$-{vertex operator}, $V_v^g(\F) = (V_v^g(\F)_1, V_v^g(\F)_2)$:
\begin{itemize}\setlength\itemsep{0em}
\item given $t \in L^1$:
\[\big(V_v^g(\F)\big)_1(t)=\begin{cases}  g \, \F_1(t),\textrm{ if $v$ is the starting vertex of $t$}, \\
  \F_1(t)\, g^{-1}, \textrm{ if $v$ is the target vertex of $t$}, 
  \\
  \F_1(t), \textrm{ if $v$ is not incident to $t$};
\end{cases}\]
\item given $P \in L^2$:
\[\big(V_v^g(\F)\big)_2(P)=\begin{cases}  g\lact \F_2(P),\textrm{ if $v$ is the base-point of  $P$},\\
  \F_2(P), \textrm{ if $v$ is not the base-point of $P$}.
\end{cases}\]
\end{itemize}
The
 vertex projector is  given by:
\[
V_v=V_v^{\ell}=\frac{1}{|G|}\sum_{g \in G} V_v^g.
\]
\noindent \textbf{Edge operators}\\
In order to describe the edge operators, we need another bit of notation. Let $\F$ be a $\GX$-colouring of $(\Sigma,L)$. Let $P \in L^2$ be a plaquette. Let $t$ be an edge in the boundary of $P$, and $w$ be the initial point of $t$. Let as usual $v_P$ denote the base-point of $P$.

Let us define the following notation $\hol_{P,t}^\F$ {(where $\hol$ stands for \emph{holonomy})}:
\begin{itemize}\setlength\itemsep{0em}
\item If $w=v_{P}$, let $\hol_{P,t}^\F=1_G$
\item Otherwise, consider the unique path from $v_{P}$ to $w$, that does not pass through $t$.
Then let $\hol_{P,t}^\F$ be the product of each element of $G$ assigned to the edges {transversed} when going from $v_{P}$ to $w$ (or its inverse if the edge is transcribed in the opposite direction to its orientation).
Some examples are  in the diagram below {(an edge going from a vertex $a$ to a vertex $b$ is denoted $t_{(a,b)}$):}
\end{itemize}
\vspace{-1em}
\begin{align*}
\vcenter{\xymatrix{v\ar[rr]^{t_{(v,v_P)}} \ar[rrd]_{t_{(v,w)}} && v_P\\
            && w\ar[u]_t
            }}&\stackrel{\hol_{P,t}^\F}{\longmapsto} \F_1(t_{(v,v_P)})^{-1} \F_1(t_{(v,w)}),  
&&\vcenter{\xymatrix{v_P\ar[rr]^{t_{(v_P,v)}} \ar[rrd]_{t_{(v_P,w)}} && v\\
            && w\ar[u]_t
            }} \stackrel{\hol_{P,t}^\F}{\longmapsto}  \F_1(t_{(v_P,w)}), 
\\
             \vcenter{\xymatrix{v\ar@{<-}[rr]^{t_{(v_P,v)}} \ar[rrd]_{t_{(v,w)}} && v_P\\
            && w\ar[u]_t
            }} &\stackrel{\hol_{P,t}^\F}{\longmapsto}  \F_1(t_{(v_P,v)}) \F_1(t_{(v,w)}),
            && \vcenter{\xymatrix{v_P\ar[rr]^{t_{(v_P,v)}} \ar@{<-}[rrd]_{t_{(w,v_P)}} && v\\
            && w\ar[u]_t
            }} \stackrel{\hol_{P,t}^\F}{\longmapsto}  \F_1(t_{(w,v_P)})^{-1}.
\end{align*}

Given an oriented edge $t \in L^1$, let $P$ and $Q$ be the two plaquettes that have $t$ in common, where $P$ is on the left-hand side (noting that $\Sigma$ is oriented).
Then, {unpacking Definition \ref{def:edge-operator}}, for $\xi \in \Fun{X}$: 
\[ E_t^{\xi}(\F)= 
\xi\Big( \big( (\hol_{P,t}^\F)^{-1} \lact \F_2(P) \big) \big( (\hol_{{Q},t}^\F)^{-1} \lact \F_2(Q) \big)^{-1} \Big) \F .
\]

The Haar integral $\Lambda$ in $\Fun{X}$  is the delta function $\delta_{1_X}$. So the edge projector $E_t$ is such that:
\[E_t(\F)=E_t^\Lambda(\F)=\begin{cases} \F, \textrm{ if }  (\hol_{P,t}^\F)^{-1} \lact \F_2(P) =  (\hol_{Q,t}^\F)^{-1} \lact \F_2(Q),\\
0,\textrm{ otherwise}.
\end{cases}
\]

\noindent \textbf{Plaquette operators} \\
Let $P$ be a plaquette, with base-point $v_P$. Let $\varphi$ be an element of the dual algebra of $\CC G$, i.e.\ $\varphi$ is canonically a function $\varphi\colon G \to \CC$. Given a $\GX$-colouring $\F$, we need a new bit of notation, $\hol^\F_{\partial P}$, to be the product of the elements of $G$ assigned to the edges of the boundary of $P$, when we trace the boundary of $P$ counterclockwise from $v_P$ to $v_P$. As in the definition of $\hol_{P,t}^\F$ above, if an edge 
is 
{transversed}
in the opposite orientation, we put $\F_1(t)^{-1}$ instead. An example is below:
\[\vcenter{
\xymatrix{v\ar[rr]^{t_{(v,v_P)}} \ar[rrd]_{t_{(v,w)}} && v_P\\
            && w\ar[u]_{t_{(w,v_P)}}}}
         \stackrel{\hol_{\partial P}^\F}{\longmapsto} \F_1(t_{(v,v_P)})^{-1} \F_1(t_{(v,w)}) \F_1(t_{(w,v_P)}).\]

{Unpacking Definition \ref{def:plaquette-operator}, we have for the plaquette operator, given $\varphi \in (\CC G)^*$:}
\[F_P^\varphi(\F)=\varphi(\hol^\F_{\partial P}) \F.
\]
The plaquette projector is $F_P(\F) = F_P^{\delta_{1_G}}(\F)$, using the Haar integral of $(\CC G)^*$, see Example \ref{ex:FG}. Hence
\[F_P(\F)=
\begin{cases}\F, \textrm{ if } \hol^\F_{\partial P} =1_G,\\
0, \textrm{ otherwise}.
\end{cases}
\]

\begin{example}
 Suppose $G$ is the trivial group. Let us see that the $\GoXo$-model coincides with the  {$|X|$-Potts model on $(\Sigma,L^*)$ \cite{Martin_Potts}, where $|X|$ is the cardinality of $X$.} Here $L^*$ is the dual lattice to $L$, \cite[Section 3]{Turaev_Viro}, so we have one vertex (of $L^*$) for each plaquette of $L$,  an edge (of $L^*$)  connecting a pair of vertices  any time two triangles (in $L$) share an edge, and finally each vertex of $L$ gives a plaquette of $L^*$. {(We will go back to dual cell decompositions later, in \S \ref{sec:XY_indual}.)}

Indeed, the Hilbert space for the $\GoXo$-model, with $G=\{1_G\}$, {is isomorphic to the free vector space on the set of labellings of the vertices of $L^*$ by elements of $X$.} Vertex projectors and plaquette projectors each are the identity. {Each edge projector acts as the identity if the labels on the plaquettes on each side of the edge coincide, and as zero otherwise. This is exactly the $|X|$-Potts model, in the conventions of \cite[\S 1.1.]{Fatimah}. }
\end{example}

{Note that the Hamiltonian for the $\GoXo$-model does not depend on the group structure in $X$, and only on the cardinality of the set $X$. However the edge operators do depend on the group operation in $X$.}

\begin{remark}{If $X$ has only one element then the $\GoXo$-model coincides with Kitaev's quantum double model \cite{Kitaev} for $G$.}
{The general case of the $\GoXo$-model can  be seen as a coupling between the $|X|$-state Potts model and the Kitaev model for $G$.} 
\end{remark}

\subsubsection{Fully flat $\GX$-colourings of $(\Sigma,L)$ and the ground-state of the $\GoXo$-model}\label{gsdgoxo}
\begin{definition}
A $\GX$-colouring of $(\Sigma,L)$ is called \emph{fully flat} if:
\begin{itemize}\setlength\itemsep{0em}
    \item \textbf{(Flatness on plaquettes)} Given any plaquette $P \in L^2$, $\hol_{\partial P}^\F=1_G$.
    \item \textbf{(Flatness on edges)} Given any edge $t \in L^1$,
    \[(\hol_{{P_1},t}^\F)^{-1} \lact \F_2(P_1) =  (\hol_{{P_2},t}^\F)^{-1} \lact \F_2(P_2) , \]
    where $P_1$ and $P_2$ are the two plaquettes that have $t$ as the common edge in their boundaries.
\end{itemize}
We let $\Phi_{\GoXo}(\Sigma,L)$ denote the set of fully flat $\GX$-colourings of $(\Sigma,L)$.
\end{definition}
\begin{lemma}
The set $\Phi_{\GoXo}(\Sigma,L)$ is invariant under vertex operators $V_v^g$. Moreover, we have an action of $G^{L^0}$ on the set of fully flat $\GX$-colourings, {where} $(g_v)_{v \in L^0} \in G^{L_0}$, acts as
$\prod_{v \in L^0} V_v^{g_v}$.
\end{lemma}
\begin{proof}
{An easy calculation proves that indeed  $\Phi_{\GoXo}(\Sigma,L)$ is invariant under all vertex operators. 
The second statement follows from the fact that vertex operators $V_v^g$ and $V_w^h$ commute for any two distinct vertices $v,w \in L^0$.}
\end{proof}
{It is clear that the subspace of the total state space of the $\GoXo$-model on $(\Sigma, L)$ of vectors that are invariant under all plaquette projectors and edge projectors is isomorphic to the free vector space on  $\Phi_{\GoXo}(\Sigma,L)$.}
Moreover, the ground-state space of the $\GoXo$-model is the subspace of $G^{L^0}$-invariant vectors in the free vector space on $\Phi_{G,X}(\Sigma,L)$. Therefore:
\begin{lemma}
{The ground-state space of the $\GoXo$-model on $(\Sigma,L)$ is canonically isomorphic to the free vector space on $\Phi_{\GoXo}(\Sigma,L)/G^{L^0}$.}
\end{lemma}

Since $G$ acts on the underlying set of $X$, we can define the action groupoid, $X//G$. Objects of $X//G$  are elements $x \in G$. The set of morphisms $x \to y$ is given by the set of all pairs $(x,g)\in  X \times G$ such that $g^{-1} \lact x =y$. The composition of the morphisms,
\[(x \xrightarrow{(x,g)} g^{-1} \lact x)\textrm{   and   }(g^{-1} \lact x  \xrightarrow{(g^{-1} \lact x,h)} (gh)^{-1} \lact x), \] is \[(x \xrightarrow{(x,gh)} (gh)^{-1} \lact x) 
.\]
{(We shall use several other examples of action groupoids, $Z//H$, below, for a group $H$ acting on a set $Z$, which will follow the conventions just stated. The set of objects of $Z//H$ is $Z$, the set of morphisms is $Z\times H$, and source and target maps, and composition are as above.)}

{Given a triangulation $L$ of $\Sigma$,
we put $\Sigma^0_L$ for the set of vertices of $L$, and $\Sigma^1_L$ for the subspace of $\Sigma$  made out of the union of 0- and 1-simplices of $L$. So $\Sigma_L^i$ is the $i$-skeleton of the CW-decomposition given by $L$; see e.g. 
\cite[Definition 16]{companion}.} \begin{lemma}\label{main_idea}
There is a one-to-one correspondence between fully flat $\GX$-colourings, $\F$, of $(\Sigma,L)$ and groupoid functors
\[f_\F\colon \pi_1(\Sigma,\Sigma^0_L) \to X // G.
\]
This means that we have a one-to-one correspondence between fully flat $\GX$-colourings $\F=(\F_1,\F_2)$ of $(\Sigma,L)$ and pairs of maps $\F_0\colon L^0 \to X$ and $\F_1\colon L^1 \to G$ ($\F_1$ does not change), such that:
\begin{itemize}\setlength\itemsep{0em}
    \item  \textbf{flatness around plaquettes}: given any plaquette $P \in L^2$, $\hol_{\partial P}^\F=1_G$,
    \item given any edge $t \in L^1$, with starting vertex $v$ and target vertex $w$:
    $\F_0(w)=\F_1(t)^{-1} \lact \F_0(v)$.
\end{itemize}
\end{lemma}
\begin{proof} Cf. \cite[\S 2.5 \& \S 5.1.6]{companion}.
The fundamental groupoid  {$\pi_1(\Sigma,\Sigma^0_L)$} is the quotient of the free groupoid on the graph $(L^0, L^1)$ consisting of the vertices and edges of $L$, with a relation at each plaquette $P \in L^2$. {That relation says that the  composition of the elements of {$\pi_1(\Sigma,\Sigma^0_L)$} assigned to the edges in the boundary $P$, when going around $P$ from the base-point $v_P$ to $v_P$ again is the identity;  see \cite{brown_hha},} as explained in \cite[\S 2.5]{companion}. Therefore $\F_1$ gives a functor   {${\pi_1(\Sigma,\Sigma^0_L)}\to \{\ast\} // G\cong G$.} To {lift} it to a functor ${\pi_1(\Sigma,\Sigma^0_L)}\to X // G$ we start by giving the value of the latter on vertices.

Let $(\F_1,\F_2)$ be a fully flat $\GX$-colouring. Then if $P$ is a plaquette, its base-point $v_P$ can be assigned the element $R(P,v_p) := \F_2(P) \in X$. We can extend this to the other vertices of $P$ using the rule $R(P,v') = \F_1(t)^{-1} \lact R(P,v) $, whenever $t$ is an edge in the boundary of $P$, {connecting $v$ to $v'$, and} oriented from $v$ to $v'$. {(If $t$ is oriented from $v'$ to $v$ we put instead $R(P,v') = \F_1(t) \lact R(P,v) $.)} Given the flatness condition of $\F_1$ around plaquettes, this gives a well-defined map $v'\mapsto  R(P,v')$ from the set of vertices in the boundary of $P$ to $X$. 

Now let us see that $R(P,v)$ depends only on $v$ and not on the plaquette $P$ that $v$ belongs to. If $v$ is a vertex, and $v$ belongs to $P$ and $P'$, and there is an edge $t$ incident to $v$, separating $P$ and $P'$, then the flatness condition of $\F$ on the edge $t$ implies that $R(P,v)=R(P',v)$. This can easily be used to show that all plaquettes $Q$ containing $v$ give the same $R(Q,v)$. 

So given $v \in L^0$ we can put $\F_0(v):=R(P,v)$, where $P$ is any plaquette containing $v$.

The rest of the statements follow straightforwardly.
\end{proof}
\begin{lemma}
Under the correspondence of Lemma \ref{main_idea}, we have a one-to-one correspondence between orbits of $G^{L^0}$ in $\Phi_{G,X}(\Sigma,L)$ and equivalence classes of groupoid functors $f_\F\colon \pi_1(\Sigma,\Sigma_L^0) \to X // G$, considered up to natural transformations.
\end{lemma}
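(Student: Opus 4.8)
The plan is to transport the $G^{L^0}$-action through the bijection of Lemma \ref{main_idea} and to recognise it as the action by natural isomorphisms. Since $X // G$ is a groupoid, every natural transformation between functors $\pi_1(\Sigma,\Sigma_L^0)\to X // G$ is automatically invertible, so ``considered up to natural transformation'' coincides with isomorphism in the functor category; in particular both sides are equivalence relations, and it suffices to prove that two fully flat colourings $\F,\F'\in\Phi_{G,X}(\Sigma,L)$ lie in the same $G^{L^0}$-orbit if and only if $f_\F$ and $f_{\F'}$ are naturally isomorphic.

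First I would compute the effect of an element $(g_v)_{v\in L^0}\in G^{L^0}$ on the functor side. Writing $\F' := \prod_{v} V_v^{g_v}(\F)$ and using the explicit vertex-operator formulas, an edge $t$ from $v$ to $w$ transforms as $\F_1'(t)=g_v\,\F_1(t)\,g_w^{-1}$ and a plaquette as $\F_2'(P)=g_{v_P}\lact \F_2(P)$. Feeding this into the reconstruction $v\mapsto \F_0(v)=R(P,v)$ of Lemma \ref{main_idea}, I would check the uniform formula $\F_0'(v)=g_v\lact \F_0(v)$ at every vertex: it holds at each base-point $v_P$ directly from the $\F_2$-formula, and it propagates along the edges of $\partial P$ because in the recursion $R(P,v')=\F_1(t)^{-1}\lact R(P,v)$ the $g_{v_P}$-twist of $\F_2(P)$ and the $g_v$-twists of the edge labels cancel, leaving precisely $g_{v'}\lact \F_0(v')$.

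With this in hand I would exhibit the natural isomorphism: define $\eta_v := (f_\F(v),\,g_v^{-1})$, a morphism $f_\F(v)\to g_v\lact f_\F(v)=f_{\F'}(v)$ in $X // G$. A short check of the naturality square for each edge $t\colon v\to w$ shows that both composites $\eta_w\circ f_\F(t)$ and $f_{\F'}(t)\circ \eta_v$ equal $(f_\F(v),\,\F_1(t)\,g_w^{-1})$, so $\eta\colon f_\F\Rightarrow f_{\F'}$ is natural. Conversely, any natural transformation $\eta\colon f_\F\Rightarrow f_{\F'}$ has components $\eta_v=(f_\F(v),k_v)$ for unique $k_v\in G$; setting $g_v:=k_v^{-1}$, the object part forces $\F_0'(v)=g_v\lact\F_0(v)$ and naturality forces $\F_1'(t)=g_v\,\F_1(t)\,g_w^{-1}$, so $f_{\F'}$ coincides with $f_{\F''}$ for $\F'':=\prod_v V_v^{g_v}(\F)$. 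By the injectivity of the correspondence in Lemma \ref{main_idea} this gives $\F'=\F''$, whence $\F$ and $\F'$ share a $G^{L^0}$-orbit.

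The main obstacle I anticipate is the middle step: the vertex action is recorded plaquette-by-plaquette on the $\F_2$-data, whereas a natural transformation lives on objects (vertices), so one must show that the single twist $g_v$ simultaneously governs the change of $\F_0(v)$ seen from \emph{every} plaquette containing $v$. This relies essentially on the well-definedness of $\F_0$ established in Lemma \ref{main_idea} (the cancellation that makes $R(P,v)$ independent of the chosen $P$), together with flatness of $\F$ around plaquettes; once that consistency is secured, the remaining verifications are routine composition manipulations in the action groupoid $X // G$.
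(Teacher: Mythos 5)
Your proposal is correct and follows exactly the route the paper intends: the paper's own proof is only the remark that ``these calculations only require checking compatibility between the languages of functors and natural transformations and that of vertex operators'' (deferring details to the companion reference), and your argument is precisely that compatibility check, carried out in full. The key computations --- $\F_1'(t)=g_v\,\F_1(t)\,g_w^{-1}$, $\F_0'(v)=g_v\lact\F_0(v)$ via the well-definedness of $R(P,v)$ from Lemma \ref{main_idea}, and the naturality square with components $\eta_v=(f_\F(v),g_v^{-1})$ --- are all consistent with the paper's conventions for $X//G$ and the vertex operators.
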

\begin{proof} Cf. \cite[\S 4.3.1]{companion}. These calculations only require checking compatibility between the languages of functors and natural transformations and that of vertex operators. 
\end{proof}

{Given a groupoid $\Gamma$, we let $B_{\Gamma}$ be its classifying space {\cite{Willerton}}. Classifying spaces of groupoids are particular cases of classifying spaces of crossed complexes, as defined in \cite{Brown_Higgins,brown_hha,martins_porter07}.}

{
Cf. \cite[\S 5.2]{companion}.
We will now use \cite[Theorem A]{Brown_Higgins} (see also \cite[Theorem 7.16]{brown_hha} and \cite[\S 11.4.iii]{Brown_Higgins_Sivera}).  {This theorem gives a canonical identification between the set of groupoid functors $\pi_1(\Sigma, \Sigma_{L}^0) \to X//G$, considered up to natural transformation\footnote{{In the case of functors between groupoids, crossed complex homotopies (the language of \cite{Brown_Higgins}) boil down to natural transformations between groupoid functors.}}, and homotopy classes of maps $\Sigma \to B_{X//G}$, for any triangulation $L$ of $\Sigma$. 
This permits us to see that the ground-state space of the $\GoXo$-model is canonically independent of the chosen triangulation $L$ of $\Sigma$. }}

\begin{theorem} \label{thm:GX-ground-states}
Consider a pair of finite groups $G$ and $X$, with $G$ acting on $X$ by automorphisms. Let $\Sigma$ be an oriented surface with a triangulation $L$, {with a total order on the set of vertices, and consider the adequate cell decomposition in Lemma \ref{lem:adequatefromtriang}}. {There is a canonical isomorphism between the ground-state space of the \HHK\ model for $\GoXo$ and the free vector space on the set of} homotopy classes of maps from $\Sigma$ to $B_{X // G}$, the classifying space of the groupoid $X // G$. In particular the ground-state space of the model is canonically independent of  the triangulation $L$ of $\Sigma$.
\end{theorem}
\begin{proof}
This follows directly by applying Brown-Higgins classification theorem \cite[Theorem A]{Brown_Higgins}{\cite[Theorem 11.4.19]{Brown_Higgins_Sivera}.} For explanation see \cite[\S 5.2]{companion}, \cite{martins_porter07} {and \cite[\S 8.2]{martins_porter21}}.
\end{proof}

\subsubsection{Relation with Quinn's {finite total homotopy} TQFT}\label{sec:relQuinn}
Let $m$ be a non-negative integer.
A space $X$ is called an \emph{$m$-type} \cite{martins_porter07,Porter}, if  $\pi_i(X,x)=0$ whenever $i>m$, for all possible choices of base-point $x \in X$. A space is called \emph{homotopy finite} \cite{Quinn,martins_porter21,HLA} if {it is an $m$-type, for some $m$, and furthermore} it has only a finite number of path-components and all of their homotopy groups are finite. 
Classifying spaces of groupoids $\Gamma$ are homotopy 1-types, whose fundamental group at each point is isomorphic to a corresponding hom-group $\hom_\Gamma(a,a)$ in $\Gamma$, where $a$ is an object of $\Gamma$. Consequently, classifying spaces of finite groupoids are homotopy finite spaces.

Quinn constructed in \cite{Quinn} what he called the \emph{{finite total homotopy} TQFT, $\mathscr{Q}_{\mathcal{B}}$}; see also 
 {\cite[Section 4]{martins_porter21} for a more recent account}. It is a $(n+1)$-TQFT defined for all spatial dimensions $n$; in particular it {gives a}   $(2+1)$-dimensional TQFT. Quinn's {finite total homotopy} TQFT, $\mathscr{Q}_{\mathcal{B}}$,  depends on a (fixed) homotopy finite space $\mathcal{B}$, called \emph{the base space}. Explicitly, the $(n+1)$-TQFT $\mathscr{Q}_{\mathcal{B}}$ sends a closed $n$-manifold $M$ to the free vector space on the set of homotopy classes of maps $M \to \mathcal{B}.$  {The linear maps assigned to cobordisms $W\colon M \to N$ between manifolds are derived from the \emph{homotopy order}, called in \cite{HLA,Baez_Dolan,martins_porter21} \emph{homotopy cardinality}, of certain  spaces of functions $W \to \mathcal{B}$.}

We hence have:
\begin{theorem}[Relation with Quinn's {finite total homotopy} TQFT]\label{QTFT1}{
There exists a (2+1)-dimensional TQFT whose state spaces are canonically isomorphic to the ground-state spaces of the Hopf-algebraic higher Kitaev model for $\GoXo$. Namely consider Quinn's {finite total homotopy} TQFT $\mathscr{Q}_{\mathcal{B}}$
\cite{Quinn,martins_porter21} with base space $\mathcal{B}=B_{X//G}$. This TQFT sends each surface $\Sigma$ to the free vector space on the set of homotopy classes of maps $\Sigma \to B_{X//G}$.}
\end{theorem}
\begin{proof}This follows directly from Theorem \ref{thm:GX-ground-states} and the construction of Quinn's {finite total homotopy} TQFT $\mathscr{Q}_{\mathcal{B}}$.
\end{proof}

\noindent Note that the {finite total homotopy TQFT $\mathscr{Q}_{\mathcal{B}}$ is for $\mathcal{B}= B_{X//G}$  constructed as the Dijkgraaf-Witten TQFT, with trivial cocyle \cite{DW}, except 
using a groupoid, in this case $X // G$,  rather than a group. This follows from the discussion in \cite{martins_porter07}. {More details can be found in \cite[\S 8.2 and \S 8.4.1]{martins_porter21}, in the more general languages of crossed complexes and of extended TQFTs.}

\subsection{The ground-state space of the $\GEXo$-model}

The results in Subsubsection \ref{gsdgoxo} extend to the $\GEXo$-model. So consider a crossed module of groups $(E \xrightarrow{\partial} G, \lact)$, and another group $X$ on which $G$ acts by automorphisms, such that $\im (\partial) \subseteq G$ acts trivially on $X$. We will use the language of crossed modules of groupoids \cite{brown_hha}\cite{brown_icen}\cite[\S 2.1]{companion}, and in particular the fundamental crossed module ${\Pi_2(\Sigma,\Sigma^1_L,\Sigma^0_L)}$ of a surface $\Sigma$, with a CW-decomposition \cite{brown_hha}.  Here $\Sigma$ is given the CW-decomposition arising from  a triangulation $L$.

{Consider the trivial action $\lact$ of $E$ on $X$, so if $x \in X$ and $e \in E$, then
$e \lact x = x =\partial(e) \lact x$. Therefore, the action groupoid $X//E$ is totally disconnected (i.e.\ there are no morphisms between different objects), and we have a groupoid functor $f\colon X // E \to X // G$, which is the identity on objects, 
 {and sends the morphism $(x,e)$ to $(x,\partial(e)).$} We have a groupoid action  of $X//G$ on $X//E$ such that} \[(x \xrightarrow{(x,g)} g^{-1} \lact x)  \lact ( g^{-1} \lact x \xrightarrow{(x,e)}  g^{-1} \lact x)= {(  x \xrightarrow{(x,g\lact e)}   x).}\]
This gives a crossed module of groupoids  denoted by $\mathcal{G}(G,E,X)$. 
The same type of argument as in Subsubsection \ref{gsdgoxo}, combined with the construction in the last section in \cite{companion} gives:
\begin{theorem}  
Let $(\Sigma,L)$ be an oriented surface with a triangulation, {with a total order on the set of vertices, with the  adequate cell decomposition of Lemma \ref{lem:adequatefromtriang}}.

There exists a canonical isomorphism between the ground-state space of the $\GEXo$-model and the free vector space on the set of homotopy classes of crossed module maps {$\Pi(\Sigma,\Sigma^1_L,\Sigma^0_L) \to \mathcal{G}(G,E,X)$}, as defined in \cite{brown_icen,Brown_Higgins}\cite[\S 4.3.1]{companion}.
In particular, there is a canonical isomorphism between the ground-state space  {of the $\GEXo$-model} and the free vector space on the set of homotopy classes of maps $\Sigma \to B_{\mathcal{G}(G,E,X)}$, the classifying space \cite{Brown_Higgins,brown_hha} of $\mathcal{G}(G,E,X)$.
\end{theorem}
\begin{proof}
On the combinatorial side, the proof of the statements follows the same pattern as in the $\GoXo$-case. The fact that $\partial(E)\subseteq G$ acts trivially in $X$  plays a key role.
The final statement follows (directly) from \cite[Theorem A]{Brown_Higgins}{\cite[Theorem 11.4.19]{Brown_Higgins_Sivera}}, as explained in \cite[\S 5.2]{companion}.
\end{proof}
\begin{remark}
 Cf.\ Theorem \ref{QTFT1}, including the preliminary discussion in {Subsubsection}~\ref{sec:relQuinn}. Classifying spaces of crossed modules of groupoids are homotopy 2-types \cite{Brown_Higgins,brown_hha,martins_porter07}. If the crossed modules are finite then their classifying spaces have only a finite number of path-components, each of which with finite fundamental and second homotopy groups for all choices of a based point. And hence they are homotopy finite spaces.
In particular, Quinn's {finite total homotopy} TQFT $\mathscr{Q}_{\mathcal{B}}$, with base space $\mathcal{B}=B_{\mathcal{G}(G,E,X)}$, again gives a TQFT whose state spaces are the ground-state spaces of the $\GEXo$-model. 
\end{remark}

\subsection{The $\ooXY$-model {and its relation to Quinn}{'s} {finite total homotopy}  {TQFT}}

Let $f\colon  Y \to X$ be a group homomorphism, {where $X$ and $Y$ are finite}. We consider the {Hopf} crossed module defined in Subsection \ref{sec:main_example} for the case when $G$ and $E$ are both the trivial group.
We thus obtain a crossed module of Hopf algebras $(\Fun{X} \xrightarrow{f^*} \Fun{Y}, \lact)$, where the action of $\Fun{Y}$ on $\Fun{X}$ is the trivial one, namely: \[\psi\lact \eta= \eps(\psi) \eta=\psi(1_Y) \eta.\]
We will use the Haar integrals in $\Fun{X}$, $\Fun{Y}$ and  $\Fun{Y}^*\cong\CC Y$, as in Examples \ref{ex:CG} and \ref{ex:FG}.

Let $\Sigma$ be an oriented surface (with no boundary). 
{We continue to work with adequate cell decompositions arising from triangulations of $L$, with a total order on the set of vertices, as in Lemma \ref{lem:adequatefromtriang}.}
First, as a vector space, the total state space $\mathcal{H}_L$ as defined in Section \ref{sec:model} is here: \[ \mathcal{H}_L = \Fun{Y}^{{\otimes L^1}} \ot \Fun{X}^{{{\otimes L^2}}} \cong \linhull_\CC ( Y^{L^1} \times X^{L^2} ) . \]

As in Subsection \ref{goxo}, it is convenient to reformulate the total state space in terms of $\XY$-colourings $\R=(\R_1,\R_2)$ of $(\Sigma,L)$, pairs of maps $\R_2\colon L^2 \to X$ and $\R_1\colon L^1 \to Y$, see Definition \ref{def:colourings}.
Clearly $\mathcal{H}_L$ is isomorphic to the free vector space on the set of $\XY$-colourings of $(\Sigma,L)$. 
 
If $P$ is a plaquette, with vertices $a_P<b_P<c_P$, {recall} we put as base point  $v_P=a_P$.  Put $d_0(P)=(b_P,c_P)$, $d_1(P)=(a_P,c_P)$ and $d_2(P)=(a_P,b_P)$.

\subsubsection{Plaquette operators}\label{sec:XY-plaqu}
Let $z\in Y {\subset \Fun{Y}^* = \CC Y}$. Let $P \in L^2$. {We will now determine the plaquette operator $F_P^z$, in the $\ooXY$ case, according to the general definition of plaquette operators in Definition \ref{def:plaquette-operator}.}

Suppose that the orientation on $P$ induced by the order of the vertices is opposite to the orientation induced by {$\Sigma$}. The plaquette operator
$F_P^z$ is such that the colours on the plaquette $P$ and its edges transform as:
\begin{align*}\vcenter{  
\xymatrix@R=8pt@C=8pt{&& b_P \ar[ddr]^{y_{d_0(P)}} \\ && x\\
& a_P \ar[uur]^{y_{d_2(P)}} \ar[rr]_{y_{d_1(P)}} && c_P  } } 
 &\stackrel{ F_P^z(\R)}{\lto} \sum_{\substack{ p' p=y_{d_2(P)} \\
{q' q=y_{d_0(P)}} \\ 
r' r=y_{d_1(P)}^{-1} \\ 
{s' s=x} }}
\delta(p,q)\, \delta(p,r^{-1})\,
\delta(f(p),(s')^{-1}) \,
\delta(z,p) \,
\delta(f(z),(s')^{-1})
\vcenter{\xymatrix@R=8pt@C=8pt{& b_P \ar[ddr]^{q'} \\ & s\\
 a_P \ar[uur]^{p'} \ar[rr]_{(r')^{-1}} && c_P  }}\\
&=
\vcenter{\xymatrix@R=8pt@C=8pt{&& b_P \ar[ddr]^{y_{d_0(P)} z^{-1}} \\ && f(z)x\\
& a_P \ar[uur]^{ y_{d_2(P)} z^{-1}} \ar[rr]_{ z y_{d_1(P)}} && c_P \, . }}
\end{align*}
Outside of $P$ the colourings on simplices do not change. 

If the orientation on $P$ induced by the total order of the vertices of the triangle is the same as the orientation induced by that of $\Sigma$,  the plaquette operator $F_P^z$ is:
\[
\vcenter{\xymatrix@R=8pt@C=8pt{&& b_P \ar[ddr]^{y_{d_0(P)}} \\ && x\\
& a_P \ar[uur]^{y_{d_2(P)}} \ar[rr]_{y_{d_1(P)}} && c_P  }} \stackrel{ F_P^z(\R)}{\lto}\vcenter{
\xymatrix@R=8pt@C=8pt{&& b_P \ar[ddr]^{z y_{d_0(P)} } \\ && f(z)x\\
& a_P \ar[uur]^{ z y_{d_2(P)} } \ar[rr]_{  y_{d_1(P)} z^{-1}} && c_P  \, .}}
\]

\subsubsection{Edge operators}\label{sec:XYedge}
Let $t \in L^1$ be an oriented edge. {Given $\psi \in \Fun{X}=\mathrm{span}_{\CC}\{\delta_x \mid x \in X\}$, let us determine the edge operator $E_t^{\psi}$, as in  Definition \ref{def:edge-operator}. Let  $P$ and $Q$ be the two plaquettes that have $t$ as a common edge.}
Given that the action of $\Fun{Y}$ on $\Fun{X}$ is the trivial one, namely $\psi \lact \eta=\psi(1_Y) \eta$, the formula for the edge  operators $E_t^\psi$ in Definition \ref{def:edge-operator}  simplifies enormously. In particular the edge operators $E_t^\psi$  are independent of the base-points of the plaquettes $P$ and $Q$, and only depend on the colouring of the edge $t$. {Let us explain the general formula for $E_t^{\delta_w}$ if $w\in X$. In the figures below, $t$ is the middle edge coloured by $z\in Y$. The other edges in the figure can have arbitrary orientations.} For $w \in X$, the edge operator $E_t^{\delta_w}$ is:
\[
\begin{gathered}\xymatrix@R=8pt@C=8pt{&& & \\
&\ar@{-}[rru]^{y} \ar@{-}[rrd]_{y'} &x && x'& \ar@{-}[llu]_{y'''} \ar@{-}[lld]^{y''}\\
&&&\ar[uu]^z
} \end{gathered}
\quad \stackrel{
E_t^{\delta_w}}{\longmapsto}  \quad   \delta(f(z)w^{-1}x,x')
\begin{gathered}
\xymatrix@R=8pt@C=8pt{& & \\
\ar@{-}[rru]^{y} \ar@{-}[rrd]_{y'} &x && x'& \ar@{-}[llu]_{y'''} \ar@{-}[lld]^{y''}\\
&&\ar[uu]^z
}    
\end{gathered}
\]
({Note that $\delta(f(z)w^{-1}x,x')= \delta(x\, {x'}^{-1}\, f(z),w)$}.) The edge projector $E_t=E_t^{\delta_{1_X}}$ is hence:
\[
\begin{gathered}\xymatrix@R=8pt@C=8pt{&& & \\
&\ar@{-}[rru]^{y} \ar@{-}[rrd]_{y'} &x && x'& \ar@{-}[llu]_{y'''} \ar@{-}[lld]^{y''}\\
&&&\ar[uu]^z
} \end{gathered} \quad
\stackrel{
E_t}{\longmapsto}   \quad  \delta(f(z)x,x') 
\begin{gathered}\xymatrix@R=8pt@C=8pt{& & \\
\ar@{-}[rru]^{y} \ar@{-}[rrd]_{y'} &x && x'& \ar@{-}[llu]_{y'''} \ar@{-}[lld]^{y''}\\
&&\ar[uu]^z
}\end{gathered}
\]

\subsubsection{Vertex operators}\label{sec:vertex_ops_XY} {We now unpack how the vertex operators defined in the general case in Subsection \ref{sec:vertex_ops} look like in the case of the $\ooXY$-model.
Let $v\in L^0$, and choose a plaquette $P$ that has $v$ as a vertex. Given $\psi \in \Fun{Y}=\mathrm{span}_\CC\{\delta_z \mid z \in Y\}$, we determine the vertex operator $V_{v,P}^\psi.$ It suffices to consider the case $V_{v,P}^{\delta_z}$, where $z \in Y.$}

Let $t_1, \dots, t_n$ be the edges of $(\Sigma,L)$ incident to $v$, starting from the second (hence last) edge of $P$ incident to $v$, when going around $v$ counterclockwise, and in counterclockwise order. We let $\theta_{v,t_i}=-1$ if the edge $t_i$ is pointing towards $v$ and $\theta_{v,t_i}=1$ if $t_i$ is pointing away from $v$.
Let $P_1,\dots, P_k$ be the plaquettes of $(\Sigma,L)$, possibly none, that have $v$ has a base-point. 
Fix a $\XY$-colouring $\R$. Then $ V_{v,P}^{\delta_z}$ {can only change} the colours of edges $t_1,\dots, t_n$ and the plaquettes $P_1,\dots, P_k$. Let $(y_1,\dots, y_n; x_1, \dots x_k)$ be those colours. {(It turns out that only the colours of the edges change under the vertex operators.)} Put:
\[\Delta^{(n+k)}(z)=\sum_{\substack{w_1,\dots,w_{n+k} \in Y \\  w_1\dots w_{n+k}=z }} w_1 \otimes \dots \otimes w_{n+k}.\]
Noting that the action of $\Fun{Y}$ on $\Fun{X}$ is the trivial one $\psi \lact \phi= \psi(1_Y) \phi$, 
it follows that:
\begin{align*}
&(y_1,\dots, y_n; x_1, \dots, x_k)\\ &\stackrel{V_{v,P}^{\delta_z}}{\longmapsto} \sum_{w_1\dots w_{n+k} = z} \delta(w_1,y_1^{\theta_{v,t_1}}) \dots \delta(w_n,y_n^{\theta_{v,t_n}})\, \delta(w_{n+1},1_Y)\dots \delta(w_{n+k},1_Y)\, (y_1,\dots, y_n; x_1, \dots, x_k)\\
&=\sum_{w_1\dots w_{n+k} = z} \delta(w_1,y_1^{\theta_{v,t_1}}) \dots \delta(w_n,y_n^{\theta_{v,t_n}}) (y_1,\dots, y_n; x_1, \dots, x_k)\\
& =\delta(y_1^{\theta_{v,t_1}} \dots y_n^{\theta_{v,t_n}},z)\, (y_1,\dots, y_n; x_1, \dots, x_k).
\end{align*}
Hence given an $\XY$-colouring $\R$ we have:
\[ V_{v,P}^{\delta_z}(\R)=\delta\big(\R_1(t_1)^{\theta_{v,t_1}} \dots \R_1(t_n)^{\theta_{v,t_n}} , z\big)\, \R.
\]

The vertex projector $V_{v}=V_{v,P}^{\delta_{1_Y}}$ takes the form:
\[V_v(\R)=\delta\big(\R_1(t_1)^{\theta_{v,t_1}} \dots \R_1(t_n)^{\theta_{v,t_n}}, 1_Y\big )\, \R. 
\]

\subsubsection{The $\ooXY$-model on the dual cell decomposition.}\label{sec:XY_indual}

The explicit forms of the plaquette, edge and vertex operators in the $\ooXY$-model become very transparent in the dual cell decomposition to $(\Sigma,L)$. {(This is essentially as in \cite[Lemma~2.6]{balsam-kirillov}, which corresponds to the $X=\{1\}$ case.)} So consider the dual cell decomposition $L^*$ to $L$; see \cite[Section 3]{Turaev_Viro}. {We have a 0-cell $P^*$ of $L^*$ for each 2-cell (plaquette) $P$ of $L$, at its barycentre. We have an oriented $1$-cell $t^*$ of $L^*$ for each oriented 1-cell (edge) $t \in L^1$, oriented from the vertex of $L^*$ corresponding to the plaquette on the right-hand side of $t$, towards the vertex corresponding to the plaquette on the left-hand side of $t$.} Finally we have a $2$-cell $v^*$ of $L^*$ for each $0$-cell (vertex) $v$ of $L$. The local configuration is as in Figure \ref{fig:dual_lat}.
\begin{figure}[ht!]
    \centering
    \includegraphics[scale=0.25]{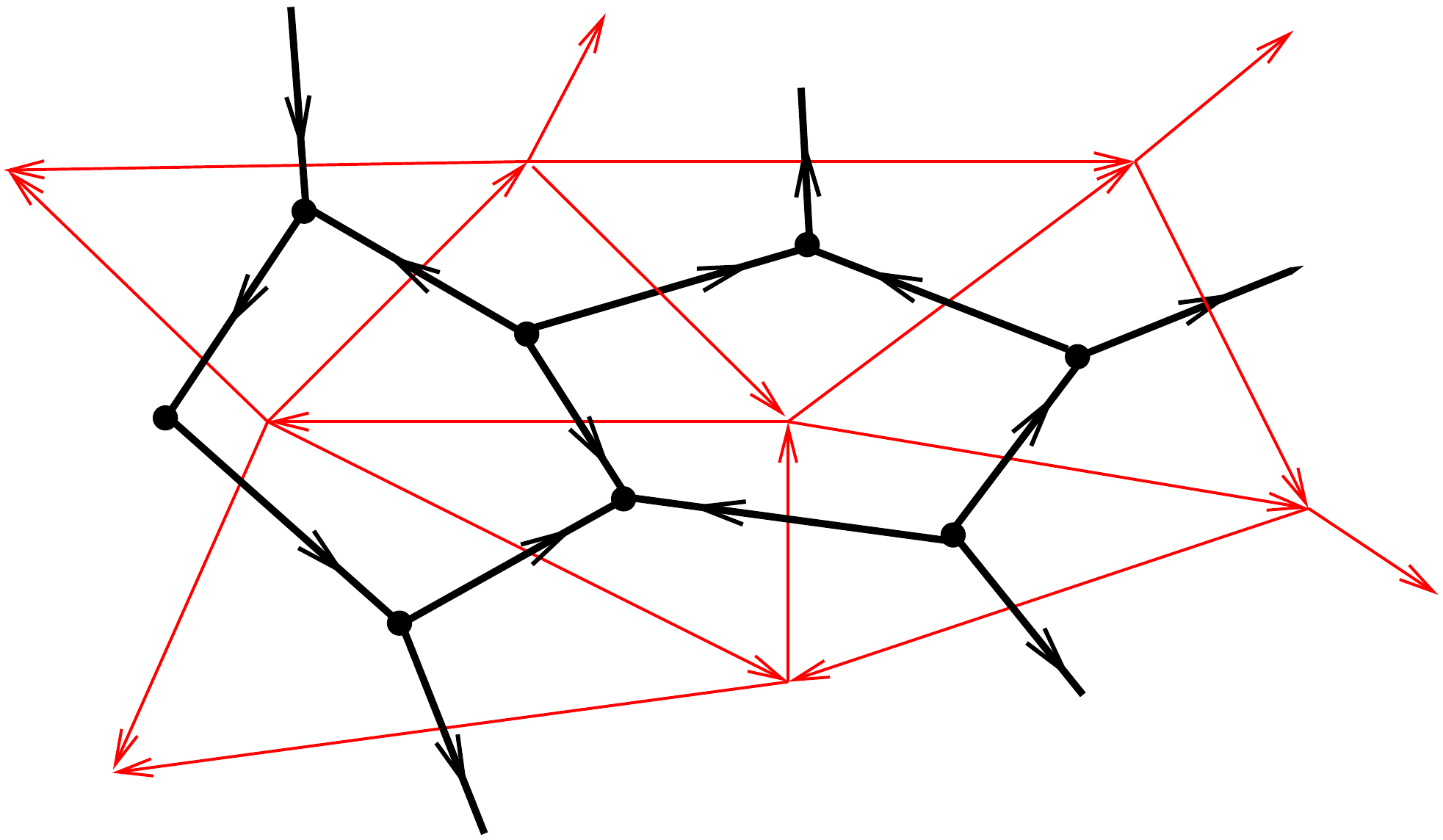}
    \caption{The local picture of a  triangulation $L$ (in red) and its dual cell decomposition $L^*$ (in black, {and with thicker edges}).}
    \label{fig:dual_lat}
\end{figure}

The total {Hilbert} space of the $\ooXY$-model is given by the free vector space on the set of all $\XY$-colourings of $(\Sigma,L)$.
Dually, such a colouring $\R = (\R_1 : L^1 \to Y, \R_2 : L^2 \to X)$ corresponds to labellings $\R^*_1 : (L^*)^1 \to Y$ and $\R^*_0 : (L^*)^0 \to X$. {To $\R^*=(\R^*_0,\R^*_1)$ we call the \emph{ dual colouring} to $\R$.} 
 {The total Hilbert space of the $\ooXY$-model will from now on be seen as the free vector space on the set of those dual colourings $\R^*$. We will graphically represent dual colourings as in Figure \ref{fig:dual_lat_with_labels}.} 
\begin{figure}[ht!]
 \labellist
\pinlabel $\small{\red{v}}$ at 355 166
\pinlabel $\small{\red{P}}$ at 411 199
\pinlabel $\small{\red{t_4}}$ at 355 66
\pinlabel $\small{\red{t_2}}$ at 227 319
\pinlabel $\small{\red{t_3}}$ at 94 172
\pinlabel $\small{\red{t_1=t}}$ at 546 300
\pinlabel $\small{\red{t_5}}$ at 586 128
\pinlabel $\small{x_1}$ at 551 217
\pinlabel $\small{x_2}$ at 377 309
\pinlabel $\small{x_3}$ at 144 234
\pinlabel $\small{x_4}$ at 241 113
\pinlabel $\small{x_5}$ at 472 110
\pinlabel $\small{y_1}$ at 482 265
\pinlabel $\small{y_2}$ at 294 298 
\pinlabel $\small{y_3}$ at 193 162
\pinlabel $\small{y_4}$ at 376 104
\pinlabel $\small{y_5}$ at 526 185
\pinlabel $\small{y'}$ at 641 256
\pinlabel $\small{\blue{P^*}}$ at 530 250
\pinlabel $\small{\red{Q}}$ at 350 230
\pinlabel $\small{\blue{Q^*}}$ at 355 272
\pinlabel $\small{\blue{t_1^*}}$ at 415 292
\endlabellist
\centering
\includegraphics[scale=0.4 ]{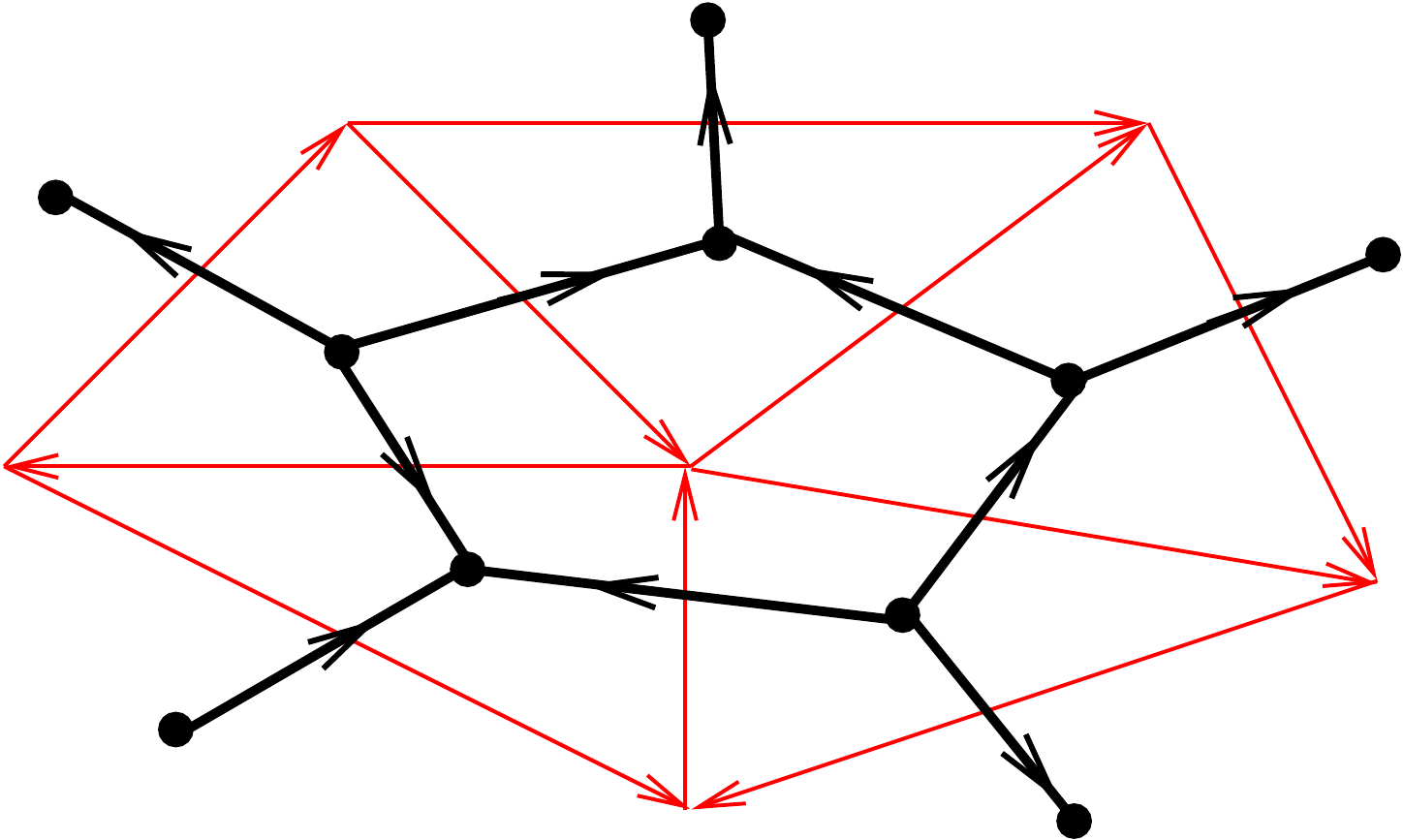}
\caption{A dual colouring $\R^*$ to an $\XY$-colouring $\R$. {The simplices of the original triangulation have labels in red. Cells of the dual cell decomposition are 
labelled
in blue.} The labellings given by the dual colouring $\R^*$ are all in black. Hence $y_1,\dots,y_5 \in Y$ and $x_1,\dots,x_5 \in X.$}
\label{fig:dual_lat_with_labels}
\end{figure}

\subsubsection*{Plaquette operators in the dual picture} Cf. \S \ref{sec:XY-plaqu}.
The plaquette operator $F_P^z\colon \mathcal{H}_L \to \mathcal{H}_L$, for $z \in Y$, is such that for an edge $t \in (L^*)^1$:
\begin{equation}\label{Eq:pl-edge}
F_P^z(\R^*)_1(t)=\begin{cases} z\, (\R^*)_1(t), \textrm{ if } t \in (L^*)^1 \textrm{ starts in  } {P^*},\\
(\R^*)_1(t)\, z^{-1}, \textrm{ if } t \in (L^*)^1 \textrm{ ends in  } {P^*},\\
 (\R^*)_1(t), \textrm{ if } t \textrm{ is not adjacent to } {P^*};
\end{cases}
\end{equation}
and for a vertex $w\in (L^*)^0$:\begin{equation}\label{Eq:pl-vertex}
F_P^z(\R^*)_0(w)=\begin{cases} f(z)\, F_P^z(\R^*)_0(w), \textrm{ if } w={P^*}, \\
 F_P^z(\R^*)_0(w), \textrm{ if } w\neq {P^*}.
 \end{cases}
\end{equation}
{The graphical picture for the action of $F_P^g$ on the dual colouring in Figure \ref{fig:dual_lat_with_labels} is as in Figure \ref{fig:dual_lat_with_labels_plaquette-action}.}
\begin{figure}[ht!]
 \labellist
\pinlabel $\small{\red{v}}$ at 355 164
\pinlabel $\small{\red{P}}$ at 411 199
\pinlabel $\small{\red{t_4}}$ at 355 66
\pinlabel $\small{\red{t_2}}$ at 229 316
\pinlabel $\small{\red{t_3}}$ at 96 171
\pinlabel $\small{\red{t_1=t}}$ at 548 300
\pinlabel $\small{\red{t_5}}$ at 586 129
\pinlabel $\small{x_2}$ at 374 311
\pinlabel $\small{x_3}$ at 144 234
\pinlabel $\small{x_4}$ at 241 113
\pinlabel $\small{x_5}$ at 473 110
\pinlabel $\small{g\, y_1}$ at 484 267
\pinlabel $\small{y_2}$ at 291 305 
\pinlabel $\small{y_3}$ at 191 162
\pinlabel $\small{y_4}$ at 376 104
\pinlabel $\small{y_5 \, g^{-1}}$ at 543 180
\pinlabel $\small{g \, y'}$ at 649 258
\pinlabel $\small{f(g) x_1}$ at 571 215
\pinlabel $\small{\blue{P^*}}$ at 530 250
\pinlabel $\small{\red{Q}}$ at 350 230
\pinlabel $\small{\blue{Q^*}}$ at 355 273
\pinlabel $\small{\blue{t_1^*}}$ at 410 295
\endlabellist
\centering
\includegraphics[scale=0.4 ]{dual_col}
\caption{The plaquette operator $F_P^g$ in the dual colouring picture. The figure shows the action on the colouring in Figure \ref{fig:dual_lat_with_labels}.}
\label{fig:dual_lat_with_labels_plaquette-action}
\end{figure}

\subsubsection*{The edge operators in the dual picture}
Let $t \in L^1$. {Cf \S \ref{sec:XYedge}.}
The edge operator $E_t^{\delta_x}$, for $x \in X$, is such that, if $Q$ is on the left-hand side of $t$ and $P$ is on the right-hand side of $t$, so we have an edge $P^* \xrightarrow{t^*} Q^*$ in the dual cell decomposition, {then}:
\[E_t^{\delta_x}( \R^*)=\begin{cases} \R, \textrm{ if } f\big((\R^*)_1(t^*) \big)\,x^{-1}\, (\R^*)_0(Q^*)=(\R^*)_0(P^*),\\
0, \textrm{ otherwise}.
\end{cases}
\]
So the edge projector {$E_t=E_t^{\delta_{1_X}}$} is
\begin{equation}
    E_t( \R^*)=\begin{cases} \R^*, \textrm{ if } f\big((\R^*)_1(t^*) \big)\, (\R^*)_0(Q^*)=(\R^*)_0(P^*),\\
0, \textrm{ otherwise}.
\end{cases}
\end{equation}
Looking at the dual colouring $\R^*$ in Figure  \ref{fig:dual_lat_with_labels}, the edge operators $E_{t_1}^{\delta_x}(\R)$ and $E_{t_2}^{\delta_{x'}}(\R)$ {satisfy:}
\[
 E_{t_1}^{\delta_x}(\R^*)=\begin{cases}
            \R^*, \textrm{ if }  xf(y_1)^{-1}x_1=x_2, \\
            0, \textrm{otherwise},
               \end{cases}
\textrm{ and }\quad
 E_{t_2}^{\delta_{x'}}(\R^*)=\begin{cases}
            \R^*, \textrm{ if }  x'f(y_2)^{-1}x_3=x_2,  \\
            0, \textrm{otherwise}.
               \end{cases}
\]
So the edge projectors $E_{t_1}=E_{t_1}^{\delta_{1_X}}$ and $E_{t_2}=E_{t_2}{\delta_{1_X}}$ are:
\[
 E_{t_1}(\R^*)=\begin{cases}
            \R^*, \textrm{ if }  f(y_1)^{-1}x_1=x_2, \\
            0, \textrm{otherwise},
               \end{cases}
\textrm{ and } \quad
 E_{t_2}(\R^*)=\begin{cases}
            \R^*, \textrm{ if }  f(y_2)^{-1}x_3=x_2,  \\
            0, \textrm{otherwise}.
               \end{cases}
\]
\begin{remark}\label{rem:in_act_grp}{Later on in \S\ref{sec:gsXY}, we will consider the action groupoid $X//Y$ of the action of $Y$ on $X$ such that $y\lact x:=f(y)x$; conventions for action groupoids are as in \S\ref{gsdgoxo}. Note that the edge projector on a dual edge coloured as $x \xrightarrow{y} x'$, where $x,x' \in X$ and $y \in Y$, hence chooses the configurations that make $x \xrightarrow{(x,y)} x'$ an arrow in the action groupoid $X//Y$. This will play a key role in determining the ground-state space of the $\ooXY$-model.}
\end{remark}
\subsubsection*{Vertex operators in the dual picture}
Let $\R^*=(\R^*_0,\R^*_1)$ be a dual colouring. Let $v \in L^1$ and $P$ be an adjacent plaquette. We resume the notation in \S \ref{sec:vertex_ops_XY}. The choice of $P$ equips the plaquette $v^* \in (L^*)^2$ of the dual cell decomposition with a base-point $P^* \in (L^*)^0$. Put:
\[\hol^{\R^*}_{\partial(v^*)} = \R^*_1(t_1^*)^{\theta_{v,t_1}} \dots \R_1^*(t_n)^{\theta_{v,t_n}}
.\]
Let  $w \in Y$, the vertex operator $V_{v,P}^{\delta_{w}}$ is such that:
\[V_{v,P}^{\delta_w}( \R^*)=\begin{cases} \R^*, \textrm{ if } 
{\hol^{\R^*}_{\partial(v^*)} = } \R^*_1(t_1^*)^{\theta_{v,t_1}} \dots \R_1^*(t_n)^{\theta_{v,1_n}} = w,
\\
0, \textrm{ otherwise}.
\end{cases}
\]
So the vertex projector is $V_v=V_{v,P}^{\delta_{1_Y}}$, {is such that}:
\begin{equation}\label{XYvertex_proj}
V_v( \R^*)=\begin{cases} \R^*, \textrm{ if } 
{\hol^{\R^*}_{\partial(v^*)}  }=\R^*_1(t_1^*)^{\theta_{v,t_1}} \dots \R_1^*(t_n)^{\theta_{v,t_n}} = 1_Y,
\\
0, \textrm{ otherwise}.
\end{cases}
\end{equation}
\noindent {Cf. \cite[\S 2.3]{balsam-kirillov} for the $X=1$ case.}

{If  $\R^*$ locally looks like Figure \ref{fig:dual_lat_with_labels}, the vertex operator $V_{v,P}^{\delta_w}$ and the vertex projector $V_v=V_{v,P}^{\delta_{1_Y}}$ are:}
\[ V_{v,P}^{\delta_w}(\R^*)=\begin{cases}
            \R^*, \textrm{ if }  y_1 \, y_2^{-1} \,  y_3\, y_4^{-1} \,  y_5=w,\\
            0, \textrm{otherwise},
               \end{cases} \textrm{ and } \quad V_{v}(\R^*)=\begin{cases}
            \R^*, \textrm{ if }  y_1 \, y_2^{-1} \,  y_3\, y_4^{-1} \,  y_5=1,\\ 0, \textrm{otherwise}.
               \end{cases} 
\]

\subsubsection{The ground-state space of the $\ooXY$-model and its relation with Quinn's {finite total homotopy} TQFT}\label{sec:gsXY}

As for the
$\GEXo$ model, we can prove that the ground-state space of the $\ooXY$-model on a surface $\Sigma$ with a triangulation $L$ is canonically isomorphic to the free vector space of homotopy classes of maps from the surface $\Sigma$ into a homotopy finite space $\mathcal{B}=B_{X // Y}.$ From this it follows that the ground-state space is canonically independent of the triangulation $L$ of $\Sigma$.

{Cf. Remark \ref{rem:in_act_grp}. Consider the action groupoid \S \ref{gsdgoxo} of the action of $Y$ on $X$ defined as
$y \lact x:=f(y) x.$}
\begin{lemma} There is a canonical isomorphism between the ground-state space of the $\ooXY$-model and the free vector space on the set of equivalence classes of groupoid functors $\pi_1(\Sigma, \Sigma_{L^*}^0) \to X//Y$, considered up to natural transformation. 
\end{lemma}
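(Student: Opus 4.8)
The plan is to carry out the computation entirely in the dual picture of Subsubsection~\ref{sec:XY_indual}, mirroring the argument for the $\GoXo$-model in Subsubsection~\ref{gsdgoxo} (Lemma~\ref{main_idea} and the two lemmas after it) with the roles of $L$ and its dual $L^*$ interchanged. By Theorem~\ref{thm:hamiltonian} the ground-state space is the common fixed-point subspace of all the projectors $V_v$, $E_t$ and $F_P$, equivalently the image of their product, acting on the free vector space on dual colourings $\R^*=(\R^*_0,\R^*_1)$ with $\R^*_0\colon (L^*)^0 \to X$ and $\R^*_1\colon (L^*)^1 \to Y$.

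First I would treat the two families of $0/1$-valued projectors. By \eqref{XYvertex_proj} the vertex projector $V_v$ fixes $\R^*$ precisely when the holonomy $\hol^{\R^*}_{\partial(v^*)}$ of $\R^*_1$ around the dual plaquette $v^*$ equals $1_Y$; imposing this for all $v\in L^0$ says precisely that $\R^*_1$ defines a functor $\pi_1(\Sigma,\Sigma^0_{L^*}) \to \ast//Y$ (with $\R^*_0$ still unconstrained), since the fundamental groupoid is presented by the dual $1$-skeleton together with one relation for each dual $2$-cell $v^*$ (see \cite{brown_hha} and the discussion around Lemma~\ref{main_idea}). By Remark~\ref{rem:in_act_grp} the edge projector $E_t$ fixes $\R^*$ precisely when the labelled dual edge $\R^*_0(P^*)\xrightarrow{\R^*_1(t^*)}\R^*_0(Q^*)$ is a genuine arrow of the action groupoid $X//Y$; imposing this for all $t$ says exactly that $\R^*_0$ promotes the $\ast//Y$-functor to a functor into $X//Y$. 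Hence the common fixed-point subspace of all $V_v$ and all $E_t$ is canonically the free vector space on the set of groupoid functors $\pi_1(\Sigma,\Sigma^0_{L^*}) \to X//Y$.

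Next I would account for the plaquette projectors. From the dual-picture formulas, $F_P^z$ for $z\in Y$ is the gauge transformation localized at the single dual vertex $P^*$: it replaces $\R^*_0(P^*)$ by $f(z)\,\R^*_0(P^*)$ and multiplies the incident dual-edge labels by $z^{\pm1}$. A short check shows this preserves both the flatness and the action-groupoid conditions, so $F_P^z$ permutes the set of functors, and its effect is (up to the evident inverse convention) postcomposition with the natural transformation whose only nontrivial component, at $P^*$, is the morphism of $X//Y$ given by $z$. Using $F_P^zF_P^w=F_P^{zw}$ together with the commutativity relations of Remark~\ref{rem:commXY}, these operators assemble into an action of the finite group $Y^{(L^*)^0}$ on the set of functors, and since the objects of $\pi_1(\Sigma,\Sigma^0_{L^*})$ are exactly the dual vertices, two functors lie in one $Y^{(L^*)^0}$-orbit if and only if they are related by a natural transformation (automatically an isomorphism, the target being a groupoid). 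The plaquette projector $F_P=F_P^\lambda=\tfrac{1}{|Y|}\sum_{z\in Y}F_P^z$ is the associated averaging operator, so $\prod_P F_P$ projects the free vector space on functors onto its $Y^{(L^*)^0}$-invariant subspace, which is spanned by the orbit sums and hence canonically isomorphic to the free vector space on the orbit set.

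Combining the three steps, the image of $\prod_{v}V_v\,\prod_{t}E_t\,\prod_{P}F_P$ --- the ground-state space by Theorem~\ref{thm:hamiltonian} --- is canonically the free vector space on the set of functors $\pi_1(\Sigma,\Sigma^0_{L^*}) \to X//Y$ taken up to natural transformation, as asserted. Exactly as in the $\GoXo$-case (cf.\ the lemma following Lemma~\ref{main_idea}), the only step needing real care is the bookkeeping identifying the combinatorial action of the gauge group $Y^{(L^*)^0}$ with honest natural transformations of groupoid functors; the remaining verifications are the routine compatibilities already recorded for the operators in Subsubsection~\ref{sec:XY_indual} and Remark~\ref{rem:commXY}.
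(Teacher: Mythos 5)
Your proposal is correct and follows essentially the same route as the paper: first identify the common fixed space of the vertex and edge projectors (in the dual picture) with the free vector space on groupoid functors $\pi_1(\Sigma,\Sigma^0_{L^*})\to X//Y$, then realise the plaquette projectors as averaging over a gauge action of $Y^{(L^*)^0}=Y^{L^2}$ whose orbits correspond exactly to natural-transformation classes. The only cosmetic difference is that you phrase the last step via orbit sums spanning the invariant subspace, whereas the paper phrases it as the image of $\prod_P F_P$ on $(\mathcal{H}_L)_0$; these are the same thing.
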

\noindent Here the notation $\Sigma_{L^*}^0$ means the set of vertices of $L^*$, i.e.\ $\Sigma_{L^*}^0$ is the 0-skeleton of the underlying CW-decomposition $L^*$ of $\Sigma$.
\begin{proof}
The result follows as in the proof of Lemma \ref{main_idea}.
First of all, the fundamental groupoid $\pi_1(\Sigma,\Sigma_{L^*}^0)$ is the free groupoid on the vertices and edges of the dual cell decomposition $L^*$, with one relation for each vertex $v \in L^0$ of $L$, i.e.\ plaquette $v^* \in (L^*)^2$ of the dual cell decomposition $L^*$; see \cite[\S 2.5]{companion}. The latter relation imposes that the composition of the arrows around $v^*$ is an identity, as in the formula for the vertex projector in Equation \eqref{XYvertex_proj}.

Consider the subspace of the total Hilbert space $\mathcal{H}_L$ obtained as:
\[ (\mathcal{H}_L)_0 := \big(\prod_{t \in L^1} E_t \prod_{v \in L^0} V_v \big) (\mathcal{H}_L)\]
Given the explicit formula of the vertex and edge projectors of the $\ooXY$-model in the dual picture, it follows that $(\mathcal{H}_L)_0$ has a  basis in one-to-one correspondence with the set of all groupoid functors $T\colon \pi_1(\Sigma, \Sigma_{L^*}^0) \to X//Y$. {(Remark \ref{rem:in_act_grp} is used here.)}

Each plaquette operator $F_P^g\colon \mathcal{H}_L \to \mathcal{H}_L$ restricts to a map $F_P^g\colon (\mathcal{H}_L)_0 \to (\mathcal{H}_L)_0$. This can be seen {from Equations \eqref{Eq:pl-edge}, \eqref{Eq:pl-vertex}  and \eqref{XYvertex_proj}.} Given that plaquette operators $F_P^g$ and $F_Q^h$, based at different plaquettes $P$ and $Q$ commute, {Lemma \ref{lem:plaquette-operators-commute},} we hence have a representation of $\prod_{P \in L^2} Y$ on $(\mathcal{H}_L)_0$ via the obvious product action by plaquette operators. 

The space $(\mathcal{H}_L)_0$ is the free vector space on all functors $T\colon \pi_1(\Sigma, \Sigma_{L^*}^0) \to X//Y$. Each element of $\prod_{P \in L^2} Y$ acts on a functor $T\colon \pi_1(\Sigma, \Sigma_{L^*}^0) \to X//Y$ by inducing a natural transformation of functors. This correspondence between actions by elements of $Y^{L^2}$ and natural transformations is easily seen to be one-to-one. (Again this is just about checking compatibility between the languages of plaquette operators and that of natural transformations between functors.)

Hence the ground-state space of the $\ooXY$-model, namely $\big(\prod_{P \in L^2} F_P)((\mathcal{H}_L)_0)$, is canonically isomorphic to the free vector space on the set of all groupoid functors $\pi_1(\Sigma, \Sigma_{L^*}^0) \to X//Y$, considered up to natural transformations.
\end{proof}

{As in \S \ref{gsdgoxo} and \cite[\S 5.2]{companion}, we now  use  \cite[Theorem A]{Brown_Higgins} (see also \cite[Theorem 7.16]{brown_hha}\cite[\S 11.4.iii]{Brown_Higgins_Sivera})} to find a canonical identification between the set of isomorphism classes of groupoid functors $\pi_1(\Sigma, \Sigma_{L^*}^0) \to X//Y$, considered up to natural transformation, and homotopy classes of maps $\Sigma \to B_{X//Y}$. 
\begin{theorem}Let $(\Sigma,L)$ be a surface with a triangulation {with a total order on the set of vertices.} Let $f\colon X \to Y$ be a map of finite groups. The ground state space of the $\ooXY$-model is canonically isomorphic to the free vector space on the set of homotopy classes of maps $\Sigma \to B_{X/ / Y}$, where $B_{X // Y}$ is the classifying space of the groupoid $X // Y$, and hence it is canonically triangulation independent. 

In particular there is a (2+1)-dimensional TQFT, namely Quinn's {finite total homotopy} TQFT $\mathscr{Q}_{B_{X//Y}},$ whose state space on a surface $\Sigma$ is given by the ground-state space of the $\ooXY$-model on $(\Sigma,L)$.
\end{theorem}
\begin{proof}
Again the first statement follows from \cite[Theorem A]{Brown_Higgins}. The second follows as in \S\ref{sec:relQuinn} from the construction of Quinn's {finite total homotopy} TQFT  \cite{Quinn} {and \cite[\S 4 and \S 8.2]{martins_porter21},} considering the space $B_{X // Y}$. {The latter is a 1-type, and moreover (as the groupoid $X//Y$ is finite) a homotopy finite space; see \cite{Brown_Higgins}.} 
\end{proof}

\begin{remark}As we can see from Subsubsection \ref{sec:XY_indual}, in the dual lattice decomposition to $(\Sigma,L)$, the $\ooXY$-model reduces to the Kitaev model based on $Y$ if $X=\{1\}$, \cite[Lemma 2.6]{balsam-kirillov}. In the general case the $\ooXY$-model gives a groupoid version of the Kitaev model based on the groupoid $X//Y$. We have additional edge operators, that do not exist in the original Kitaev model, that impose an energy penalty on configurations outside $X//Y$; see Remark \ref{rem:in_act_grp}. The case when $Y=\{1\}$ is just Potts model on $|X|$ {(see \cite{Martin_Potts}\cite[\S 1.1]{Fatimah}), considering again the dual lattice.} 
\end{remark}

\section{Appendix (full calculations)}

\subsection{{Proof of Proposition~\ref{pr:GEXY}: a class of examples of Hopf crossed modules}}\label{Proof:GEXY}

\begin{proof}{(Of Proposition~\ref{pr:GEXY})}
~ \\
\noindent \underline{$\Fun{Y}$ is a $\CC G$-module algebra and coalgebra}.

We check the module algebra conditions, where $\varphi, \psi \in \Fun{Y}$, and $g \in G$:
\begin{align*}
(g \lact (\varphi \psi))(y) &= (\varphi \psi)(g^{-1} \lact y) = \varphi(g^{-1} \lact y) \, \psi(g^{-1} \lact y)    = (g \lact \varphi)(y) \,(g \lact \psi)(y) \\
&= \big((g \lact \varphi)\, (g \lact \psi)\big)(y),
\end{align*}
and \[(g \lact 1_{\Fun{Y}})(y) = 1_{\Fun{Y}}(g^{-1} \lact y) = 1 = {\eps(g)} 1_{\Fun{Y}}(y).\]

We check the module coalgebra conditions, where $y,z \in Y$:
\begin{align*}
(\Delta(g \lact \varphi))(y,z) &= (g \lact \varphi)(yz) 
= \varphi(g^{-1} \lact (yz)) = \varphi((g^{-1} \lact y)(g^{-1} \lact z)) = \Delta(\varphi)(g^{-1} \lact y, g^{-1} \lact z) \\
&= \big((g \ot g)\lact\Delta(\varphi)\big)(y, z),
\end{align*}
and $\eps(g \lact \varphi) = (g \lact \varphi)(1) = \varphi(g^{-1} \lact 1) = \varphi(1) = \eps(\varphi)$.

Let us further explicitly verify the $\CC G$-linearity of the antipode: 
\[(g \lact S(\varphi))(y) = \varphi((g^{-1} \lact y)^{-1}) = \varphi(g^{-1} \lact y^{-1}) = S(g \lact \varphi)(y).\]
{Note that this also follows from Lemma~\ref{llem_Sprop}, together with the fact that $\CC G$ is cocommutative.}

\medskip

Since $\CC G$ is cocommutative (and hence the Yetter-Drinfeld condition \eqref{eq:yetter-drinfeld-condition-trivial} automatically holds), we can thus form the crossed product $\Fun{Y} \rtimes \CC G$, see Definition \ref{def:crossed-product-algebra}, which is a Hopf algebra with underlying vector space $\Fun{Y} \otimes \CC G$, and the following multiplication and comultiplication:
\begin{align*}
(\varphi \ot g)\cdot(\psi \ot h) &:= \varphi (g \lact \psi) \ot gh, \\
\Delta(\varphi \ot g) &:= (\varphi_{(1)} \ot g) \ot (\varphi_{(2)} \ot g).
\end{align*}

Next, consider the tensor product Hopf algebra $\Fun{X} \otimes \CC E$.

\medskip

\noindent\underline{$\Fun{X} \otimes \CC E$ is a $(\Fun{Y} \rtimes \CC G)$-module algebra and coalgebra}.

\noindent For $\varphi \ot g \in \Fun{Y} \ot \CC G$, $\xi \ot e \in \Fun{X} \ot \CC E$, {recall} \[(\varphi \ot g) \lact (\xi \ot e) = \varphi(1) (g \lact \xi) \ot (g \lact e),\] where $(g \lact \xi)(x) := \xi(g^{-1} \lact x)$
for $x \in X$.
This is indeed an $(\Fun{Y} \rtimes \CC G)$-action, because:
\begin{align*}
    (\varphi \ot g)\lact\big((\psi \ot h)\lact (\xi\ot e)\big) &= (\varphi \ot g)\lact (\psi(1) (h \lact \psi) \ot (h \lact e)) \\&= \varphi(1) \psi(1) (g \lact (h \lact \xi)) \ot (g \lact (h \lact e)) \\ &= \varphi(1) \psi(g^{-1} \lact 1) (g \lact (h \lact \xi)) \ot (g \lact (h \lact e)) \\ &= (\varphi (g \lact \psi) \ot g h) \lact (\xi \ot e)\\ &= \big( (\varphi \ot g)\cdot_\rtimes(\psi \ot h) \big) \lact (\xi \ot e).
    \end{align*}

Now we check the module algebra conditions, where $\varphi \ot g \in \Fun{Y} \ot \CC G$ and $\xi \ot e, \zeta \ot d \in \Fun{X} \ot \CC E$:
\begin{align*}
(\varphi \ot g) \lact \big( (\xi \ot e)(\zeta \ot d) \big) &= (\varphi \ot g) \lact (\xi \zeta \ot ed) \\
&= \varphi(1) (g \lact (\xi \zeta)) \ot g \lact (ed) \\
&= \varphi(1) (g \lact \xi) (g \lact \zeta) \ot (g \lact e)(g \lact d) \\
&= \varphi(1) \big( (g \lact \xi) \ot (g \lact e) \big) \big( (g \lact \zeta) \ot (g \lact d) \big) \\
&= \big(\varphi_{(1)}(1) (g \lact \xi) \ot (g \lact e)\big)\big(\varphi_{(2)}(1) (g \lact \zeta) \ot (g \lact d)\big) \\
&= \big( (\varphi_{(1)} \ot g) \lact (\xi \ot e)\big) \big( (\varphi_{(2)} \ot g) \lact (\zeta \ot d)\big),
\end{align*}
and:
\[(\varphi \ot g)\lact(1_{\Fun{X}} \ot 1_E) = \varphi(1) (g \lact 1_{\Fun{X}}) \ot (g \lact 1_E) = \varphi(1) (1_{\Fun{X}} \ot 1_E) = \eps(\varphi \ot g) 1_{\Fun{X}} \ot 1_E.\]

Next we check the module coalgebra conditions, where $\varphi \ot g \in \Fun{Y} \ot \CC G$ and $\xi \ot e \in \Fun{X} \ot \CC E$: 
\begin{align*}
\Delta((\varphi \ot g)\lact(\xi \ot e)) &= \Delta(\varphi(1) (g \lact \xi) \ot (g \lact e)) \\
&= \varphi(1) ((g \lact \xi)_{(1)} \ot (g \lact e)) \ot ((g \lact \xi)_{(2)} \ot (g \lact e)) \\
&= \varphi(1) ((g \lact \xi_{(1)}) \ot (g \lact e)) \ot ((g \lact \xi_{(2)}) \ot (g \lact e)) \\
&= \varphi_{(1)}(1) ((g \lact \xi_{(1)}) \ot (g \lact e)) \ot \varphi_{(2)}(1) ((g \lact \xi_{(2)}) \ot (g \lact e)) \\
&= ((\varphi_{(1)} \ot g) \ot (\varphi_{(2)} \ot g)) \lact \Delta(\xi \ot e)
\end{align*}
and \[\eps((\varphi \ot g) \lact (\xi \ot e)) = \varphi(1) \xi(g^{-1} \lact 1) \eps(g \lact e) = \varphi(1) \xi(1) = \eps(\varphi \ot g) \eps(\xi \ot e).\]

\medskip

\noindent\underline{$\Fun{X} \otimes \CC E$ satisfies the Yetter-Drinfeld condition 
{i.e. \eqref{eq:yetter-drinfeld-condition-trivial}.}}

For this we have to check that \[(\varphi_{(1)} \ot g) \ot (\varphi_{(2)} \ot g) \lact (\xi \ot e) = (\varphi_{(2)} \ot g) \ot (\varphi_{(1)} \ot g) \lact (\xi \ot e)\] holds for $\varphi \ot g \in \Fun{Y} \ot \CC G$ and $\xi \ot e \in \Fun{X} \ot \CC E$.
Indeed, it is easy to see that both sides of the equation are equal to $(\varphi \ot g) \ot (g \lact \xi \ot g \lact e)$. \\

\noindent\underline{There is a Hopf algebra morphism $\partial : \Fun{X} \otimes \CC E \to \Fun{Y} \rtimes \CC G$, 
{where}}
\begin{align*}
\partial : \Fun{X} \otimes \CC E &\lto \Fun{Y} \rtimes \CC G, \\
\xi \ot e &\lmapsto f^* \xi \ot \partial(e).
\end{align*} {Let us check the above is a} morphism of algebras. We have:
\begin{align*}
\partial\big((\xi \ot e)(\zeta \ot d)\big) &= \partial(\xi \zeta \ot ed) = f^*(\xi \zeta) \ot \partial(ed) 
= (f^*\xi) (f^*\zeta) \ot \partial(e) \partial(d) \\
&= (f^*\xi \ot \partial(e)\big) \big( f^*\zeta \ot \partial(d) \big) = \partial(\xi \ot e) \, \partial(\zeta \ot d),
\end{align*} where, in the penultimate step, we used the fact that $\partial(E)$ acts trivially on $X$.
Also \[\partial(1_{\Fun{X}} \ot 1_E) = f^*(1_{\Fun{X}}) \ot \partial(1_E) = 1_{\Fun{{Y}}} \ot 1_E.\]

Next we check that $\partial$ is a morphism of coalgebras:
\begin{align*}
\Delta(\partial(\xi \ot e)) &= ((f^* \xi)_{(1)} \ot \partial(e)) \ot ((f^* \xi)_{(2)} \ot \partial(e)) = ((\xi \circ f)_{(1)} \ot \partial(e)) \ot ((\xi \circ f)_{(2)} \ot \partial(e)) \\
&\stackrel{\text{$f$ group hom.}}{=} ((\xi_{(1)}\circ f) \ot \partial(e)) \ot ((\xi_{(2)}\circ f) \ot \partial(e)) 
= \partial(\xi_{(1)} \ot e) \ot \partial(\xi_{(2)} \ot e)
\end{align*}
and \[\eps(\partial(\xi \ot e)) = \xi(f(1_Y)) \eps(\partial(e)) = \xi(1_X) \eps(e) = \eps(\xi \ot e).\]

\medskip

\noindent \underline{$\partial : \Fun{X}\otimes\CC E \lto \Fun{Y}\rtimes \CC G$ satisfies the Peiffer conditions.}

 We check the first Peiffer condition.
On the one hand we have, for $y \in Y$,
\begin{align*}
\partial((\varphi \ot g)\lact(\xi \ot e))(y) &= \varphi(1) \partial(g \lact \xi) \ot g \lact e){(y)} \\
&= \varphi(1) \xi(g^{-1} \lact f(y)) \partial(g \lact e) \\
&= \varphi(1) \xi(f(g^{-1} \lact y)) \partial(g \lact e).
\end{align*}
On the other hand we have
\begin{align*}
\big((\varphi_{(1)} \ot g)\partial(\xi \ot e)S(\varphi_{(2)} \ot g)\big)(y) &= ((\varphi_{(1)} \ot g)(f^*\xi \ot \partial(e))(1 \ot g^{-1})(S(\varphi_{(2)}) \ot 1))(y) \\
&= ((\varphi_{(1)}(g \lact f^*\xi) \ot g\partial(e))(1 \ot g^{-1})(S(\varphi_{(2)}) \ot 1))(y) \\
&= ((\varphi_{(1)}(g \lact f^*\xi) \ot g\partial(e)g^{-1})(S(\varphi_{(2)}) \ot 1))(y) \\
&= ((\varphi_{(1)}(g \lact f^*\xi) \ot \partial(g \lact e))(S(\varphi_{(2)}) \ot 1))(y) \\
&= (\varphi_{(1)}(g \lact f^*\xi)(\partial(g \lact e) \lact S(\varphi_{(2)})) \ot \partial(g \lact e))(y) \\
&= \varphi_{(1)}(y) \,\xi(f(g^{-1}\lact y)) \varphi_{(2)}((\partial(g \lact e)^{-1} \lact y)^{-1}) \partial(g \lact e) \\
&\stackrel{(*)}{=} \varphi_{(1)}(y) \,\xi(f(g^{-1}\lact y)) \varphi_{(2)}(y^{-1}) \partial(g \lact e) \\
&= \varphi(1) \xi(f(g^{-1}\lact y)) \partial(g \lact e),
\end{align*}
where in $(*)$ we use the assumption that $\im(\partial)$ acts trivially on $Y$.
In fact, this is the only place where this condition is used. 

It is only left to check the second Peiffer condition: On the one hand we have, for $x \in X$,
\begin{align*}
\big(\partial(\xi \ot e) \lact (\zeta \ot d)\big)(x) &= \big((f^*\xi \ot \partial(e)) \lact (\zeta \ot d)\big)(x) \\
&= \xi(f(1)) \big(\partial(e) \lact \zeta \ot \partial(e) \lact d\big)(x) \\
&= \xi(1) \zeta(\partial(e)^{-1} \lact x) \partial(e) \lact d  \\
&\stackrel{(*)}{=} \xi(1) \zeta(x) \partial(e) \lact d ,
\end{align*}
where in $(*)$ we use the assumption that $\im(\partial)$ acts trivially on $X$.
On the other hand we have
\begin{align*}
\big((\xi_{(1)} \ot e) (\zeta \ot d) S(\xi_{(2)} \ot e)\big)(x)
&= \big((\xi_{(1)} \ot e) (\zeta \ot d) (1 \ot e^{-1}) (S(\xi_{(2)}) \ot 1)\big)(x) \\
&= (\xi_{(1)} \zeta S(\xi_{(2)}))(x) ede^{-1} \\
&= (\xi_{(1)} \zeta S(\xi_{(2)}))(x) \partial(e) \lact d \\
&= \xi(x x^{-1}) \zeta(x) \partial(e)\lact d \\
&= \xi(1) \zeta(x) \partial(e)\lact d .
\end{align*}
\end{proof}

\subsection{{Commutativity of edge orientation reversals and base-point shifts with vertex, edge and plaquette operators: some calculations}} \label{sec:proofs-base-point-shifts-orientation-reversals}
{In this subsection, we sketch the rest of the proof of Proposition \ref{prop: commut-vertex-ops-shifts_etc}.}

{Until the end of this paper, fix a compact oriented surface $\Sigma$,  with a cell decomposition $L$, as in Definition \ref{def:cell-decomposition}, and a crossed module of Hopf algebras $(A \xrightarrow{\partial} H,\lact)$.}

\subsubsection{Base-point shifts commute with vertex operators}\label{subs:basepoint-shifs and vertex ops}
\begin{lemma}
Let $P \in L^2$ be any plaquette and let $v \in L^0$ be any vertex with adjacent plaquette $Q$.
Denote by $(V')_{v,Q}^h$ for any $h \in H$ the vertex operator for the vertex $v$ with respect to the cell decomposition $L'$, obtained from the given one $L$ by shifting the base-point of $P$ once in counterclockwise direction around $P$.
Then:
\[ (V')_{v,Q}^h \circ T_P^+ = T_P^+ \circ V_{v,Q}^h, \quad\text{ for all $h \in H$}. \]
The analogous result holds for negative base-point shifts.
\end{lemma}
\begin{proof}{(Sketch.)}
First note that this clearly holds when $v$ is not in the boundary of $P$, because then $(V')_{v,Q}^h = V_{v,Q}^h$, and $V_{v,Q}^h$ and $T_P^+$ have disjoint support. {So suppose that $v$ is in the boundary of $P$.}
\begin{figure}[ht!]
 \labellist
\pinlabel $\small{v}$ at 402 171
\pinlabel $v=v_{P_1}$ at 890 301
\pinlabel $v=v_{P_2}$ at 890 250
\pinlabel $\small{P_2=P}$ at 125 195
\pinlabel $\small{P_1}$ at 635 195
\pinlabel $\small{Q}$ at 360 390
\pinlabel $\small{e_4}$ at 472 320
\pinlabel $\small{e_1}$ at 293 320
\pinlabel $\small{e_2}$ at 237 106
\pinlabel $\small{e_3}$ at 498 106
\endlabellist
\centering
\includegraphics[scale=0.2]{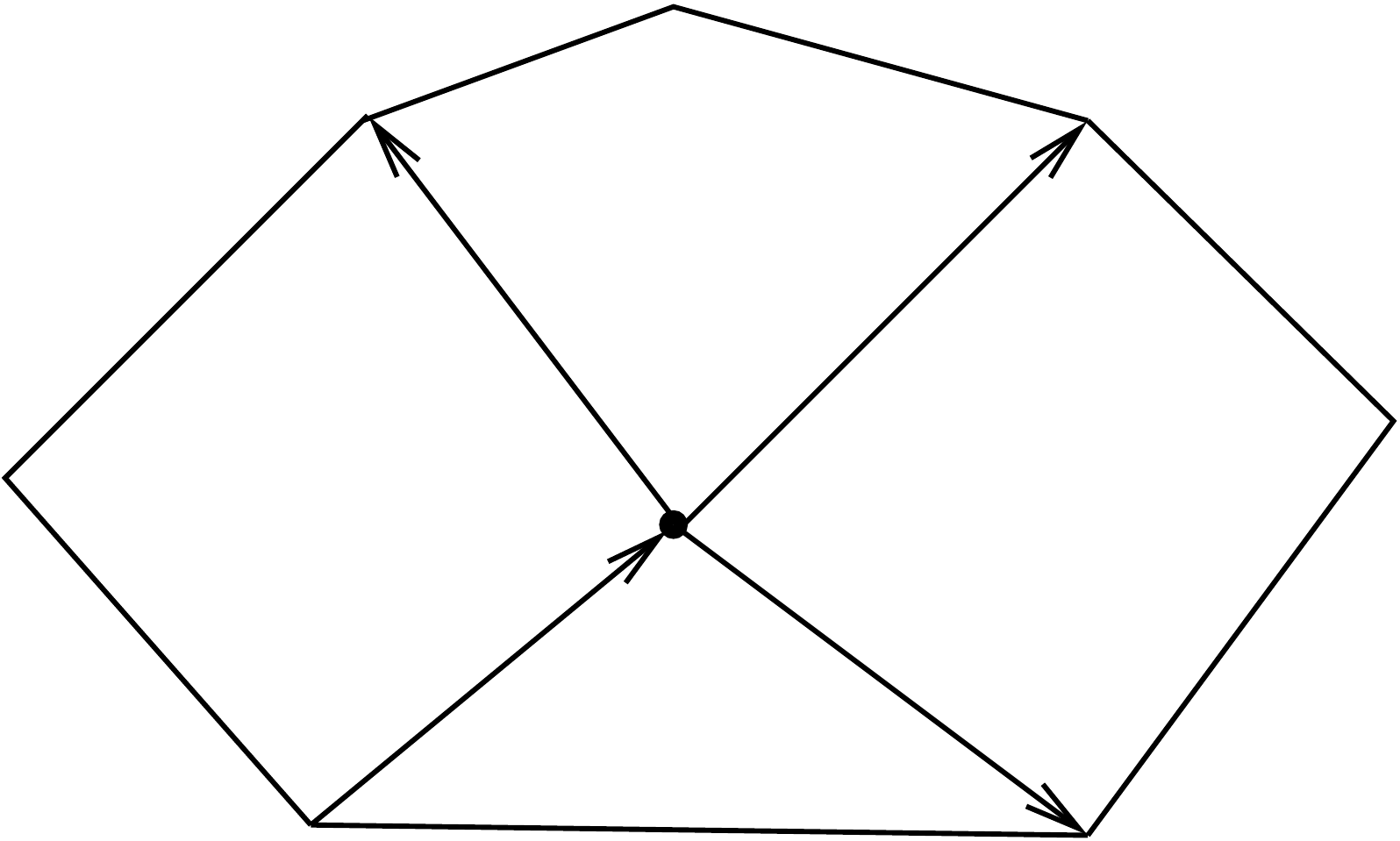}
\caption{Example case that vertex operators commute with base-point shifts: $(V')_{v,Q}^h \circ T_P^+ = T_P^+ \circ V_{v,Q}^h $ and $(V')_{v,Q}^h \circ T_P^-= T_P^- \circ V_{v,Q}^h$. The plaquettes $P_1$ and $P=P_2$ are based at $v$.}
\label{fig:base_shift}
\end{figure}

{Let us prove only that $ (V')_{v,Q}^h \circ T_P^\pm = T_P^\pm \circ V_{v,Q}^h$ for the configuration in Figure \ref{fig:base_shift}. This is representative of the type of calculations required, and addresses all possible orientations of the edge along which we shift the base-point.}

To prove that $ (V')_{v,Q}^h \circ T_P^+ = T_P^+ \circ V_{v,Q}^h$, we have, applying $T_P^+$ after applying the vertex operator:
\begin{align*}
v_1 & \ot v_2 \ot v_3  \ot v_4 \ot X_1 \ot X_2  \\& \stackrel{ V_{v,Q}^h}{\lmapsto} 
h_{(1)} v_1  \ot  v_2 S\big (h_{(2)}\big) \ot h_{(3)} v_3 \ot h_{(4)} v_4 \ot h_{(5)}\lact  X_1\ot h_{(6)}\lact X_2   \\& \stackrel{ T^+_P}{\lmapsto} 
(h_{(1)} v_1)_{(1)}  \ot  v_2 S\big (h_{(2)}\big) \ot h_{(3)} v_3 \ot h_{(4)} v_4 \ot h_{(5)}\lact  X_1\ot S\big( (h_{(1)} v_1)_{(2)}\big)  h_{(6)}\lact X_2 \\
&\stackrel{\text{coassoc.}}{=}h_{(1)} (v_1)_{(1)}  \ot  v_2 S\big (h_{(3)}\big) \ot h_{(4)} v_3 \ot h_{(5)} v_4 \ot h_{(6)}\lact  X_1\ot S\big( h_{(2)} (v_1)_{(2)}\big)  h_{(7)}\lact X_2 \\
&=h_{(1)} (v_1)_{(1)}  \ot  v_2 S\big (h_{(3)}\big) \ot h_{(4)} v_3 \ot h_{(5)} v_4 \ot h_{(6)}\lact  X_1\ot S\big( (v_1)_{(2)}\big) S\big(  h_{(2)}\big)  h_{(7)}\lact X_2 \\
&\stackrel{\text{Lemma \ref{lem:yetter-drinfeld-for-many-factors}}}{=} h_{(1)} (v_1)_{(1)}  \ot  v_2 S\big (h_{(4)}\big) \ot h_{(5)} v_3 \ot h_{(6)} v_4 \ot h_{(7)}\lact  X_1\ot S\big( (v_1)_{(2)}\big) S\big(  h_{(2)}\big)  h_{(3)}\lact X_2 \\
&= h_{(1)} (v_1)_{(1)}  \ot  v_2 S\big (h_{(2)}\big) \ot h_{(3)} v_3 \ot h_{(4)} v_4 \ot h_{(5)}\lact  X_1\ot S\big( (v_1)_{(2)}\big) \lact X_2 .
\end{align*}
On the other hand, applying the base-point shift before applying the  vertex operator:
\begin{align*}
v_1 \ot v_2 \ot v_3 & \ot v_4 \ot X_1 \ot X_2  \stackrel{T^+_P}{\lmapsto} 
(v_1)_{(1)} \ot v_2 \ot v_3  \ot v_4 \ot X_1 \ot  S\big((v_2)_{(2)}\big)\lact X_2 
\\& \stackrel{ (V')_{v,Q}^h}{\lmapsto} 
h_{(1)}(v_1)_{(1)} \ot v_2 S\big (h_{(2)}\big) \ot h_{(3)} v_3  \ot h_{(4)} v_4 \ot h_{(5)}X_1 \ot  S\big((v_1)_{(2)}\big)\lact X_2.
\end{align*}
Note that the same argument would work if we had chosen a different plaquette $Q$ to give us the convention for the vertex operator: this only affects which edge we consider to be our initial edge, $e_1$, and we would still be able to apply Lemma \ref{lem:yetter-drinfeld-for-many-factors}.

For $ (V')_{v,Q}^h \circ T_P^-= T_P^- \circ V_{v,Q}^h$, applying   $T_P^-$ after applying the vertex operator, we get
\begin{align*}
v_1 & \ot v_2 \ot v_3  \ot v_4 \ot X_1 \ot X_2  \\& \stackrel{V_{v,Q}^h}{\lmapsto} 
h_{(1)} v_1  \ot  v_2 S\big ( h_{(2)}\big) \ot h_{(3)} v_3 \ot h_{(4)} v_4 \ot h_{(5)}\lact  X_1\ot h_{(6)}\lact X_2   \\
&\stackrel{T_P^-}{\lmapsto} h_{(1)} v_1  \ot  \big ( v_2 S\big (h_{(2)}\big)\big)_{(1)} \ot h_{(3)} v_3 \ot h_{(4)} v_4 \ot h_{(5)}\lact  X_1\ot  \big ( v_2 S\big ( h_{(2)}\big)\big)_{(2)} (h)_{(6)}\lact X_2\\ &= 
h_{(1)} v_1  \ot  ( v_2)_{(1)} S\big (h_{(3)}\big)\ot h_{(4)} v_3 \ot h_{(5)} v_4 \ot h_{(6)}\lact  X_1\ot  ( v_2)_{(2)} S\big (h_{(2)}\big)  h_{(6)}\lact X_2 \\
& \stackrel{\text{Lemma \ref{lem:yetter-drinfeld-for-many-factors}}}{=}
h_{(1)} v_1  \ot  ( v_2)_{(1)} S\big (h_{(4)}\big)\ot h_{(5)} v_3 \ot h_{(6)} v_4 \ot h_{(7)}\lact  X_1\ot  ( v_2)_{(2)} S\big (h_{(2)}\big)  h_{(3)}\lact X_2 \\
& \mapsto 
h_{(1)} v_1  \ot  ( v_2)_{(1)} S\big (h_{(2)}\big)\ot h_{(3)} v_3 \ot h_{(4)} v_4 \ot h_{(5)}\lact  X_1\ot  ( v_2)_{(2)} \lact X_2.
\end{align*}
This exactly what we get if we apply first the negative base-point shift, and then the vertex operator. Again, the argument would still work if we have chosen a different plaquette $Q$ to give us the convention for the vertex operator. This choice only affects which edge we consider to be our initial edge, $e_1$.
\end{proof}

\subsubsection{{Commutation relations between edge operators and base-point shift operators}}\label{sec: commu_edge_shift}

We show now that base-point shifts commute with edge operators if the edge is not the one along which we shift the base-point.

\begin{lemma}\label{lem:shiftsvsedges}
Let $e \in L^1$ be an edge and let $P \in L^2$ be a plaquette adjacent to the edge $e$. 
We further require that the base-point of $P$ is not equal to the starting vertex of $e$ if $e$ is oriented counterclockwise around $P$, and not equal to the target vertex of $e$ if $e$ is oriented clockwise around $P$.
Then \[ T_P^+ \circ E_e^a = {E'}_e^a \circ T_P^+ \quad\text{ for all } a \in A , \] 
where ${E'}_e^a$ 
denotes the edge operator of the edge $e$ for the cell decomposition $L'$ obtained from the given one $L$ by shifting the base-point of $P$ once in counterclockwise direction. The analogous result holds for negative base-point shifts.
\end{lemma}
\begin{proof}
We assume for the proof that $P$ is the plaquette on the left of the edge $e$ (with respect to the orientations of $e$ and of the underlying surface $\Sigma$), so that $e$ is oriented in counterclockwise direction around $P$. {So the configuration is analogous to that in Figure \ref{fig:edge}.}
The calculations for the case that $e$ is oriented clockwise with respect to $P$ can be done analogously.

Let us consider the edge operator $E_e^a$, $a \in A$, for the edge $e$ and the base-point shift operator $T_P^+$ for the plaquette $P$.
The condition that the base-point of $P$ and the starting vertex of $e$ differ is just to say that we shift the base-point along an edge different from the edge $e$.

Denote by $(e_1, \dots, e_\ell)$ the (non-empty) set of edges in the boundary of $P$ connecting the base-point with the starting vertex of $e$ in counterclockwise order.
By applying edge orientation reversals where necessary, since these commute with the plaquette operator and base-point shifts arising here (by Proposition \ref{prop: commut-vertex-ops-shifts_etc}), we may assume that the edges $(e_1, \dots, e_\ell)$ are oriented counterclockwise.
Similarly, let $Q\in L^2$ be the plaquette on the right of  $e$, and let $(d_1, \dots, d_r)$ be as in Definition \ref{def:edge-operator}.
Using edge orientation reversals, we may assume that any edge in $(d_1, \dots, d_r)$ is oriented clockwise around $Q$.

Using Definition \ref{def:edge-operator} for the edge operator we calculate:
\begin{align*}
&(T_P^+ \circ E_e^a)(v_{e_1} \ot v_{e_2} \ot\cdots\ot v_{e_\ell} \ot v_e \ot v_{d_1} \ot\cdots\ot v_{d_r} \ot X_P \ot X_Q)
& \\
&= T_P^+ \Big( (v_{e_1})_{(1)} \ot (v_{e_2})_{(1)} \ot \cdots\ot (v_{e_\ell})_{(1)}  \ot \partial a_{(3)} v_e \ot (v_{d_1})_{(2)} \ot\cdots\ot (v_{d_r})_{(2)} \\
&\phantom{xxxxxxxxxxx} \ot \big( (v_{e_1})_{(2)} (v_{e_2})_{(2)}  \cdots (v_{e_\ell})_{(2)} \lact a_{(1)} \big) X_P \ot X_Q \big( (v_{d_1})_{(1)} \cdots (v_{d_r})_{(1)} \lact S( a_{(2)} )\big) \Big) 
& \\
&= (v_{e_1})_{(1,1)} \ot (v_{e_2})_{(1)} \ot \cdots\ot (v_{e_\ell})_{(1)}  \ot \partial a_{(3)} v_e \ot (v_{d_1})_{(2)} \ot\cdots\ot (v_{d_r})_{(2)} \\
&\phantom{xxxxxxxxxxx} \ot S\big ( (v_{e_1})_{(1,2)}\big )\lact \Big (\big( (v_{e_1})_{(2)} (v_{e_2})_{(2)} \cdots (v_{e_\ell})_{(2)} \lact a_{(1)} \big) X_P \Big)\\
&\phantom{xxxxxxxxxxx} \ot X_Q \big( (v_{d_1})_{(1)} \cdots (v_{d_r})_{(1)} \lact S( a_{(2)} )\big) 
& \\
&\stackrel{\text{$A$ mod. alg.}}{=} (v_{e_1})_{(1,1)} \ot (v_{e_2})_{(1)} \ot \cdots\ot (v_{e_\ell})_{(1)}  \ot \partial a_{(3)} v_e \ot (v_{d_1})_{(2)} \ot\cdots\ot (v_{d_r})_{(2)} \\
&\phantom{xxxxxxxxxxx} \ot S\big ( (v_{e_1})_{(1,2)}\big )_{(1)}\lact \Big (\big( (v_{e_1})_{(2)} (v_{e_2})_{(2)}  \cdots (v_{e_\ell})_{(2)} \lact a_{(1)} \big)\Big) \,\,\, S\big ( (v_{e_1})_{(1,2)}\big )_{(2)}\lact  X_P \\
&\phantom{xxxxxxxxxxx} \ot X_Q \big( (v_{d_1})_{(1)} \cdots (v_{d_r})_{(1)} \lact S (a_{(2)}) \big)  
& \\
&{=} (v_{e_1})_{(1,1)} \ot (v_{e_2})_{(1)} \ot \cdots\ot (v_{e_\ell})_{(1)}  \ot \partial a_{(3)} v_e \ot (v_{d_1})_{(2)} \ot\cdots\ot (v_{d_r})_{(2)} \\
&\phantom{xxxxxxxxxxx} \ot S\big ( (v_{e_1})_{(1,2,2)}\big )\lact \Big (\big( (v_{e_1})_{(2)} (v_{e_2})_{(2)} \cdots (v_{e_\ell})_{(2)} \lact a_{(1)} \big)\Big) \,\,\, S\big ( (v_{e_1})_{(1,2,1)}\big )\lact  X_P \\
&\phantom{xxxxxxxxxxx} \ot X_Q \big( (v_{d_1})_{(1)} \cdots (v_{d_r})_{(1)} \lact S (a_{(2)}) \big) \\
&{=} (v_{e_1})_{(1)} \ot (v_{e_2})_{(1)} \ot
 \cdots\ot (v_{e_\ell})_{(1)}  \ot \partial a_{(3)} v_e \ot (v_{d_1})_{(2)} \ot\cdots\ot (v_{d_r})_{(2)} \\
&\phantom{xxxxxxxxxxx} \ot S\big ( (v_{e_1})_{(3)}\big )\lact \Big (\big( (v_{e_1})_{(4)} (v_{e_2})_{(2)} \cdots (v_{e_\ell})_{(2)} \lact a_{(1)} \big)\Big) \,\,\, S\big ( (v_{e_1})_{(2)}\big )\lact  X_P \\
&\phantom{xxxxxxxxxxx} \ot X_Q \big( (v_{d_1})_{(1)} \cdots (v_{d_r})_{(1)} \lact S (a_{(2)}) \big) \\
&{=} (v_{e_1})_{(1)} \ot (v_{e_2})_{(1)}\ot  \cdots\ot (v_{e_\ell})_{(1)}  \ot \partial a_{(3)} v_e \ot (v_{d_1})_{(2)} \ot\cdots\ot (v_{d_r})_{(2)} \\
&\phantom{xxxxxxxxxxx} \ot \big( (v_{e_2})_{(2)} \cdots (v_{e_\ell})_{(2)} \lact a_{(1)} \big) \,\,\, S\big ( (v_{e_1})_{(2)}\big )\lact  X_P \\
&\phantom{xxxxxxxxxxx} \ot X_Q \big( (v_{d_1})_{(1)} \cdots (v_{d_r})_{(1)} \lact S (a_{(2)}) \big) \\
&=( {E'}_e^a \circ T_P^+ ) \big(v_{e_1} \ot v_{e_2} \ot \cdots\ot v_{e_\ell} \ot v_e \ot v_{d_1} \ot\cdots\ot v_{d_r} \ot X_P \ot X_Q\big).
\end{align*}
The calculation proving the commutativity for the case where $P$ lies to the right of $e$ is analogous.
\end{proof}

\subsubsection{Base-point shifts commute with {plaquette} operators {(for cocommutative elements)}}\label{sec:shift-plaquettes}

\begin{lemma} \label{lem:base-point-shifts-and-plaquette-operators}
Let $P$ and $Q$ be any two (not necessarily distinct) plaquettes.
Denote by $(F')^\varphi_P$ for any $\varphi\in H^*$ the plaquette operator for the plaquette $P$ with respect to the cell decomposition $L'$ obtained from $L$ by shifting the base-point of $Q$ once in counterclockwise direction around $Q$.
Then: \[ (F')^\varphi_P \circ T_Q^+ = T_Q^+ \circ F_P^\varphi, \quad\text{ for all cocommutative $\varphi\in H^*$} .\]
The analogous results holds for negative base-points shifts.
\end{lemma}
\noindent {In particular, putting $\varphi=\lambda, $ the Haar integral of $H^*$, the  equation above holds for the respective plaquette projectors at $P$.}
\begin{proof}
If $P$ and $Q$ are distinct and not adjacent, then the two maps have disjoint support in $\mathcal{H}_L$ and hence commute with each other.

\begin{figure}[ht!]
 \labellist
\pinlabel $\small{v_P}$ at -22 80
\pinlabel $P$ at 231 100
\pinlabel $e_1$ at 88 57
\pinlabel $e_2$ at 224 25
\pinlabel $e$ at 395 75
\pinlabel $Q$ at 510 60
\pinlabel $e_4$ at 406 169
\pinlabel $e_5$ at 112 158
\pinlabel $\small{v_Q}$ at 488 141
\endlabellist
\centering
\includegraphics[scale=0.3]{plaquette-shift}
\caption{{Two particular cases of commutativity between plaquette operators and plaquette base-point shifts. We will prove that $T^+_Q\circ F_P^\varphi= (F')_P^\varphi \circ T^+_Q$ and that $T^+_P \circ F_P^\varphi=(F')_P^\varphi\circ T^+_P$.}}
\label{fig:Plaquette-shits}
\end{figure}

Let $P$ and $Q$ be distinct but adjacent to each other and let $e \in L^1$ be the edge shared by the boundaries of both plaquettes and assume that the base-point shift $T_Q^+$ shifts the base-point of $Q$ along $e$. {(This is as in Figure \ref{fig:Plaquette-shits}.)}
Then the base-point shift $T_Q^+$ and any plaquette operator $F_P^\varphi$, $\varphi \in H^*$, for the adjacent plaquette $P$, commute with each other.
Indeed, the two operators affect the tensor factor $H$ in $\mathcal{H}_L$ associated with the edge $e$ only by applying co-multiplication and then acting with the additional resulting tensor factor of $H$ on a copy of $A$ associated with the two different plaquettes $P$ and $Q$.
Due to Lemma \ref{lem:yetter-drinfeld-for-many-factors}, these two operators thus commute with each other.

Now assume that $P=Q$ and denote by $(e_1, \dots, e_n)$ the edges in the boundary of the plaquette $P$ in counterclockwise order starting at the base-point of $P$.
Then for the transformed cell decomposition $L'$ the edges in the boundary of the plaquette $P$ in counterclockwise order starting at the base-point of $P$ are $(e_2, \dots, e_n, e_1)$.
By applying edge orientation reversals where necessary, we may assume that the edges $(e_1, \dots, e_n)$ are all oriented in counterclockwise direction around $P$.

Let $v_{e_1} \ot \cdots \ot v_{e_n} \in H^{\ot n}$ and $X \in A$.
We calculate, on the one hand:
\begin{align*}
&(T^+_P \circ F_P^\varphi)(v_{e_1} \ot \cdots \ot v_{e_n} \ot X) 
& \\
&\phantom{xxx}= T^+_P\big( (v_{e_1})_{(1)} \ot\cdots\ot (v_{e_n})_{(1)} \ot X_{(2)} \big)\ \varphi\left( (v_{e_1})_{(2)} \cdots (v_{e_n})_{(2)}\, ( S\circ \partial) (X_{(1)}) \right) 
& \\
&\phantom{xxx}= (v_{e_1})_{(1,1)} \ot (v_{e_2})_{(1)} \ot\cdots\ot (v_{e_n})_{(1)} \ot S\big( (v_{e_1})_{(1,2)}\big) \lact X_{(2)}\,\, \varphi\left( (v_{e_1})_{(2)} \cdots (v_{e_n})_{(2)}\, (S\circ \partial)( X_{(1)}) \right)\\
&\phantom{xxx}= (v_{e_1})_{(1)} \ot (v_{e_2})_{(1)} \ot\cdots\ot (v_{e_n})_{(1)} \ot S\big( (v_{e_1})_{(2)}\big) \lact X_{(2)} \,\,\varphi\left( (v_{e_1})_{(3)} \cdots (v_{e_n})_{(2)}\, (S\circ \partial)( X_{(1)}) \right).
\end{align*}
On the other hand, where we apply that $S$ is an anti-coalgebra morphism without comment: 
\begin{align*}
&((F')_P^\varphi \circ T^+_P)(v_{e_1} \ot \cdots \ot v_{e_n} \ot X) 
& \\
&\phantom{xxx}=
F_{P'}^\varphi \Big( (v_{e_1})_{(1)} \ot \cdots \ot v_{e_n} \ot S\big((v_{e_1})_{(2)}\big) \lact X \Big) 
& \\
&\phantom{xxx}= (v_{e_1})_{(1,1)} \ot (v_{e_2})_{(1)} \ot\cdots\ot (v_{e_n})_{(1)} \ot  \big( S \big ( (v_{e_1})_{(2)}\big) \lact X\big)_{(2)} \\
&\phantom{xxxxxxxxx} \varphi\left( (v_{e_2})_{(2)} \cdots (v_{e_n})_{(2)} (v_{e_1})_{(1,2)} \,\,   (S \circ \partial)\Big ( \big( S \big ( (v_{e_1})_{(2)}\big) \lact X\big)_{(1)} \Big) \right) 
& \\
&\phantom{xxx}\stackrel{\text{$A$ mod. coalg.}}{=}(v_{e_1})_{(1,1)} \ot (v_{e_2})_{(1)} \ot\cdots\ot (v_{e_n})_{(1)} \ot  \big( S \big ( (v_{e_1})_{(2,1)}\big) \lact X_{(2)} \\
&\phantom{xxxxxxxxx} \varphi\Big( (v_{e_2})_{(2)} \cdots (v_{e_n})_{(2)} (v_{e_1})_{(1,2)} \,\,   (S \circ \partial)\big (  S \big ( (v_{e_1})_{(2,2)}\big) \lact X_{(1)} \big) \Big)\\
&\phantom{xxx}\stackrel{\text{Pf 1.}}{=}(v_{e_1})_{(1,1)} \ot (v_{e_2})_{(1)} \ot\cdots\ot (v_{e_n})_{(1)} \ot  \big( S \big ( (v_{e_1})_{(2,1)}\big) \lact X_{(2)} \\
&\phantom{xxxxxxxxx} \varphi\Big( (v_{e_2})_{(2)} \cdots (v_{e_n})_{(2)}\, (v_{e_1})_{(1,2)} \, S((v_{e_1})_{(2,2,1})) (S \circ \partial)\big ( X_{(1)}\big)  \,\, (v_{e_1})_{(2,2,2)} \Big)
\\
&\phantom{xxx}
{=}(v_{e_1})_{(1)} \ot (v_{e_2})_{(1)} \ot\cdots\ot (v_{e_n})_{(1)} \ot  \big( S \big ( (v_{e_1})_{(3)}\big) \lact X_{(2)} \\
&\phantom{xxxxxxxxx} \varphi\Big( (v_{e_2})_{(2)} \cdots (v_{e_n})_{(2)} (v_{e_1})_{(2)} \,  S((v_{e_1})_{(4)}) \, (S \circ \partial)\big ( X_{(1)}\big)  \, (v_{e_1})_{(5)}) \Big)
\\
&\phantom{xxx}\stackrel{\text{Y-D.}}
{=}(v_{e_1})_{(1)} \ot (v_{e_2})_{(1)} \ot\cdots\ot (v_{e_n})_{(1)} \ot  \big( S \big ( (v_{e_1})_{(4)}\big) \lact X_{(2)} \\
&\phantom{xxxxxxxxx} \varphi\Big( (v_{e_2})_{(2)} \cdots (v_{e_n})_{(2)} (v_{e_1})_{(2)} \,\,  S((v_{e_1})_{(3)})(S \circ \partial)\big ( X_{(1)}\big)  \,\, (v_{e_1})_{(5)} \Big)
\\
&\phantom{xxx}
{=}(v_{e_1})_{(1)} \ot (v_{e_2})_{(1)} \ot\cdots\ot (v_{e_n})_{(1)} \ot  \big( S \big ( (v_{e_1})_{(2)}\big) \lact X_{(2)} \\
&\phantom{xxxxxxxxx} \varphi\Big( (v_{e_2})_{(2)} \cdots (v_{e_n})_{(2)} \, (S \circ \partial)\big ( X_{(1)}\big)  \, (v_{e_1})_{(3}) \Big)\\
&\phantom{xxx}\stackrel{\varphi \textrm{ cocomut.}}
{=}(v_{e_1})_{(1)} \ot (v_{e_2})_{(1)} \ot\cdots\ot (v_{e_n})_{(1)} \ot  \big( S \big ( (v_{e_1})_{(2)}\big) \lact X_{(2)} \\
&\phantom{xxxxxxxxx} \varphi\Big(  (v_{e_1})_{(3})  (v_{e_2})_{(2)} \cdots (v_{e_n})_{(2)} (S \circ \partial)\big ( X_{(1)}\big) \Big).
\end{align*}
This shows that indeed $(F')_P^\varphi \circ T^+_P = T^+_P \circ F_P^\varphi$ for any cocommutative $\varphi \in H^*$.
\end{proof}

\begin{remark}\label{rem:movingbp}
{Now  we know that base-point shifts commute with plaquette operators for cocommutative elements of $H^*$, in particular for the Haar integral $\lambda$.  Therefore base-point shifts   preserve the subspace:} \[\mathcal{H}^\text{ff}_L := \im\left(\prod_{P\in L^2} F_P\right) \subseteq \mathcal{H}_L.\] 
{We call $\mathcal{H}^\text{ff}_L$ the \emph{fake-flat subspace}.}
{In the \sHKM, this subspace coincides with the subspace of the total Hilbert space spanned by the  fake-flat configurations \cite{companion}}.
\end{remark}

{Let $P$ be a plaquette. As mentioned before, moving the base-point $v_P$ of $P$ around the boundary of the plaquette $P$, to return to $v_P$ again, by using base-point shifts,  does not in general yield the identity in $\Hil_L$. However this operation restricts to the identity over the subspace $\mathcal{H}^\text{ff}_L$.}
\begin{lemma} \label{lem:base-point-all-around}
Let $P \in L^2$ be a plaquette.
Let $T_P^\circlearrowleft : \mathcal{H}_L \lto \mathcal{H}_L$ be the composition of successive base-point shifts, moving the base-point counterclockwise once around the entire plaquette.
Then:
\[ {T_P^\circlearrowleft}\mid_{\mathcal{H}^\mathrm{ff}_L} = \id{\mathcal{H}^\mathrm{ff}_L} . \]
\end{lemma}
\begin{proof}
Let $(e_1,\dots,e_n)$ be the edges in the boundary of $P$ in counterclockwise order starting and ending at its base-point. Assume that they are each oriented in counterclockwise direction, which we may by applying edge orientation reversals where necessary.
For $v_{e_1} \ot\cdots\ot v_{e_n}\ot X_P \in H^{\ot n} \ot A$ we then have
\begin{align*}
&{T_P^\circlearrowleft}(v_{e_1} \ot\cdots\ot v_{e_n}\ot X_P) \\
&= (v_{e_1})_{(1)} \ot\cdots\ot (v_{e_n})_{(1)} \ot \big(S((v_{e_n})_{(2)}) \cdots S((v_{e_1})_{(2)}) \lact X_P \big) .
\end{align*}
Now assume that $(\cdots \ot v_{e_1} \ot\cdots\ot v_{e_n}\ot X_P \ot \cdots) \in (\cdots \ot H^{\ot n} \ot A \ot \cdots) = \mathcal{H}_L$ is a fake-flat state, {in particular (cf. the proof of Proposition \ref{prop:gr_state}):}
\[ (v_{e_1})_{(1)} \ot\cdots\ot (v_{e_n})_{(1)} \ot (X_P)_{(2)} \ot \big( (v_{e_1})_{(2)} \cdots (v_{e_n})_{(2)} S\partial (X_P)_{(1)} \big) = v_{e_1} \ot\cdots\ot v_{e_n}\ot X_P \ot 1_H . \]
Then we see that for such a state we have:
\begin{align*}
&{T_P^\circlearrowleft}(v_{e_1} \ot\cdots\ot v_{e_n}\ot X_P) \\
&= (v_{e_1})_{(1)} \ot\cdots\ot (v_{e_n})_{(1)} \ot \big(S((v_{e_n})_{(2)}) \cdots S((v_{e_1})_{(2)}) \lact X_P \big) \\
&= (v_{e_1})_{(1)} \ot\cdots\ot (v_{e_n})_{(1)} \ot \big(S((v_{e_1})_{(2)} \cdots (v_{e_n})_{(2)}) \lact X_P \big) \\
&= (v_{e_1})_{(1)} \ot\cdots\ot (v_{e_n})_{(1)} \ot \big(S\big((v_{e_1})_{(2)} \cdots (v_{e_n})_{(2)} \partial S({X_P}_{(1)}) \partial ({X_P}_{(2)})_{(1)} \big) \lact ({X_P}_{(2)})_{(2)} \big) \\
&\stackrel{\text{fake-flatness}}{=} v_{e_1} \ot\cdots\ot v_{e_n} \ot \big(S\big( \partial (X_P)_{(1)} \big) \lact (X_P)_{(2)} \big) \\
&\stackrel{\substack{\text{Pf. 2}\\ \text{$\partial$ Hopf alg. map}}}{=} v_{e_1} \ot\cdots\ot v_{e_n} \ot S(X_P)_{(2)} (X_P)_{(3)} (X_P)_{(1)} \\
&\stackrel{\text{antip.prop.}}{=} v_{e_1} \ot\cdots\ot v_{e_n} \ot X_P ,
\end{align*}
which proves the claim.
\end{proof}

\subsection{Commutation relations between vertex, edge and plaquette operators} \label{sec:proofs-of-commutation-relations}

\subsubsection{Commutation relations among edge operators in the adequate case}\label{sec:commutation_edge}

\begin{lemma} \label{lem:edge-operators-commute}
Let $e_1$ and $e_2 \in L^1$ be two distinct edges.
Then
\begin{enumerate}\setlength\itemsep{0em}
    \item\label{it:1} If $e_1$ and $e_2$ are not edges of a common plaquette then:
\begin{equation}
E_{e_1}^{a_1} \circ E_{e_2}^{a_2} = E_{e_2}^{a_2} \circ E_{e_1}^{a_1}, \qquad\text{ for any {two} $a_1, a_2 \in A$.}
\end{equation}
(For instance, this is the case on the left image in Figure \ref{fig:edge-comm}.)
\item If $e_1$ and $e_2$ are edges in the boundary of a common plaquette, then, as long as the pair $(e_1,e_2)$ is adequate, then:
\begin{equation}
E_{e_1}^{a_1} \circ E_{e_2}^{a_2} = E_{e_2}^{a_2} \circ E_{e_1}^{a_1}, \qquad\text{ for any {two cocommutative elements} $a_1, a_2 \in A$.}
\end{equation}
(For instance, this is the case on the three middle images in Figure \ref{fig:edge-comm}.)
\end{enumerate}
\end{lemma}
\noindent NB: if the pair $(e_1,e_2)$ is not adequate, vertex projects do not commmute in general.
\begin{proof}
The edge operators $E_{e_1}^{a_1}$ and $E_{e_2}^{a_2}$ have intersecting supports only if they lie in the boundary of a common plaquette, or if not, but there exists an edge $e$ that lies simultaneously in the boundary of a plaquette adjacent to $e_1$ and in the boundary of a plaquette adjacent to $e_2$, see the left of Figure \ref{fig:edge-comm}.
\begin{figure}[ht!]
 \labellist
\pinlabel $\small{v_P}$ at 51 322
\pinlabel $\small{v_Q}$ at 518 20
\pinlabel $e_1$ at 139 333
\pinlabel $e_2$ at 437 115
\pinlabel $e$ at 260 303
\pinlabel $P$ at 130 190
\pinlabel $Q$ at 345 190
\pinlabel $\small{v_P}$ at 808 70
\pinlabel $e_1$ at 675 238
\pinlabel $e_2$ at 941 238
\pinlabel $P$ at 806 200
\pinlabel $\small{\textrm{adequate (i)}}$ at 825 -20
\pinlabel $\small{v_P}$ at 1320 70
\pinlabel $e_1$ at 1187 238
\pinlabel $e_2$ at 1456 238
\pinlabel $P$ at 1318 200
\pinlabel $\small{\textrm{adequate  (ii)}}$ at 1337 -20
\pinlabel $\small{v_P}$ at 1832 70
\pinlabel $e_1$ at 1699 238
\pinlabel $e_2$ at 1968 238
\pinlabel $P$ at 1830 200
\pinlabel $\small{\textrm{adequate (iii)}}$ at 1850 -20
\pinlabel $\small{v_P}$ at 2344 70
\pinlabel $e_1$ at 2211 238
\pinlabel $e_2$ at 2480 238
\pinlabel $P$ at 2342 200
\pinlabel $\small{\textrm{non-adequate}}$ at 2361 -20
\endlabellist
\centering
\includegraphics[scale=0.19]{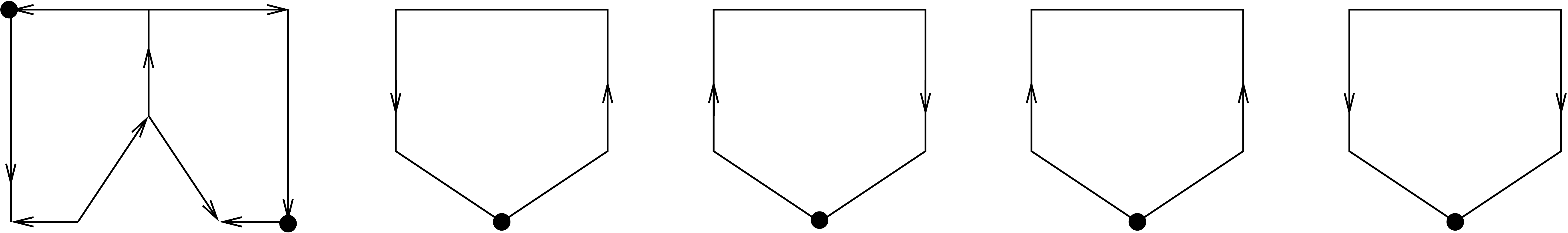}
\caption{{Different cases of commutativity, or non-commutativity, for non-adequate pairs, between edge operators, along $e_1$ and $e_2$}.}
\label{fig:edge-comm}
\end{figure}

Let us start with the latter case. In this easier case, $E_{e_1}^{a_1} \circ E_{e_2}^{a_2} = E_{e_2}^{a_2} \circ E_{e_1}^{a_1}$, regardless of $a_1$ and $a_2$ being cocommutative or not. The supports of the two operators only intersect at $e$. We now only need to observe that if $v\in H$ is the value associated to edge $e$, then, using the Yetter-Drinfeld condition \eqref{eq:yetter-drinfeld-condition-trivial}, given any $m,n \in A$:
$$v_{(1,1)} \otimes v_{(2)} \lact m \otimes  v_{(1,2)}\lact n= v_{(2,1)} \otimes v_{(2,2)} \lact m \otimes  v_{(1)}\lact n, $$
for the case when the edge $e$, in Figure \ref{fig:edge-comm} is oriented upwards, and, now using calculation \eqref{eq:compatibility<>},
$$v_{(1,1)} \otimes S\big( v_{(2)}\big) \lact m \otimes  S\big (v_{(1,2)}\big)\lact n= v_{(2,1)} \otimes S\big (v_{(2,2)} \big)\lact m \otimes S\big( v_{(1)}\big)\lact n,$$
for when $e$ is oriented downwards. We have now dealt with Item \ref{it:1} of the lemma.

Let us now deal with the case when $e_1$ and $e_2$, lie in the boundary of a common plaquette $P \in L^2$, but yet the configuration $(e_1,e_2)$ is adequate. 
We show in what follows that 
$E_{e_1}^{a_1} \circ E_{e_2}^{a_2} = E_{e_2}^{a_2} \circ E_{e_1}^{a_1}$ for any cocommutative $a_1, a_2 \in A$.
\begin{figure}[ht!]
 \labellist
\pinlabel $\small{v_{P_2}}$ at 400 445
\pinlabel $\small{v_{P_1}}$ at 1138 221
\pinlabel $\small{v_{P}}$ at 417 45
\pinlabel $d_1$ at 637 284
\pinlabel $d_2$ at 528 445
\pinlabel $e_2$ at 355 284
\pinlabel $e_1$ at 700 105
\pinlabel $P$ at 490 234
\pinlabel $P_2$ at 70 328
\pinlabel $P_1$ at 854 104
\endlabellist
\centering
\includegraphics[scale=0.19]{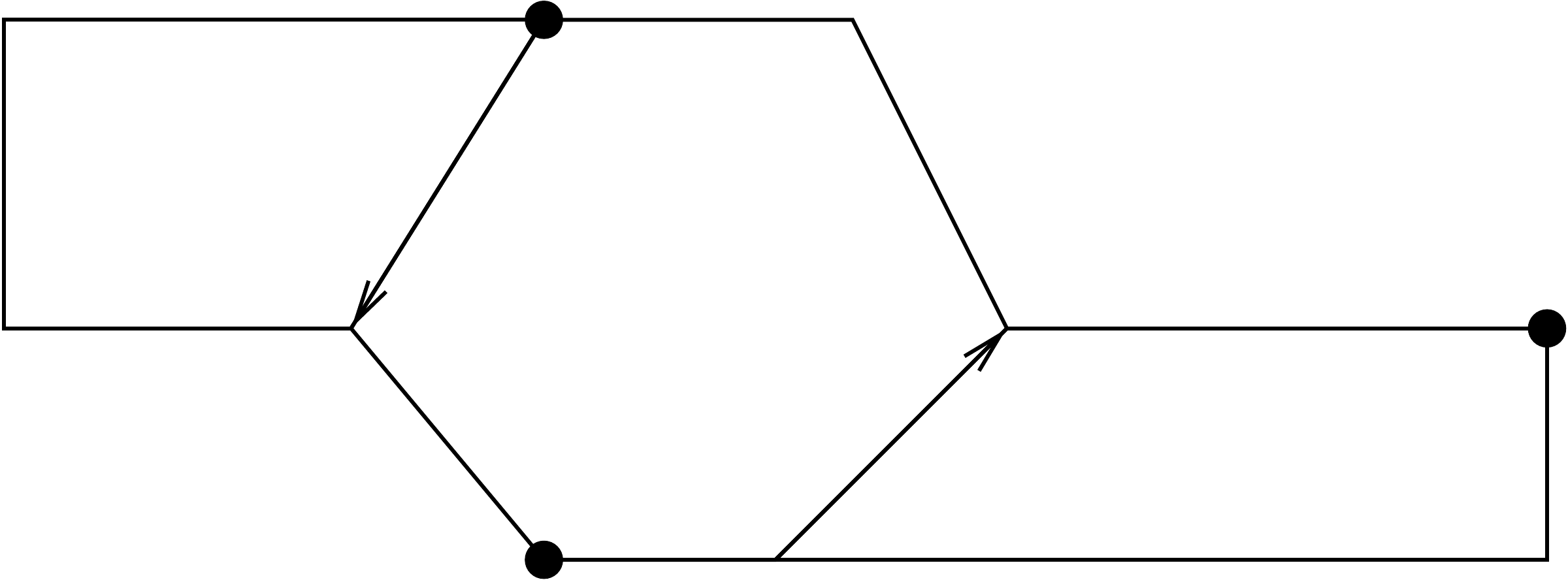}
\caption{{Configuration to prove $E_{e_1}^{a_1} \circ E_{e_2}^{a_2} = E_{e_2}^{a_2} \circ E_{e_1}^{a_1}$. We  move $v_{P_1}$, $v_P$ to the initial point of $e_1$.}}
\label{fig:edge-comm2}
\end{figure}

Denote by $P_1$ the other plaquette adjacent to $e_1$ and denote by $P_2$ the other plaquette adjacent to $e_2$. Given the requirement, in Definition~\ref{def:cell-decomposition}, that two different plaquettes have at most one edge in common in their boundary,  the plaquettes $P,P_1$ and $P_2$ are all different.

By applying base-point shifts that do not affect $e_1$ and $e_2$ {(Lemma \ref{lem:shiftsvsedges})} and exchanging $e_1$ and $e_2$ if necessary, we may assume that the base-point of $P$ is equal to the starting vertex of $e_1$. This is a point where the condition of adequacy comes in.
Furthermore, we may assume that the base-point of $P_1$ is  equal to the starting vertex of $e_1$, and that the base-point of $P_2$ is equal to the starting vertex of $e_2$.

For the explicit calculation of the commutation relations of edge operators, for adequate configurations, it is convenient to distinguish the following cases, already marked in Figure \ref{fig:edge-comm}.
\begin{enumerate}[label=(\roman*)] \setlength\itemsep{0em}
\item \label{item:e1_c-clockwise_e2_c-clockwise}
Both edges $e_1$ and $e_2$ are oriented counterclockwise with respect to $P$. {This is as  in Figure \ref{fig:edge-comm2}}.
\item \label{item:e1_anti_e2_anti}
Both edges $e_1$ and $e_2$ are oriented clockwise with respect to $P$.
\item \label{item:e1_e2_opposite}
$e_1$ and $e_2$ are in opposite orientation to each other with respect to $P$.
\end{enumerate}

\noindent\ref{item:e1_c-clockwise_e2_c-clockwise}:
Let $(d_1, \dots, d_k)$ be the edges between $e_1$ and $e_2$ (possibly none) in counterclockwise order around $P$.
By applying edge orientation reversals where necessary, we may assume that the edges $(d_1,\dots,d_k)$ are all oriented in counterclockwise direction around $P$.
Let $v_{e_1} \ot v_{d_1} \ot\cdots\ot v_{d_k} \ot v_{e_2} \in H_{e_1} \ot H_{d_1} \ot\cdots\ot H_{d_k} \ot H_{e_2} = H^{\ot (k+2)}$ and let {$X_{P} \ot X_{P_1}  \ot X_{P_2} \in A_P \ot A_{P_1}\ot A_{P_2}$.}
For the edge operators $E_{e_1}^{a}$ and $E_{e_2}^{a'}$, if $a \in A$ is cocommutative, we then have:
\begin{align*}
&(E_{e_2}^{a'}\circ  E_{e_1}^a) (v_{e_1} \ot v_{d_1} \ot\cdots\ot v_{d_k} \ot v_{e_2} \ot X_P \ot X_{P_1} {\ot  X_{P_2})} 
&\\
&= E_{e_2}^{a'} \big( \partial(a_{(3)}) v_{e_1} \ot v_{d_1} \ot\cdots\ot v_{d_k} \ot v_{e_2} \ot a_{(1)} X_P \ot X_{P_1} S(a_{(2)}) {\ot  X_{P_2}}\big) 
&\\
&= \big( \partial(a_{(3)}) v_{e_1} \big)_{(1)} \ot (v_{d_1})_{(1)} \ot\cdots\ot (v_{d_k})_{(1)} \ot \partial(a'_{(3)}) v_{e_2} \\
&\phantom{xxxxxxx} \ot \big( \big( \partial(a_{(3)}) v_{e_1} \big)_{(2)} (v_{d_1})_{(2)} \cdots (v_{d_k})_{(2)} \lact a'_{(1)}\big) a_{(1)} X_P  \ot X_{P_1} S(a_{(2)})  {\ot  X_{P_2} S (a'_{(2)}} )
&\\ 
&=  \partial(a_{(3)}) (v_{e_1} )_{(1)} \ot (v_{d_1})_{(1)} \ot\cdots\ot (v_{d_k})_{(1)} \ot \partial(a'_{(3)}) v_{e_2} \\
&\phantom{xxxxxxx} \ot  \big (\big (\partial(a_{(4)} ) (v_{e_1})_{(2)} (v_{d_1})_{(2)} \cdots (v_{d_k})_{(2)} \big)\lact a'_{(1)}\big)  a_{(1)} X_P \ot X_{P_1} S(a_{(2)})  {\ot  X_{P_2} S (a'_{(2)}} )
&\\
&\stackrel{\text{Pf. 2}}{=} 
 \partial(a_{(3)}) (v_{e_1} )_{(1)} \ot (v_{d_1})_{(1)} \ot\cdots\ot (v_{d_k})_{(1)} \ot \partial(a'_{(3)}) v_{e_2} \\
&\phantom{xxxxxxx} \ot   a_{(4)} \, \Big (\big ( (v_{e_1})_{(2)} (v_{d_1})_{(2)} \cdots (v_{d_k})_{(2)} \big)\lact a'_{(1)} \Big) \,\,S(a_{(5)}) a_{(1)} X_P\ot X_{P_1} S(a_{(2)})  {\ot  X_{P_2} S (a'_{(2)}} )
&\\
&\stackrel{\text{$a$ cocomm.}}{=}
\partial(a_{(2)}) (v_{e_1} )_{(1)} \ot (v_{d_1})_{(1)} \ot\cdots\ot (v_{d_k})_{(1)} \ot \partial(a'_{(3)}) v_{e_2} \\
&\phantom{xxxxxxx} \ot   a_{(3)} \,\,  \Big (\big ( (v_{e_1})_{(2)} (v_{d_1})_{(2)} \cdots (v_{d_k})_{(2)} \big)\lact a'_{(1)}\Big)\,\, S(a_{(4)}) a_{(5)} X_P \ot X_{P_1} S(a_{(1)})  {\ot  X_{P_2} S (a'_{(2)}} )
&\\
&\stackrel{\text{antip. prop.}}{=} 
\partial(a_{(2)}) (v_{e_1} )_{(1)} \ot (v_{d_1})_{(1)} \ot\cdots\ot (v_{d_k})_{(1)} \ot \partial(a'_{(3)}) v_{e_2} \\
&\phantom{xxxxxxx} \ot   a_{(3)}\,\,   \Big (\big ( (v_{e_1})_{(2)} (v_{d_1})_{(2)} \cdots (v_{d_k})_{(2)} \big)\lact a'_{(1)}\Big) X_P  \ot X_{P_1} S(a_{(1)})  {\ot  X_{P_2} S (a'_{(2)}} )
\\
&\stackrel{\text{$a$ cocomm.}}{=} \partial(a_{(3)}) (v_{e_1})_{(1)} \ot (v_{d_1})_{(1)} \ot\cdots\ot (v_{d_k})_{(1)} \ot \partial(a'_{(3)}) v_{e_2} \\
&\phantom{xxxxxxx} \ot a_{(1)} \Big( \big( (v_{e_1})_{(3)} (v_{d_1})_{(3)} \cdots (v_{d_k})_{(3)}\big) \lact a'_{(1)}\Big) X_P  \ot X_{P_1} S(a_{(2)}) \ot X_{P_2} S \big( a'_{(2)} \big)
&\\
&= (E_{e_1}^a\circ  E_{e_2}^{a'}) (v_{e_1} \ot v_{e_2} \ot X_P \ot X_{P_1} {\ot  X_{P_2}}) .
\end{align*}
\noindent\ref{item:e1_anti_e2_anti}:
Here the calculation is as for \ref{item:e1_c-clockwise_e2_c-clockwise}
\medskip

\noindent\ref{item:e1_e2_opposite}:
{In this case, $E_{e_1}^a$ and $E_{e_2}^{a'}$ for any $a, a' \in A$ act on the tensor factors $A_P$ by multiplication from opposite sides, and exactly the same in $A_{P_1}$ and $A_{P_2}$.  Hence $E_{e_1}^a$ and $E_{e_2}^{a'}$  commute, regardless of $a$ and $a'$ being cocommutative or not. Note that it is crucial that the configuration is adequate,  so that the action of $E_{e_1}^a$ does not affect the edges that are used to compute  $E_{e_2}^{a'}$, and vice-versa.  }
\end{proof}

\subsubsection{Commutation relations between vertex and edge operators}

\begin{lemma} \label{lem:edge-operators-and-vertex-operators}
Let $v \in L^0$ be a vertex, with adjacent plaquette $P \in L^2$, and let $e \in L^1$ be an edge.
Then for any $h \in H$ and $a \in A$ the following hold:
\begin{enumerate}\setlength\itemsep{0em}
\item\label{IT1} $E_e^{h_{(1)}\lact a} \circ V_{v,P}^{h_{(2)}} = V_{v,P}^h \circ E_e^a$, if $v$ is the starting vertex of $e$.\\
In particular, by  Lemma \ref{lem:haar-integral-central-in-crossed-product}, the vertex projector $V_v = V_{v,P}^\ell$ and edge projector $E_e = E_e^\Lambda$ commute with each other, where $\ell \in H$ and $\Lambda \in A$ are the Haar integrals of the respective Hopf algebras.

\item\label{IT2} $E_e^a \circ V_{v,P}^h = V_{v,P}^h \circ E_e^a$, if $v$ is not the starting vertex of $e$.
\end{enumerate}
\end{lemma}
\begin{proof}(Sketch.)
Let $P_1$ and $P_2$ be the plaquettes on the left and right of the edge $e$ with respect to the orientations of $\Sigma$ and $e$. 
The supports of the corresponding vertex operator $V_{v,P}^h$, where $h \in H$, and edge operator $E_e^a$, where $a \in A$, intersect only if $v$ lies in the boundary of one of the two (or both) plaquettes $P_1$ and $P_2$ adjacent to $e$. {Some possible configurations are in  figure \ref{fig:vertex_edge}.}
\begin{figure}[ht!]
 \labellist
\pinlabel $\small{e}$ at 260 233
\pinlabel $\small{v_{P_1}}$ at 41 385
\pinlabel $\small{v_{P_2}}$ at 426 385
\pinlabel $\small{P_1}$ at 76 286
\pinlabel $\small{P_2}$ at 392 286
\pinlabel $\small{P}$ at 358   32
\pinlabel $\small{v}$ at 238 55
\pinlabel $\small{e_2}$ at 150 97
\pinlabel $\small{e_1}$ at 330 97
\pinlabel $\small{\textrm{case (i)}}$ at 230 -29
\pinlabel $\small{e}$ at 886 100
\pinlabel $\small{v_{P_1}}$ at 681 246
\pinlabel $\small{v_{P_2}}$ at 1058 260
\pinlabel $\small{P_1}$ at 702 120
\pinlabel $\small{P_2}$ at 1020 120
\pinlabel $\small{P}$ at 965   315
\pinlabel $\small{v}$ at 864 305
\pinlabel $\small{e_2}$ at 928 267
\pinlabel $\small{e_1}$ at 800 267
\pinlabel $\small{\textrm{case (ii)}}$ at 850 -29
\pinlabel $\small{e}$ at 1513 233
\pinlabel $\small{v_{P_1}}$ at 1293 385
\pinlabel $\small{v_{P_2}}$ at 1680 385
\pinlabel $\small{P_1}$ at 1328 286
\pinlabel $\small{P_2}$ at 1642 286
\pinlabel $\small{P}$ at 1600   33
\pinlabel $\small{v}$ at 1742 60
\pinlabel $\small{\textrm{case (iii)}}$ at 1490 -29
\endlabellist
\centering
\includegraphics[scale=0.23]{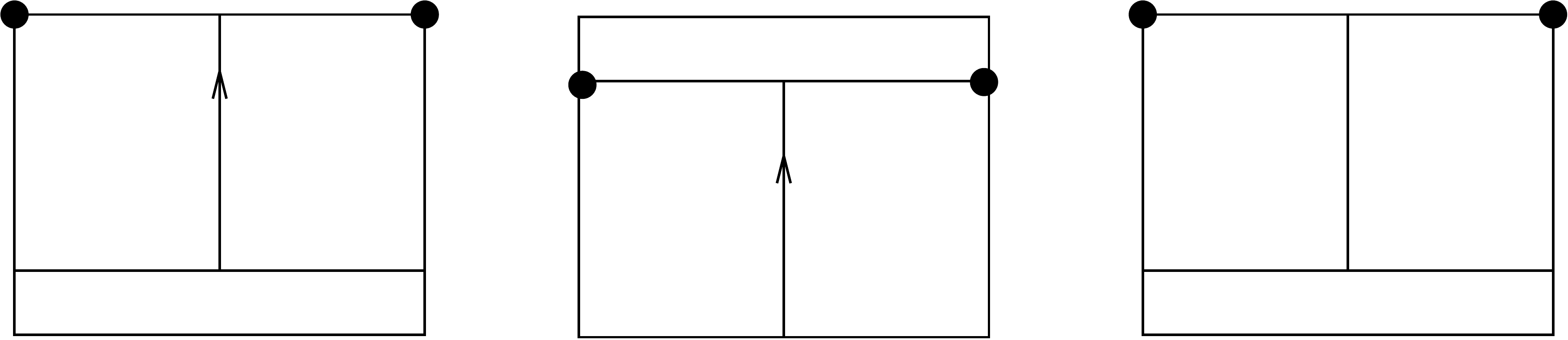}
\caption{{Different configuration for commmutativity between edge and vertex operators. In all cases, (i), (ii) and (iii), we shall move the base-points of $P_1$ and $P_2$ to the initial point of $e$, using a path that does not pass by edge $e$.}}
\label{fig:vertex_edge}
\end{figure}

{We show the calculations for the particular cases of Figure \ref{fig:vertex_edge}. Item \ref{IT1} corresponds to case (i), and Item \ref{IT2} to cases (ii) and (iii).} 

\medskip

\noindent Case (i):   We move $v_{P_1}$ and $v_{P_2}$ to $v$, applying Items  \ref{it:bps-vertex} and \ref{it:bps-edge} of Proposition \ref{prop: commut-vertex-ops-shifts_etc}, and make edges $e_1$ and $e_2$ point away from $v$, applying Items \ref{it:vertex-eor} and \ref{it:edge-eor} of Proposition \ref{prop: commut-vertex-ops-shifts_etc}. Let us show that if $h\in H$ and $a \in A$ \[ E_e^{h_{(1)}\lact a} \circ V_{v,P}^{h_{(2)}} = V_{v,P}^h \circ E_e^a .\] We have:
\begin{align*}
\big ( E_e^{h_{(1)}\lact a} \circ & V_{v,P}^{h_{(2)}}\big)\big ( v_{e_1} \ot v_e \ot v_{e_2} \ot X_{P_1} \ot X_{P_2} \big )
\\&=
 E_e^{h_{(1)}\lact a} \big ( h_{(2)}v_{e_1} \ot  h_{(3)}v_e \ot  h_{(4)}v_{e_2} \ot   h_{(5)}\lact X_{P_1} \ot  h_{(6)}\lact X_{P_2}  \big )
\\&=
 h_{(2)}v_{e_1} \ot \partial( (h_{(1)}\lact a)_{(3)}) h_{(3)}v_e \ot  h_{(4)}v_{e_2}\\
 & \qquad\ot  \big(  (h_{(1)}\lact a)_{(1)} \big)   h_{(5)}\lact X_{P_1} \ot  h_{(6)}\lact X_{P_2} \, S \big(  (h_{(1)}\lact a)_{(2)} \big)
 \\&=
 h_{(4)}v_{e_1} \ot \partial( h_{(3)} \lact a_{(3)})) h_{(5)}v_e \ot  h_{(6)}v_{e_2}\\
 & \qquad\ot  \big(  h_{(1)}\lact a_{(1)} \big)   h_{(7)}\lact X_{P_1} \ot  h_{(8)}\lact X_{P_2} S \big(  h_{(2)}\lact a_{(2)} \big) 
 \\&\stackrel{\text{$S_A$ $H$-lin.}}{=} h_{(4)}v_{e_1} \ot \partial( h_{(3)} \lact a_{(3)})) h_{(5)}v_e \ot  h_{(6)}v_{e_2}\\
 & \qquad\ot  \big(  h_{(1)}\lact a_{(1)} \big)   h_{(7)}\lact X_{P_1} \ot  h_{(8)}\lact X_{P_2} \,\,  h_{(2)}\lact S \big(  a_{(2)} \big).
\end{align*}
On the other hand,
\begin{align*}
\big (   V_{v,P}^{h} \circ &  E_e^{ a}\big)\big ( v_{e_1} \ot v_e \ot v_{e_2} \ot X_{P_1} \ot X_{P_2}  )
\\&=  V_{v,P}^{h} \big (  v_{e_1} \ot \partial(a_{(3)} ) v_e \ot v_{e_2} \ot a_{(1)} X_{P_1} \ot X_{P_2} S(a_{(2)} \big ) 
\\&=  h_{(1)} v_{e_1} \ot h_{(2)} \partial(a_{(3)} ) v_e \ot h_{(3)} v_{e_2} \ot h_{(4)}\lact \big (a_{(1)} X_{P_1}\big) \ot h_{(5)}\lact\big(  X_{P_2} S(a_{(2)}\big) 
\\&=  h_{(1)} v_{e_1} \ot h_{(2)} \partial(a_{(3)} ) v_e \ot h_{(3)} v_{e_2} \\ &\qquad \ot \big ( h_{(4)}\lact a_{(1)} \big ) \big (h_{(5)} \lact X_{P_1}\big) \ot \big (h_{(6)}\lact   X_{P_2} \big) \big (h_{(7)}\lact  S(a_{(2)}\big)
\\&=  h_{(1)} v_{e_1} \ot  \partial(h_{(2)} \lact a_{(3)} ) h_{(3)} v_e \ot h_{(4)} v_{e_2}\\ &\qquad \ot \big ( h_{(5)}\lact a_{(1)} \big ) \big (h_{(6)} \lact X_{P_1}\big) \ot \big (h_{(7)}\lact   X_{P_2} \big) \big (h_{(8)}\lact  S(a_{(2)}\big).\end{align*}
We used the 1st Peiffer condition in the last step.
The two expressions coincide, due to Lemma \ref{lem:yetter-drinfeld-for-many-factors}. 

\noindent Case (ii) Here $v$ is the end vertex of $e$, and again we move $v_{P_1}$ and $v_{P_2}$ to the initial point of $e$, and make edges $e_1$ and $e_2$ point away from $v$. Let us show that if $h\in H$ and $a \in A$ \[ E_e^{ a} \circ V_{v,P}^{h} = V_{v,P}^h \circ E_e^a .\] We have:
\begin{align*}
\big ( E_e^{ a} & \circ V_{v,P}^{h}\big)\big ( v_{e_1} \ot v_e \ot v_{e_2} \ot X_{P_1} \ot X_{P_2}\big )
=
 E_e^{ a} \big ( h_{(1)}v_{e_1} \ot  v_e S(h_{(2)}) \ot  h_{(3)}v_{e_2} \ot X_{P_1} \ot X_{P_2}  \big )\\
 &=h_{(1)}v_{e_1} \ot  \partial(a_{(3)}) v_e S(h_{(2)}) \ot  h_{(3)}v_{e_2} \ot a_{(1)} X_{P_1} \ot X_{P_2} S(a_{(2)} ).
 \end{align*}
On the other hand,
\begin{align*}
\big (   V_{v,P}^{h} \circ &  E_e^{ a}\big)\big ( v_{e_1} \ot v_e \ot v_{e_2}  \ot X_{P_1} \ot X_{P_2} \big )
=  V_{v,P}^{h} \big (  v_{e_1} \ot \partial(a_{(3)} ) v_e  \ot v_{e_2} \ot a_{(1)} X_{P_1} \ot X_{P_2} S(a_{(2}) \big )
\\&=  h_{(1)} v_{e_1} \ot \partial(a_{(3)} ) v_e S(h_{(2)}) \ot h_{(3)}  v_{e_2}  \ot a_{(1)} X_{P_1} \ot X_{P_2} S(a_{(2)} \big ).
\end{align*}
These two trivially coincide.

\noindent Case (iii): Here $v$ is neither the initial nor the end-point of $e$. We can move the base-points of $P_1$ and of $P_2$ to be the starting vertex of $e$ (see above). Since $v$ is not adjacent to $e$, the corresponding vertex operator and edge operator have disjoint support and therefore clearly commute with each other.
\end{proof}

\subsubsection{Commutation relations between edge and plaquette operators}

\begin{lemma} \label{lem:edge-and-plaquette-operators-commute}
Let $P\in L^2$ be a plaquette and let $e \in L^1$ be an edge.
If $e$ is in the boundary of $P$, then
\[ F_P^{\varphi} \circ E_e^{a} = E_e^a \circ F_P^\varphi,
\quad\text{ for all cocommutative } a \in A, {\textrm{and cocommutative }} \varphi \in H^*. \]
Moreover, if additionally the starting vertex of $e$ is the base-point of $P$, then
\[ F_P^{\varphi\left(S \partial a_{(3)} \cdot ? \cdot \partial a_{(1)} \right)} \circ E_e^{a_{(2)}} = E_e^a \circ F_P^\varphi,
\quad\text{ for all } a \in A, \varphi \in H^* , \]
 if $e$ is oriented counterclockwise around $P$,  and if $e$ is oriented clockwise around $P$,
\[ F_P^\varphi \circ E_e^a = E_e^a \circ F_P^\varphi,
\quad\text{ for all } a \in A, \varphi \in H^*  .\]

If $e$ is not in the boundary of $P$, then
\[ F_P^{\varphi} \circ E_e^{a} = E_e^a \circ F_P^\varphi,
\quad\text{ for all } a \in A, \varphi \in H^*. \]
\end{lemma}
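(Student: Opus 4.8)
The plan is to isolate a single clean configuration, carry out the computation there, and then transport the result to every other configuration using the edge orientation reversals and base-point shifts whose compatibility with edge and plaquette operators was already established. I would begin with the case where $e$ is not in the boundary of $P$. Here the supports of $F_P^\varphi$ and $E_e^a$ meet only in the $H$-tensor factors attached to edges that lie simultaneously in $\partial P$ and in the boundary of one of the two plaquettes adjacent to $e$; they never share a plaquette factor $A$. On each such shared edge both operators act merely by a comultiplication of $H$ followed by an action on their own (distinct) plaquette factor, so they commute by the Yetter-Drinfeld condition \eqref{eq:yetter-drinfeld-condition-trivial}, exactly as in the corresponding step of Lemma \ref{lem:edge-operators-commute}. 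This yields the final displayed identity for all $a\in A$ and $\varphi\in H^*$.

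The heart of the argument is the special position in which the starting vertex of $e$ is the base-point $v_P$. By applying edge orientation reversals (which intertwine both families of operators in a controlled way) I may assume that every edge of $\partial P$ other than possibly $e$ is oriented counter-clockwise, so that the holonomy path of $E_e^a$ inside $P$ is empty and $E_e^a$ acts on $X_P$ simply as $X_P \mapsto a_{(1)} X_P$ while acting on $v_e$ as $v_e \mapsto \partial a_{(3)} v_e$. I would then substitute Definitions \ref{def:edge-operator} and \ref{def:plaquette-operator} and compute both $F_P^\varphi \circ E_e^a$ and $E_e^a \circ F_P^\varphi$ directly in Sweedler notation, keeping track of the factor $\partial a_{(3)}$ that $E_e^a$ multiplies onto $v_e$ and of the factor coming from $\partial$ applied to $X_P$, both of which subsequently enter the holonomy argument of $\varphi$.

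When $e$ is oriented counter-clockwise it is the first boundary edge, and the two occurrences of $\partial a$ land at opposite ends of the holonomy inside $\varphi$, producing precisely the conjugated character $\varphi\bigl(S\partial a_{(3)} \cdot\, ?\, \cdot \partial a_{(1)}\bigr)$ together with the middle leg $a_{(2)}$ on the edge; the manipulations required are the first Peiffer relation of Definition \ref{def:triv-braid-hopf-crossed-module}, the module-algebra and module-coalgebra properties of $A$, involutivity of $S$, and \eqref{eq:yetter-drinfeld-condition-trivial}. When $e$ is oriented clockwise it is instead the last boundary edge and is read through $S$, so the two copies of $\partial a$ end up adjacent inside $\varphi$ and cancel via the antipode property, giving plain commutation. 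These two computations establish the two middle statements of the lemma for arbitrary $a$ and $\varphi$.

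Finally, for the general-position statement I would move the base-point of $P$ to the starting vertex of $e$ by a sequence of base-point shifts along edges other than $e$. These shifts commute with $E_e^a$ by Lemma \ref{lem:shiftsvsedges}, and they commute with $F_P^\varphi$ precisely when $\varphi$ is cocommutative by Lemma \ref{lem:base-point-shifts-and-plaquette-operators}; this is the sole reason cocommutativity of $\varphi$ is imposed. Conjugating the special-position relation back, the clockwise case is already plain commutation, while in the counter-clockwise case the twist $\varphi\bigl(S\partial a_{(3)} \cdot\, ?\, \cdot \partial a_{(1)}\bigr)$ collapses to $\varphi$ applied with $a$ undivided once $a$ is cocommutative and $\varphi$ satisfies $\varphi(xy)=\varphi(yx)$, since the legs $\partial a_{(1)}$ and $S\partial a_{(3)}$ may then be cyclically recombined. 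I expect the main obstacle to be the counter-clockwise clean computation: making the conjugating character emerge exactly as stated demands a long but routine bookkeeping of Sweedler indices, with each appearance of $\partial$ routed through a Peiffer relation and each reordering justified by Yetter-Drinfeld, where it is easy to misplace an antipode or a coproduct leg.
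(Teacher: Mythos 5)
Your proposal follows essentially the same route as the paper's proof: reduce to the special position (base-points of $P$ and of the other plaquette adjacent to $e$ both at the starting vertex of $e$, all other boundary edges oriented counter-clockwise) via orientation reversals and base-point shifts, carry out the direct Sweedler computation there for both orientations of $e$, and transport back, with cocommutativity of $\varphi$ entering exactly through Lemma \ref{lem:base-point-shifts-and-plaquette-operators} and cocommutativity of $a$ (plus $\varphi$) used to collapse the twist $\varphi\left(S\partial a_{(3)}\cdot ?\cdot \partial a_{(1)}\right)$. The only cosmetic differences are that the paper justifies the disjoint-support case by opposite-sided comultiplication plus coassociativity rather than the Yetter--Drinfeld condition, and that its normalized counter-clockwise computation needs only that $\Delta$ and $\partial$ are algebra/coalgebra maps, that $S$ is an involutive anti-(co)algebra map, and the antipode axiom --- no Peiffer relation is required there.
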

\begin{proof} {We refer to Figure \ref{fig:edge-plaquettes}.}
 There are two types of situations in which the plaquette operators $F_P^\varphi$ and the edge operators $E_e^a$ can have a non-trivially intersecting support.
Either the edge $e$ is in the boundary of the plaquette $P$, this is case (i).  Or it is not, but some  edge in the path connecting the base-point of a plaquette, say $Q$, which is adjacent to the edge $e$, with the starting vertex of $e$ lie in the boundary of $P$. This is case (ii). We deal with the cases separately. 

\begin{figure}[ht!]
 \labellist
\pinlabel $\small{Q}$ at 1239 197
\pinlabel $\small{P}$ at 867 175
\pinlabel $\small{e}$ at 1243 280
\pinlabel $\small{c}$ at 1125 209
\pinlabel $\small{f}$  at 888 278
\pinlabel $\small{v_{P}}$ at 737 125
\pinlabel $\small{v_{Q}}$ at 1360 11
\pinlabel $\textrm{case (ii)}$ at 1000 -20
\pinlabel $\small{P}$ at 290 175  
\pinlabel $\small{v_P}$ at 52 140  
\pinlabel $\small{e}$ at 482 87 
\pinlabel $\small{e'}$ at 250 280
\pinlabel $\small{v_Q}$ at 447 -10
\pinlabel $\small{Q}$ at 569 62
\pinlabel $\small{e}$ at 482 87 
\pinlabel $\small{v_{Q'}}$ at 56 317
\pinlabel $\small{Q'}$ at 500 324 
 \pinlabel $\textrm{case (i)}$ at 250 -20
\endlabellist
\centering
\includegraphics[scale=0.27]{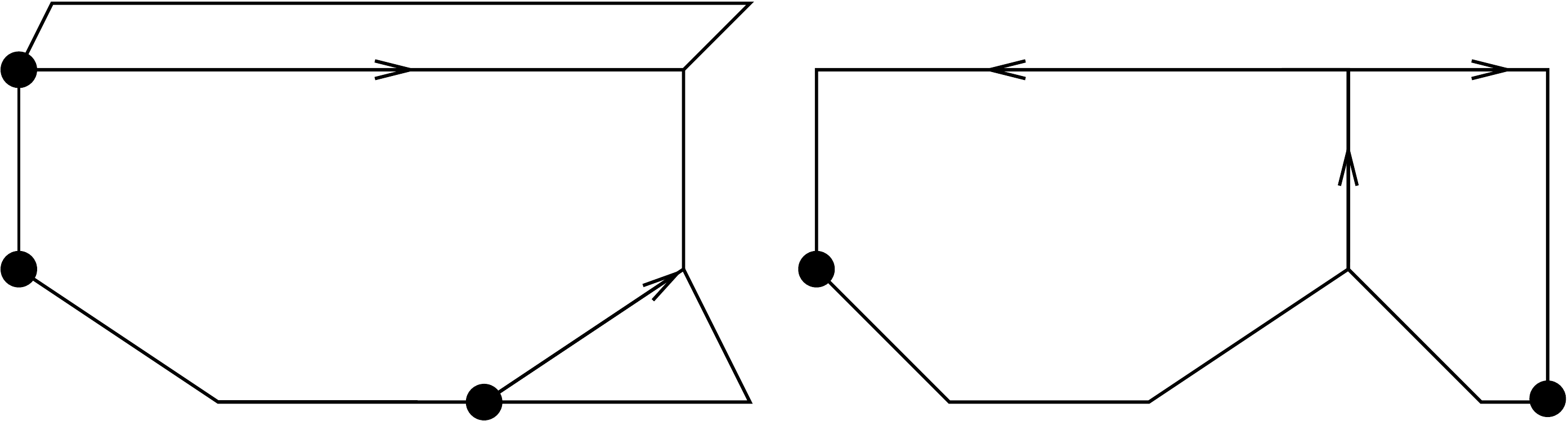}
\caption{{Two particular cases of commutation relations between edge and plaquette operators.}}
\label{fig:edge-plaquettes}
\end{figure}
{Case (ii):}
Let us refer to Figure \ref{fig:edge-plaquettes}. Let us see why  $$F_P^{\varphi} \circ E_e^{a} = E_e^a \circ F_P^\varphi,
\quad\text{ for all } a \in A, \varphi \in H^*,$$
and also, in order to treat other orientation conventions that:
$$F_Q^{\varphi} \circ E_f^{a} = E_f^a \circ F_Q^\varphi,
\quad\text{ for all } a \in A, \varphi \in H^*.$$
This follows from that given $h \in H, n \in A$, where $h$ is the element of $H$ associated to the  edge $c$,
\begin{align*}
(h_{(1)})_{(2)} \ot h_{(2)} \ot  (h_{(1)})_{(1)} \lact n & = (h_{(2)})_{(1)}  \ot (h_{(2)})_{(2)}  \ot  h_{(1)}\lact n,\\
(h_{(1)})_{(2)}\ot S(h_{(2)}) \ot (h_{(1)})_{(1)}\lact n&= (h_{(1)})_{(1)} \ot S((h_{(1)})_{(2)})\ot h_{(2)}\lact n. 
\end{align*}
 
\noindent {Case (i):}
Let us first assume that $e$ is oriented in counterclockwise direction around $P$. {We let $Q$ be the other plaquette adjacent to $e$.} 
For now assume further that the starting vertex of $e$ is the base-point of $P$.
Denote by $(e = e_1, \dots, e_n)$ the edges in the boundary of $P$ in counterclockwise order starting and ending at the base-point of $P$.
By applying edge orientation reversals where necessary, we may assume that also the edges $(e_2,\dots,e_n)$ are oriented in counterclockwise direction around $P$.
Then for $v_{e_1} \ot\cdots\ot v_{e_n} \ot X_P \ot X_Q \in H_{e_1} \ot\cdots\ot H_{e_n} \ot A_P \ot A_Q = H^{\ot n} \ot A \ot A$, the plaquette operator $F_P^\varphi$ for any $\varphi \in H^*$ acts, according to Definition \ref{def:plaquette-operator}, as
\begin{align*}
&F_P^\varphi (v_{e_1} \ot\cdots\ot v_{e_n} \ot X_P \ot X_Q) \\
&\phantom{xxx}= (v_{e_1})_{(1)} \ot\cdots\ot (v_{e_n})_{(1)} \ot (X_P)_{(2)} \ot X_Q \ \varphi\big( (v_{e_1})_{(2)} \cdots (v_{e_n})_{(2)} S\partial (X_P)_{(1)} \big) ,
\end{align*}
and the edge operator $E_e^a$ for any $a \in A$ acts, according to Definition \ref{def:edge-operator}, as
\begin{align*}
&E_e^a (v_{e_1} \ot\cdots\ot v_{e_n} \ot X_P \ot X_Q) = \partial(a_{(3)}) v_{e_1} \ot v_{(e_2)} \ot\cdots\ot v_{e_n} \ot a_{(1)} X_P \ot X_Q S(a_{(2)}) .
\end{align*}
We calculate, also {recalling that $S=S^{-1}$ when we apply the antipode property,}
\begin{align*}
&\big(F_P^{\varphi\left(S \partial a_{(3)} \cdot ? \cdot \partial a_{(1)} \right)}\circ  E_e^{a_{(2)}}\big)(v_{e_1} \ot\cdots\ot v_{e_n} \ot X_P \ot X_Q) 
& \\
&\phantom{xxx}= F_P^{\varphi\left(S \partial a_{(5)} \cdot ? \cdot \partial a_{(1)} \right)} \big( \partial(a_{(4)}) v_{e_1} \ot v_{e_2} \ot\cdots\ot v_{e_n} \ot a_{(2)} X_P \ot X_Q S(a_{(3)}) \big) 
& \\
&\phantom{xxx}= (\partial(a_{(4)}) v_{e_1})_{(1)} \ot (v_{e_2})_{(1)} \ot\cdots\ot (v_{e_n})_{(1)} \ot (a_{(2)} X_P)_{(2)} \ot X_Q S(a_{(3)}) \\
&\phantom{xxxxxxx} \varphi\big( S \partial a_{(5)} (\partial(a_{(4)}) v_{e_1})_{(2)} (v_{e_2})_{(2)} \cdots (v_{e_n})_{(2)} S \partial (a_{(2)} X_P)_{(1)} \partial a_{(1)} \big) 
& \\
&\phantom{xxx}= \partial(a_{(5)}) (v_{e_1})_{(1)} \ot (v_{e_2})_{(1)} \ot\cdots\ot (v_{e_n})_{(1)} \ot a_{(3)} (X_P)_{(2)} \ot X_Q S(a_{(4)}) \\
&\phantom{xxxxxxx} \varphi\big( S \partial a_{(7)} \partial(a_{(6)}) (v_{e_1})_{(2)} (v_{e_2})_{(2)} \cdots (v_{e_n})_{(2)} S \partial (X_P)_{(1)} S \partial a_{(2)} \partial a_{(1)} \big) 
& \\
&\phantom{xxx}\stackrel{\text{antipode prop.}}{=} \partial(a_{(3)}) (v_{e_1})_{(1)} \ot (v_{e_2})_{(1)} \ot\cdots\ot (v_{e_n})_{(1)} \ot a_{(1)} (X_P)_{(2)} \ot X_Q S(a_{(2)}) \\
&\phantom{xxxxxxx} \varphi\big( (v_{e_1})_{(2)} (v_{e_2})_{(2)} \cdots (v_{e_n})_{(2)} S \partial (X_P)_{(1)} \big)
& \\
&\phantom{xxx}= \big(E_e^a\circ F_P^\varphi \big)(v_{e_1} \ot\cdots\ot v_{e_n} \ot X_P \ot X_Q) .
\end{align*}
{For cocommutative elements $a \in A$ and $\varphi \in H^*$, this implies that the operators commute:}
\begin{align*}
E_e^a \circ F_P^\varphi &= F_P^{\varphi\left(S \partial a_{(3)} \cdot ? \cdot \partial a_{(1)} \right)} \circ E_e^{a_{(2)}} \stackrel{\text{$\varphi$ cocom.}}{=} F_P^{\varphi\left(  ? \cdot \partial a_{(1)} S \partial a_{(3)} \right)} \circ E_e^{a_{(2)}} \\
&\stackrel{\text{$a$ cocom.}}{=} F_P^{\varphi\left(  ? \cdot \partial a_{(2)} S \partial a_{(1)} \right)} \circ E_e^{a_{(3)}} 
= F_P^\varphi \circ E_e^a .
\end{align*}

Next assume that $e$ lies in the boundary of $P$, but the starting vertex of $e$ is not equal to the base-point of $P$.
Then, since all edge operators at $e$ commute with base-point shifts along edges other than $e$, and the plaquette operators for cocommutative elements of $H^*$ commute with base-point shifts, see Lemma \ref{lem:base-point-shifts-and-plaquette-operators}, we may reduce the situation to the above situation where the starting vertex of $e$ and the base-point of $P$ were the same, and we obtain:
\[ F_P^{\varphi\left(S \partial a_{(3)} \cdot ? \cdot \partial a_{(1)} \right)} \circ E_e^{a_{(2)}} = E_e^a \circ F_P^\varphi, \quad\text{ for all } a \in A, \text{ and cocommutative } \varphi \in H^* , \]
and, hence, for $a \in A$ and $\varphi \in H^*$ both cocommutative we obtain again: $F_P^\varphi \circ E_e^a = E_e^a \circ F_P^\varphi$.

Finally, assume now that the edge $e$, in the boundary of $P$ is oriented clockwise around $P$. This is the situation for $e'$ in figure \ref{fig:edge-plaquettes}.
We again first consider the case where the starting vertex of $e$ coincides with the base-point of $P$.
Again by applying base-point shifts where necessary, we may assume that the base-point of the other plaquette $Q$,  {in Figure \ref{fig:edge-plaquettes} denoted $Q'$} adjacent to $e$ is also equal to the starting vertex of $e$.
Denoting by $(e_1, \dots, e_n = e)$ the edges in the boundary of $P$ in counterclockwise order, we may assume, by applying edge orientation reversals where necessary, that the edges $(e_1, \dots, e_{n-1})$ are oriented counterclockwise around $P$, whereas $e_n = e$ is by assumption oriented clockwise around $P$.
Now a very similar calculation as the first one in this proof above will show that
\[ F_P^\varphi \circ E_e^a = E_e^a \circ F_P^\varphi, \qquad\text{for all } a \in A, \varphi \in H^* . \]
We calculate:
\begin{align*}
&\big(F_P^\varphi \circ E_e^a \big)(v_{e_1} \ot\cdots\ot v_{e_n} \ot X_P \ot X_Q) 
& \\
&\phantom{xxx}= F_P^\varphi (v_{e_1} \ot\cdots\ot v_{e_{n-1}} \ot \partial a_{(3)} v_{e_n} \ot X_P S a_{(2)} \ot a_{(1)} X_Q) 
& \\
&\phantom{xxx}= (v_{e_1})_{(1)} \ot\cdots\ot (v_{e_{n-1}})_{(1)} \ot (\partial a_{(3)} v_{e_n})_{(2)} \ot (X_P S a_{(2)})_{(2)} \ot a_{(1)} X_Q \\
&\phantom{xxxxxxx}\varphi\big( (v_{e_1})_{(2)} \cdots (v_{e_{n-1}})_{(2)} S(\partial a_{(3)} v_{e_n})_{(1)} S\partial (X_P S a_{(2)})_{(1)} \big) 
& \\
&\phantom{xxx}\stackrel{\substack{\text{$\Delta$ alg. map} \\ \text{$S$ anti-alg.map}}}{=} (v_{e_1})_{(1)} \ot\cdots\ot (v_{e_{n-1}})_{(1)} \ot \partial a_{(5)} (v_{e_n})_{(2)} \ot (X_P)_{(2)} S a_{(2)} \ot a_{(1)} X_Q \\
&\phantom{xxxxxxxxxxxxx}\varphi\big( (v_{e_1})_{(2)} \cdots (v_{e_{n-1}})_{(2)} S(v_{e_n})_{(1)} S(\partial a_{(4)})  \partial a_{(3)} S\partial (X_P)_{(1)} \big) 
& \\
&\phantom{xxx}\stackrel{\text{antipod.prop.}}{=} (v_{e_1})_{(1)} \ot\cdots\ot (v_{e_{n-1}})_{(1)} \ot \partial a_{(3)} (v_{e_n})_{(2)} \ot (X_P)_{(2)} S a_{(2)} \ot a_{(1)} X_Q \\
&\phantom{xxxxxxxxxxxxx}\varphi\big( (v_{e_1})_{(2)} \cdots (v_{e_{n-1}})_{(2)} S(v_{e_n})_{(1)} S\partial (X_P)_{(1)} \big) 
& \\
&= \big(E_e^a \circ F_P^\varphi \big)(v_{e_1} \ot\cdots\ot v_{e_n} \ot X_P \ot X_Q).
\end{align*}
Finally, for the case that the starting vertex of $e$ and the base-point of $P$ do not coincide, as above it follows using base-point shifts that $F_P^\varphi \circ E_e^a = E_e^a \circ F_P^\varphi$ still holds for all cocommutative $\varphi \in H^*$ and all $a \in A$.
\end{proof}

\subsubsection{Commutation relations between vertex and plaquette operators}

\begin{lemma} \label{lem:plaquette-operators-and-vertex-operators}
Let $P \in L^2$ be a plaquette and let $v \in L^0$ be a vertex.
If $v$ is in the boundary of $P$, then
\[ F_P^{\varphi} \circ V_{v,P'}^h = V_{v,P'}^h \circ F_P^\varphi,
\quad\text{ for all cocommutative } h \in H, \textrm{ and cocommutative } \varphi \in H^*, \]
where $P' \in L^2$ may be any plaquette with base-point $v$.

If additionally $v$ is the base-point of $P$, then
\[ F_P^{\varphi(S h_{(3)} \cdot ? \cdot h_{(1)} )} \circ V_{v,P}^{h_{(2)}} = V_{v,P}^h \circ F_P^\varphi,
\quad\text{ for all } h \in H, \varphi \in H^*. \]
 
If $v$ is not in the boundary of $P$, then
\[ F_P^{\varphi} \circ V_{v,P'}^h = V_{v,P'}^h \circ F_P^\varphi,
\quad\text{ for all } h \in H, \varphi \in H^* . \]
\end{lemma}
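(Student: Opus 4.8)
The plan is to mirror, almost line for line, the proof of the edge--plaquette commutation Lemma~\ref{lem:edge-and-plaquette-operators-commute}, with the vertex operator $V_{v,P}^h$ playing the role of the edge operator and the group-like datum $h$ playing the role of $\partial a$. I would split the argument into the same three cases. The easy case is $v$ not in the boundary of $P$: then no edge incident to $v$ can lie in $\partial P$ (such an edge would have $v$ as an endpoint, forcing $v$ into $\partial P$), and $V_{v,P'}^h$ touches the tensor factor $A_P$ only when $v$ is the base-point of $P$; hence the supports of $V_{v,P'}^h$ and $F_P^\varphi$ in $\mathcal{H}_L$ are disjoint and the two operators commute for all $h\in H$, $\varphi\in H^*$.

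The core is the case $v=v_P$, where I would establish the twisted identity $F_P^{\varphi(Sh_{(3)}\cdot ? \cdot h_{(1)})}\circ V_{v,P}^{h_{(2)}} = V_{v,P}^h \circ F_P^\varphi$ by a direct computation on a pure tensor. As in the earlier lemmas I would first apply edge--orientation reversals and base-point shifts (which commute with both operators in the relevant sense) to normalise all boundary edges of $P$ to be oriented counter-clockwise and to place the base-point of the second plaquette adjacent to the distinguished edges at $v$. The only tensor factors in the common support are then the two boundary edges of $P$ incident to $v$ --- the outgoing one (on which $V$ acts by left multiplication by a leg of $h$) and the incoming one (on which $V$ acts by right multiplication by a leg of $S(h)$) --- together with $A_P$. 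The two mechanisms that drive the computation are: (i) because $\Delta$ is an algebra map, moving $V$ past $F$ splits the $h$-legs sitting on these two edges into a comultiplied piece that stays on the edge and a piece that enters the holonomy word fed into $\varphi$, so extra $h$-factors appear at the two ends of that word; and (ii) on $A_P$ the vertex operator acts by $\lact$, and the module-coalgebra property together with Peiffer relation~1, $\partial(h\lact a)=h_{(1)}\partial(a)S h_{(2)}$, together with anti-multiplicativity of $S$, rewrites $S\partial\big((h\lact X_P)_{(1)}\big)$ as $S\partial(X_P)_{(1)}$ flanked on both sides by further legs of $h$. I would then collect all these contributions and, using coassociativity, the Yetter--Drinfeld condition \eqref{eq:yetter-drinfeld-condition-trivial} (to reorder the $h$-legs distributed over the edges and over $A_P$), and the antipode axioms with $S^2=\id{H}$, show that the interior legs telescope and exactly the outer two-sided multiplication by $Sh_{(3)}$ and $h_{(1)}$ on the argument of $\varphi$ survives, with $h_{(2)}$ left as the argument of the vertex operator.

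With the twisted identity in hand, commutation for cocommutative $h$ and $\varphi$ follows verbatim as in Lemma~\ref{lem:edge-and-plaquette-operators-commute}: cocommutativity of $\varphi$ (hence $\varphi(xy)=\varphi(yx)$) converts $\varphi(Sh_{(3)}\cdot ? \cdot h_{(1)})$ into $\varphi(? \cdot h_{(1)} Sh_{(3)})$, and cyclic invariance \eqref{eq:cocommutative-elements-cyclically-invariant} of the iterated coproduct of the cocommutative element $h$ lets me re-index so that $h_{(1)}Sh_{(2)}=\eps(h)1_H$ telescopes, giving $F_P^\varphi\circ V_{v,P}^h = V_{v,P}^h\circ F_P^\varphi$. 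For the remaining case $v\in\partial P$ with $v\neq v_P$, I would conjugate by the sequence of base-point shifts carrying $v_P$ around the boundary to $v$: these commute with vertex operators, and, being applied to a cocommutative $\varphi$, commute with the plaquette operator as well by Lemma~\ref{lem:base-point-shifts-and-plaquette-operators}; since for cocommutative $h$ the operator $V_{v,P'}^h$ is independent of the auxiliary plaquette $P'$, the base-point commutation just proved transports back to give the claim.

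The main obstacle is squarely the bookkeeping in the central computation of the base-point case: one must keep precise track of which Sweedler leg of $h$ lands on the outgoing edge, on the incoming edge, and on $A_P$, and then verify --- deploying Peiffer relation~1, involutivity of $S$, the Yetter--Drinfeld condition and coassociativity in exactly the right order --- that these legs recombine into the clean conjugation $\varphi(Sh_{(3)}\cdot ? \cdot h_{(1)})$ rather than leaving residual uncancelled antipode factors. Everything else is either a disjoint-support triviality or a formal consequence of cocommutativity and the already-established commutativity of base-point shifts with the two kinds of operators.
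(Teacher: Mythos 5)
Your proposal is correct and follows essentially the same route as the paper's proof: the same three-case split (disjoint support when $v\notin\partial P$; a direct Sweedler computation at the base-point using Peiffer relation~1, the Yetter--Drinfeld condition, coassociativity and the antipode axioms to produce the twisted relation $F_P^{\varphi(Sh_{(3)}\cdot ?\cdot h_{(1)})}\circ V_{v,P}^{h_{(2)}}=V_{v,P}^h\circ F_P^\varphi$; then cocommutativity of $h$ and $\varphi$ to telescope the conjugating legs; and finally base-point shifts, which commute with vertex operators and with plaquette operators for cocommutative $\varphi$, to transport the result to a vertex in $\partial P$ that is not the base-point). The only part left implicit is the detailed bookkeeping of the central Sweedler calculation, which you correctly identify as the crux and for which you name exactly the ingredients the paper deploys.
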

\noindent {Note that since $h$ is cocommutative, the plaquette $P'$ has no effect in the conventions for $V^h_{v,P'}$.}
\begin{proof}
If $v$ is not in the boundary of $P$, then the operators have disjoint support and therefore clearly commute with each other as claimed.
Let hence $v$ be in the boundary of $P$.

Let us first assume that $v$ is the base-point of $P$.
Denote by $(e_1, \dots, e_n)$ the edges in the boundary of $P$ in counterclockwise order starting at the base-point of $P$.
By applying edge orientation reversals where necessary we may assume that these edges are all {oriented counterclockwise around $P$.}

Let $(f_1, \dots, f_k)$ be the edges incident to the vertex $v$ in counterclockwise order starting and ending at $P$.
Note that $e_1 = f_k$ and $f_1 = e_n$.
By applying edge orientation reversals where necessary we may assume that the edges $f_2, \dots, f_{k-1}$ are oriented away from $v$.
Let $(Q_1, \dots, Q_\ell)$ be the plaquettes which have $v$ as their base-points, such that $Q_1 = P$.
Let us calculate for $(v_{e_2} \ot\cdots\ot v_{e_n} \ot v_{f_2} \ot\cdots\ot v_{f_k}) \ot (X_P \ot X_{Q_2} \ot\cdots\ot X_{Q_\ell}) \in H^{\ot n+k-2} \ot A^{\ot \ell}$:
\begin{align*}
&\big( F_P^{\varphi\left(S h_{(3)} \cdot ? \cdot h_{(1)} \right)} V_{v,P}^{h_{(2)}} \big) (v_{e_2} \ot\cdots\ot v_{e_n} \ot v_{f_2} \ot\cdots\ot v_{f_k} \ot X_P \ot X_{Q_2} \ot\cdots\ot X_{Q_\ell}) \\
&\phantom{xxx}= F_P^{\varphi\left(S h_{(3)} \cdot ? \cdot h_{(1)} \right)} \big( v_{e_2} \ot\cdots\ot v_{e_{n-1}} \ot v_{e_n} S(h_{(2)}) \ot h_{(3)} v_{f_2} \ot\cdots\ot h_{(k+1)} v_{f_k} \\
&\phantom{xxxxx} \ot h_{(k+2)} \lact X_P \ot h_{(k+3)} \lact X_{Q_2} \ot\cdots\ot h_{(k+\ell+1)} \lact X_{Q_\ell} \big) 
& \\
&\phantom{xxx}= (v_{e_2})_{(1)} \ot\cdots\ot (v_{e_{n-1}})_{(1)} \ot (v_{e_n} S(h_{(2)}))_{(1)} \ot h_{(3)} v_{f_2} \ot\cdots\ot h_{(k)} v_{f_{k-1}} \ot (h_{(k+1)} v_{f_k})_{(1)} \\
&\phantom{xxxxx} \ot (h_{(k+2)} \lact X_P)_{(2)} \ot h_{(k+3)} \lact X_{Q_2} \ot\cdots\ot h_{(k+\ell+1)} \lact X_{Q_\ell} \\
&\phantom{xxxxxxx} \varphi\left(S h_{(k+\ell+2)} (h_{(k+1)} v_{f_k})_{(2)} (v_{e_2})_{(2)} \cdots (v_{e_{n-1}})_{(2)} (v_{e_n} S(h_{(2)}))_{(2)} S \partial (h_{(k+2)} \lact X_P)_{(1)} h_{(1)} \right) \\
&\phantom{xxx}= (v_{e_2})_{(1)} \ot\cdots\ot (v_{e_{n-1}})_{(1)} \ot (v_{e_n})_{(1)} S(h_{(3)}) \ot h_{(4)} v_{f_2} \ot\cdots\ot h_{(k+1)} v_{f_{k-1}} \ot h_{(k+2)} (v_{f_k})_{(1)} \\
&\phantom{xxxxx} \ot h_{(k+5)} \lact (X_P)_{(2)} \ot h_{(k+6)} \lact X_{Q_2} \ot\cdots\ot h_{(k+\ell+4)} \lact X_{Q_\ell} \\
&\phantom{xxxxxxx} \varphi\left(S h_{(k+\ell+5)} h_{(k+3)} (v_{f_k})_{(2)} (v_{e_2})_{(2)} \cdots (v_{e_{n-1}})_{(2)} (v_{e_n})_{(2)} S(h_{(2)}) S \partial (h_{(k+4)} \lact (X_P)_{(1)}) h_{(1)} \right) \\
&\phantom{xxx}\stackrel{\text{Y.-D.}}{=} (v_{e_2})_{(1)} \ot\cdots\ot (v_{e_{n-1}})_{(1)} \ot (v_{e_n})_{(1)} S(h_{(4)}) \ot h_{(5)} v_{f_2} \ot\cdots\ot h_{(k+2)} v_{f_{k-1}} \ot h_{(k+3)} (v_{f_k})_{(1)} \\
&\phantom{xxxxxxx} \ot h_{(k+5)} \lact (X_P)_{(2)} \ot h_{(k+6)} \lact X_{Q_2} \ot\cdots\ot h_{(k+\ell+4)} \lact X_{Q_\ell} \\
&\phantom{xxxxxxxxx} \varphi\left(S h_{(k+\ell+5)} h_{(k+4)} (v_{f_k})_{(2)} (v_{e_2})_{(2)} \cdots (v_{e_{n-1}})_{(2)} (v_{e_n})_{(2)} S(h_{(3)}) S \partial (h_{(2)} \lact (X_P)_{(1)}) h_{(1)} \right) \\
&\phantom{xxx}\stackrel{\text{Pf. 1}}{=} (v_{e_2})_{(1)} \ot\cdots\ot (v_{e_{n-1}})_{(1)} \ot (v_{e_n})_{(1)} S(h_{(5)}) \ot h_{(6)} v_{f_2} \ot\cdots\ot h_{(k+3)} v_{f_{k-1}} \ot h_{(k+4)} (v_{f_k})_{(1)} \\
&\phantom{xxxxxxx} \ot h_{(k+6)} \lact (X_P)_{(2)} \ot h_{(k+7)} \lact X_{Q_2} \ot\cdots\ot h_{(k+\ell+5)} \lact X_{Q_\ell} \\
&\phantom{xxxxxxxxx} \varphi\left(S h_{(k+\ell+6)} h_{(k+5)} (v_{f_k})_{(2)} (v_{e_2})_{(2)} \cdots (v_{e_{n-1}})_{(2)} (v_{e_n})_{(2)} S(h_{(4)}) h_{(3)} S\partial(X_P)_{(1)} S h_{(2)} h_{(1)} \right) \\
&\phantom{xxx}\stackrel{\text{}}{=} (v_{e_2})_{(1)} \ot\cdots\ot (v_{e_{n-1}})_{(1)} \ot (v_{e_n})_{(1)} S(h_{(1)}) \ot h_{(2)} v_{f_2} \ot\cdots\ot h_{(k-1)} v_{f_{k-1}} \ot h_{(k)} (v_{f_k})_{(1)} \\
&\phantom{xxxxx} \ot h_{(k+2)} \lact (X_P)_{(2)} \ot h_{(k+3)} \lact X_{Q_2} \ot\cdots\ot h_{(k+\ell+1)} \lact X_{Q_\ell} \\
&\phantom{xxxxxxx} \varphi\left(S h_{(k+\ell+2)} h_{(k+1)} (v_{f_k})_{(2)} (v_{e_2})_{(2)} \cdots (v_{e_{n-1}})_{(2)} (v_{e_n})_{(2)} S\partial(X_P)_{(1)} \right) \\
&\phantom{xxx}\stackrel{\text{Y.-D.}}{=} (v_{e_2})_{(1)} \ot\cdots\ot (v_{e_{n-1}})_{(1)} \ot (v_{e_n})_{(1)} S(h_{(1)}) \ot h_{(2)} v_{f_2} \ot\cdots\ot h_{(k-1)} v_{f_{k-1}} \ot h_{(k)} (v_{f_k})_{(1)} \\
&\phantom{xxxxxxx} \ot h_{(k+1)} \lact (X_P)_{(2)} \ot h_{(k+2)} \lact X_{Q_2} \ot\cdots\ot h_{(k+\ell)} \lact X_{Q_\ell} \\
&\phantom{xxxxxxxxx} \varphi\left(S h_{(k+\ell+2)} h_{(k+\ell+1)} (v_{f_k})_{(2)} (v_{e_2})_{(2)} \cdots (v_{e_{n-1}})_{(2)} (v_{e_n})_{(2)} S\partial(X_P)_{(1)} \right) \\
&\phantom{xxx}\stackrel{\text{}}{=} (v_{e_2})_{(1)} \ot\cdots\ot (v_{e_{n-1}})_{(1)} \ot (v_{e_n})_{(1)} S(h_{(1)}) \ot h_{(2)} v_{f_2} \ot\cdots\ot h_{(k-1)} v_{f_{k-1}} \ot h_{(k)} (v_{f_k})_{(1)} \\
&\phantom{xxxxxxx} \ot h_{(k+1)} \lact (X_P)_{(2)} \ot h_{(k+2)} \lact X_{Q_2} \ot\cdots\ot h_{(k+\ell)} \lact X_{Q_\ell} \\
&\phantom{xxxxxxxxx} \varphi\left( (v_{f_k})_{(2)} (v_{e_2})_{(2)} \cdots (v_{e_{n-1}})_{(2)} (v_{e_n})_{(2)} S\partial(X_P)_{(1)} \right) \\
&\phantom{xxx}= (V_{v,P}^h F_P^\varphi)(v_{e_2} \ot\cdots\ot v_{e_n} \ot v_{f_2} \ot\cdots\ot v_{f_k} \ot X_P \ot X_{Q_2} \ot\cdots\ot X_{Q_\ell}),
\end{align*}
proving the second relation in the statement of the lemma.

In particular, for cocommutative $h \in H$ and $\varphi \in H^*$, the relation just shown implies that the vertex and plaquette operators commute:
\begin{align*}
V_{v,P}^h \circ F_P^\varphi &= F_P^{\varphi\left(S h_{(3)} \cdot ? \cdot h_{(1)} \right)} \circ V_{v,P}^{h_{(2)}} \stackrel{\text{$\varphi$ cocom.}}= F_P^{\varphi\left( ? \cdot h_{(1)} S h_{(3)} \right)} \circ V_{v,P}^{h_{(2)}} \\
&\stackrel{\text{$h$ cocom.}}= F_P^{\varphi\left( ? \cdot h_{(2)} S h_{(1)} \right)}\circ V_{v,P}^{h_{(3)}}\stackrel{\text{}}= F_P^\varphi \circ V_{v,P}^h .
\end{align*}

Finally, let $v$ be different from the base-point of $P$.
In this case, we may apply base-point shifts where necessary, in order to reduce it to the previous case where $v$ is the base-point of $P$, since plaquette operators for cocommutative elements of $H^*$ and vertex operators commute with base-point shifts.
This proves the first relation in the statement of the lemma.
\end{proof}

\bibliography{leverhulme.bib}

\bibliographystyle{unsrt}

\end{document}